\title{Lexicographic products and the power of non-linear network coding}
\author{ {Anna Blasiak\thanks{Department of Computer Science, Cornell University, Ithaca NY 14853. E-mail: {\tt ablasiak@cs.cornell.edu}. Supported by an NDSEG Graduate Fellowship, an AT\&T
Labs Graduate Fellowship, and an NSF Graduate Fellowship.}} \qquad
{Robert Kleinberg\thanks{Department of Computer Science, Cornell University, Ithaca NY 14853. E-mail:
{\tt rdk@cs.cornell.edu}. Supported in part by NSF grant CCF-0729102, AFOSR grant FA9550-09-1-0100, a Microsoft Research New
Faculty Fellowship, a Google Research Grant, and an Alfred P. Sloan Foundation Fellowship.}} \qquad
{Eyal Lubetzky\thanks{Microsoft
Research, One Microsoft Way, Redmond, WA 98052, USA. Email:
{\tt eyal@microsoft.com}.}}}

\documentclass [11pt]{article}

\usepackage[]{amsmath,amssymb,amsfonts,latexsym,amsthm,enumerate,paralist,url}
\usepackage{amssymb,graphicx}
\usepackage{booktabs}
\usepackage[numeric,initials,nobysame]{amsrefs}
\usepackage[compact]{titlesec}
\usepackage{xspace}

\date{}


\setlength{\textwidth}{6.5in} \setlength{\evensidemargin}{0.0in}
\setlength{\oddsidemargin}{0.0in} \setlength{\textheight}{9.0in}
\setlength{\topmargin}{-0.5in} \setlength{\parskip}{1.3mm}
\setlength{\baselineskip}{1.7\baselineskip}

\topmargin 0pt \headsep 0pt




\numberwithin{equation}{section}


\newtheorem{theorem}{Theorem}[section]
\newtheorem*{theorem*}{Theorem}
\newtheorem{lemma}[theorem]{Lemma}
\newtheorem{claim}[theorem]{Claim}
\newtheorem{proposition}[theorem]{Proposition}

\newtheorem*{observation*}{Observation}

\theoremstyle{definition}{
\newtheorem{example}[theorem]{Example}
\newtheorem{definition}[theorem]{Definition}

\newtheorem*{remark*}{Remark}
}



\newcommand{\deq}{\stackrel{\scriptscriptstyle\triangle}{=}}

\newcommand{\R}{\mathbb R}

\newcommand{\N}{\mathbb N}

\newcommand{\F}{\mathbb F}


\renewcommand{\epsilon}{\varepsilon}

\newcommand{\chibar}{\overline{\chi}}

\titlespacing{\subsection}{0pt}{*1}{*1}
\titlespacing{\section}{0pt}{*1}{*1}


\newenvironment{lparray}%
{\begin{array}{l@{\hspace{4mm}}l@{\hspace{5mm}}l}}%
{\end{array}}
\newlength{\lplb}
\setlength{\lplb}{3mm}

\newcommand{\is}{I}   
\newcommand{\otheris}{J}  
\newcommand{\powerset}[1]{{\mathcal{P}(#1)}}

\newcommand{\cm}{A}
\newcommand{\cset}{\mathcal{Q}}
\newcommand{\clos}{\operatorname{cl}}
\newcommand{\dualcoeff}{{d}}
\newcommand{\trans}{{\mathsf{T}}}
\newcommand{\ones}{{\mathbf{1}}}
\newcommand{\onesi}[1][i]{{\ones}_{#1}}
\newcommand{\lexp}{\bullet}
\newcommand{\sbv}{e}
\newcommand{\xg}{{\xi^G}}
\newcommand{\yg}{{\eta^G}}
\newcommand{\xf}{{\xi^F}}
\newcommand{\yf}{{\eta^F}}
\newcommand{\xgf}{{\xi^{G \lexp F}}}
\newcommand{\ygf}{{\eta^{G \lexp F}}}
\newcommand{\lpv}{\rho}
\newcommand{\hstx}{\hat{X}}
\newcommand{\hsty}{\hat{Y}}
\newcommand{\lpB}{\mathfrak{B}}

\newcommand{\eps}{\varepsilon}

\newcommand{\chr}{{\mathrm{char}}}

\newcommand{\groundset}{{\mathcal U}}
\newcommand{\odds}{{\mathcal O}}
\newcommand{\oddsm}{{\mathcal B}}


\newcommand{\linrate}{\lambda} 
\newcommand{\Encode}{\mathcal{E}}
\newcommand{\bsf}{{b^{\F}}}

\newcommand{\fano}{\mathcal{F}}
\newcommand{\nonfano}{\mathcal{N}}
\newcommand{\dimvec}{\vec{\mathbf{d}}}
\newcommand{\ds}{\dim}
\newcommand{\dds}{\Delta \! \ds}

\newcommand{\rankvec}{\vec{\mathbf{r}}}

\newcommand{\ABnote}[1]{}
\newcommand{\BKnote}[1]{}

\begin{document}
\maketitle
\begin{abstract}
We introduce a technique for establishing and amplifying gaps
between parameters of network coding and index coding problems.  The technique
uses linear programs to
establish separations between combinatorial
and coding-theoretic parameters and applies hypergraph lexicographic
products to amplify these separations.  This entails combining the
dual solutions of the lexicographic multiplicands and proving that
this is a valid dual solution of the product.  Our result is general enough to apply to
a large family of linear programs.  This blend of linear
programs and lexicographic products gives a recipe for constructing
hard instances in which the gap between
combinatorial or coding-theoretic parameters is \emph{polynomially
  large}.  We find polynomial gaps in cases in which the largest previously
known gaps were only small constant factors or entirely unknown.  Most notably, we
show a polynomial separation between linear and non-linear
network coding rates. This involves exploiting a connection between
matroids and index coding to establish a previously unknown separation
between linear and non-linear index coding rates.  We also construct
index coding problems with a polynomial gap between the broadcast rate
and the trivial lower bound for which no gap was previously known.
\end{abstract}

\section{Introduction}
 The problem of \emph{Network Coding}, introduced by Ahlswede
\emph{et al}~\cite{ACLY} in 2000, asks for the maximum rate
at which information can be passed from a set of sources to a set of
targets in a capacitated network.
%
In practice, there are many examples where network coding provides
faster transmission rates compared to traditional routing, e.g.~\cite{KRHKMC} details a recent
 one in wireless networks.  However, despite tremendous initial
 success in using network coding to solve some \emph{broadcast}
 problems (those in which every receiver demands the same message),
 very little is known about how to
 compute or approximate the network coding rate in general.
(See~\cite{YLC} for a survey of the topic.)


In the absence of general algorithms for solving network coding,
attention has naturally turned to restricted models of coding (e.g.\ linear functions
between vector spaces over finite fields) and to approximating network
coding rates using graph-theoretic parameters (e.g.\  minimum
cut~\cite{AHJKL} and the independence number~\cite{AHLSW}).
Several of these variants provide bounds on the network coding
rate, but the worst-case approximation factor of these bounds remains
unknown. For example, it is known that there exists a network
in which non-linear network coding can achieve a rate
which exceeds the best linear
network code by a factor of
$\frac{11}{10}$~\cite{DFZ1}, but it is not known whether
this gap\footnote{The literature on
network coding distinguishes between~\emph{linear} network
codes, in which the messages are required to be elements
of a finite field, and \emph{vector-linear} network codes,
in which the messages are elements of a finite-dimensional
vector space over a finite field.  Linear coding is weaker,
and a gap of size $n^{1-\epsilon}$ is known~\cite{LuSt}.
Vector-linear coding is much more powerful, and no gap larger
than 11/10 was known prior to our work.} can
be improved to $n^{1-\epsilon}$, or even possibly
to $\Theta(n)$.

In this paper we
introduce a general technique for amplifying
many of these gaps by combining linear programming with
hypergraph product operations.  For instance, this enables
us to
construct a family of network coding instances with
$n$ messages, in which the rate of the best non-linear
network code exceeds the rate of the best (vector-)linear network
code by a factor of at least $n^{\epsilon}$.  A crucial ingredient
in our technique is \emph{index coding}~\cites{BK,BBJK},
a class of communication
problems in which a server holds a set of messages that it
wishes to broadcast over a
noiseless channel to a set of receivers. Each receiver is interested
in one of the messages and has side-information comprising some subset
of the other messages.  The
objective is to devise an optimal encoding scheme 
(one minimizing the broadcast length) that allows all the
receivers to retrieve their required information.
Following~\cite{AHLSW}, we use $\beta$ to denote the
limiting value of the information
rate (i.e., ratio of broadcast length to message length) of this
optimal scheme, as the message length tends to infinity.

In our framework, index coding is most useful
for isolating a sub-class of network coding
problems that can be combined using lexicographic products.
However, it is also an important and well-studied problem
in its own right.
Index coding is intimately related to network coding in general.
It is essentially equivalent to
the special case of network coding in which
only one edge has finite capacity.\footnote{The unique
finite-capacity edge
represents the broadcast channel.  Each sender is connected to
the tail of this edge, each receiver is connected to its head,
and each receiver has incoming edges directly from a subset of
the senders, representing the side-information.}
Additionally,~\cite{RSG} shows that
linear network coding can be reduced to linear index coding, thus
implying that index coding captures much of the difficulty of
network coding.

Index coding is also intricately
related to other well-studied areas of mathematics.
Connections between matroids and index coding were
established in~\cite{RSG2}; for example, that paper
shows that realizability
of a matroid over a field $\F$ is
equivalent to linear solvability of a corresponding
index coding problem.
Index coding is also closely connected to graph theory:
a special case of index coding can be described by an
undirected graph $G$, representing a communication
problem where a broadcast channel
communicates messages to a set of vertices, each of
whom has side-information consisting of the neighbors'
messages.
Letting $\alpha(G),\chibar(G)$ denote the independence and clique-cover numbers of $G$, respectively, one has
 \begin{equation}
   \label{eq-trivial-ineqs}
   \alpha(G) \leq \beta(G) \leq \chibar(G)\,.
 \end{equation}
The first inequality above is due to an independent set being
identified with a set of receivers with no mutual information, whereas
the last one due to~\cites{BK,BBJK} is obtained by broadcasting the
bitwise {\sc xor} of the vertices per clique in the optimal clique-cover of
$G$.  As one consequence of the general technique we develop here,
we settle an open question of~\cite{AHLSW} by proving that
$\alpha(G)$ can differ from $\beta(G)$; indeed, we show
that their ratio can be as large as $n^{0.139}$.

\subsection{Contributions}

We present a general technique that amplifies
lower bounds for index coding problems
using lexicographic hypergraph products in
conjunction with linear programs that express
information-theoretic inequalities.  The use
of such linear programs
to prove lower bounds in network coding theory is
not new,
but, perhaps surprisingly, they have not gained
widespread use in the analysis of index coding problems.
We give an information-theoretic linear program, whose solution, $b$, gives the best known lower bound on $\beta$.
However, our main innovation is the insight that this
linear programming technique can be combined with
the combinatorial technique of graph products to
yield lower bounds for sequences of index coding and network coding problems.
Specifically, we provide a lexicographic product
operation on index coding problems along with
an operation that combines
dual solutions of the corresponding two linear programs.  We show that
the combined dual yields a dual solution of the linear program corresponding
to the lexicographic product.  Using this operation,
we demonstrate that index coding lower bounds proven using
linear programming behave supermultiplicatively
under lexicographic products.  This technical tool
enables us to prove some new separation results answering
open questions in the field.

Our technique not only applies to the standard linear
programs used in network information theory (those that
express entropy inequalities such as submodularity)
but to \emph{any} family of linear programs
constructed using what we call a \emph{tight homomorphic
constraint schema}.  In particular, if one can develop a tight homomorphic constraint schema that
applies to a restricted class of codes (e.g.\ linear) then it becomes
possible to prove lower bounds for this class of codes and amplify them
using lexicographic products.  We pursue this approach
in establishing a large multiplicative
gap between linear and non-linear network coding.

\begin{theorem}
Lower bounds for index coding problems can be proven by solving
a linear program whose constraints
are valid for the class of coding functions being considered.
If the linear program is constructed using a
 tight homomorphic constraint schema
 (see Section \ref{sec:lp}),
then its optimum is supermultiplicative under the lexicographic
product of two index coding problems.
\end{theorem}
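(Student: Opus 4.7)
My plan is to separate the two claims: (i) soundness of the LP as a lower bound for $\beta$, and (ii) supermultiplicativity of the LP value under lexicographic products. Soundness is almost by definition: because every constraint in the LP is, by hypothesis, valid for the class of coding functions at hand, the normalized ``entropy profile'' of any admissible code of blocklength $\ell$ furnishes a primal feasible point whose objective value equals the normalized broadcast length. Weak LP duality then yields that the dual optimum $b(\cdot)$ lower-bounds $\beta$, which handles (i) with no real work.

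The content of the theorem lies in supermultiplicativity. Let $y^G$ and $y^F$ denote optimal dual solutions to the LPs for index coding problems $G$ and $F$, with dual objective values $b(G)$ and $b(F)$. I would construct a candidate dual solution $y^{G \lexp F}$ for the LP associated with $G \lexp F$ by combining $y^G$ and $y^F$ ``tensorially'': for each pair of constraints $(c,c')$ arising from the two factor schemas, assign the multiplier $y^G_c \cdot y^F_{c'}$ to the product constraint obtained by composing $c$ and $c'$ via the lexicographic structure on the vertex set of $G \lexp F$. The homomorphic property of the constraint schema is exactly what guarantees that such compositions yield genuine constraints in the product LP, indexed consistently by pairs of factor constraints; an analogous tensor combination on the objective coefficients then shows by direct computation that the objective value of $y^{G \lexp F}$ equals $b(G)\cdot b(F)$.

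The main obstacle will be verifying that $y^{G \lexp F}$ is dual-feasible in the LP for $G \lexp F$. This reduces to checking, for each primal variable $x_T$ with $T$ a subset of vertices of $G \lexp F$, that the sum of product-constraint coefficients at $x_T$, weighted by $y^G \lexp y^F$, does not exceed the corresponding primal objective coefficient. The approach is to project $T$ onto the vertex set of $G$ and, for each fiber, onto the vertex set of $F$, and then show that the relevant sum factors as a product of two sums, each bounded by a dual-feasibility inequality satisfied by $y^G$ in $G$'s LP or by $y^F$ in $F$'s LP. This is precisely where \emph{tightness} of the schema does the real work: it guarantees that the constraints appearing in the outer layer are met with equality at $y^G$ on the supports induced by the fiber structure, so that the factor-level inequalities can be multiplied without accumulating slack. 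Once feasibility is in hand, the theorem follows from the objective computation of the previous paragraph, giving $b(G \lexp F) \ge b(G) \cdot b(F)$.
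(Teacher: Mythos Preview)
Your high-level split into soundness and supermultiplicativity is fine, and soundness is indeed routine. But the supermultiplicativity argument you sketch does not match the actual structure of the problem, and I do not see how to make it go through as written.

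First, the dual LP carries \emph{two} families of multipliers, not one: the variables $x_{ST}$ dual to the network constraints $z_T - z_S \le c_{ST}$, and the variables $y_q$ dual to the schema constraints $Az \ge 0$. Your proposal speaks only of ``multipliers $y^G_c$'' on schema constraints and combines them by the rule $y^G_c \cdot y^F_{c'}$. There is no natural ``product constraint'' in the schema for $G \lexp F$ indexed by pairs $(c,c')$: the constraint schema takes a single index set and returns a constraint set $\cset(\cdot)$; the homomorphic property concerns how constraints push forward along a \emph{single} Boolean lattice homomorphism $h$ via $h_*$, not how to fuse two constraints from different index sets. So the object you propose to build is not defined.

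What the paper actually does is quite different from a tensor product. One introduces, for each pair $S \subset T \subseteq V(G)$, a lattice homomorphism $h^{ST} : \powerset{V(F)} \to \powerset{V(G \lexp F)}$ sending $X \mapsto (T \times X) \cup (S \times V(F))$, and a single homomorphism $g : \powerset{V(G)} \to \powerset{V(G \lexp F)}$ sending $X \mapsto X \times V(F)$. The combined dual is then
\[
\xgf = \sum_{S \subset T} (\xg)_{ST}\, R_{h^{ST}} \xf, \qquad
\ygf = Q_g \yg + \sum_{S \subset T} (\xg)_{ST}\, Q_{h^{ST}} \yf,
\]
i.e.\ one lift of $G$'s schema multipliers via $g$, together with, for every network-constraint pair $(S,T)$ in $G$, a lift of the \emph{entire} $F$ dual solution via $h^{ST}$, weighted by $(\xg)_{ST}$. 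Feasibility then follows by writing the dual constraint as $A^\trans y + \nabla x = \sbv_\emptyset - \sbv_I$ and using the homomorphic identity $A(J)^\trans Q_h = P_h A(I)^\trans$ together with $\nabla(R_h x) = P_h \nabla x$ to telescope everything down to $P_g(\sbv_\emptyset - \sbv_{V(G)}) = \sbv_\emptyset - \sbv_{V(G \lexp F)}$.

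Finally, tightness is not used the way you describe. It is used \emph{before} the construction, to eliminate the scalar dual variable $w$ (one shows $w=1$) and to rewrite the dual objective as $\sum_{S \subset T} d(S,T)\, x_{ST}$ with $d(S,T) = |T \cap (\clos(S)\setminus S)|$. The objective bound then comes from a pointwise inequality $d(h^{ST}(X),h^{ST}(Y)) \ge d(S,T)\, d(X,Y)$, proved via an explicit inclusion for $\clos_{G \lexp F}(h^{ST}(X))$, and yields only $\ge \lpv(G)\lpv(F)$, not equality. Your ``factor the sum over fibers and use tightness to force equality'' step does not correspond to anything in the argument and would need to be replaced by this closure computation.
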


To separate linear from non-linear coding,
we first produce a pair of linear inequalities
that are valid information inequalities for tuples
of random variables defined by linear functions
over fields of odd (resp., even) characteristic,
but not vice-versa.  We obtain these inequalities by
considering the Fano and non-Fano matroids; the former
is a matroid that is only realizable in characteristic 2,
while the latter is only realizable in odd characteristic
and in characteristic 0.
For each of the two matroids, we are able to transform
a proof of its non-realizability into a much stronger
quantitative statement about dimensions of
vector spaces over a finite field.  This, in turn,
we transform into a tight homomorphic constraint schema of
valid information inequalities for linear random
variables.

We then use the connection between matroids and index
coding~\cites{DFZ1,DFZ2,RSG2} and these inequalities
to give a pair of index coding instances where the best non-linear coding rate is
strictly better than the best linear rate over a field of odd
(resp.,even) characteristic.  We do this by establishing a general
theorem that says that for a matroid $M$, and an inequality that is
violated for the rank function of $M$, there is an index coding problem for
which the bound obtained by adding this inequality to the LP is
strictly greater than $b$.

We can now plug the constraint schema into our lexicographic
product technique and apply it to these two index coding problems to
yield the aforementioned separation between
(vector-)linear and non-linear network coding.

\begin{theorem}
There exists an explicit family of network coding instances (based on index coding instances)
with $n$ messages and some fixed $\epsilon > 0$ such that the non-linear rate is
$\Omega(n^\epsilon) $ times larger than the linear rate.
\label{thm:NWCgap}
\end{theorem}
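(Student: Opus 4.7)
The plan is to convert the matroid-theoretic non-realizability of Fano and non-Fano into quantitative separations for index coding, amplify these constant-factor separations into polynomial ones using the supermultiplicativity theorem stated above, and finally transport the resulting index coding instance to a network coding instance via the standard one-bottleneck-edge reduction mentioned in the introduction.

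Concretely, I would first apply the matroid-to-index-coding construction of \cite{RSG2, DFZ1, DFZ2}, together with the general principle announced in the excerpt that for a matroid $M$ and any inequality violated by its rank function one can exhibit an index coding instance whose LP lower bound, strengthened by that inequality, strictly exceeds the baseline bound $b$. Running this on the Fano matroid (realizable only in characteristic $2$) and on the non-Fano matroid (realizable only in odd characteristic or $0$) produces two index coding instances $\is_F$ and $\is_N$ and two information inequalities $\phi_{\mathrm{odd}}$ and $\phi_{\mathrm{even}}$. The inequality $\phi_{\mathrm{odd}}$ is valid for random variables arising from linear codes over fields with $\chr \F$ odd but is violated by general non-linear codes; $\phi_{\mathrm{even}}$ plays the symmetric role. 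This yields, for some $\gamma > 1$, a constant-factor gap between the non-linear rate of $\is_F$ (resp.\ $\is_N$) and its linear rate over odd- (resp.\ even-) characteristic fields.

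Next I would combine the two instances into a single instance $\is = \is_F \lexp \is_N$ using the lexicographic product operation introduced in the paper. For \emph{any} field $\F$, linear coding over $\F$ must obey either $\phi_{\mathrm{odd}}$ or $\phi_{\mathrm{even}}$, and both of these are (by construction) tight homomorphic constraint schemas in the sense of Section \ref{sec:lp}. Invoking the supermultiplicativity theorem, the LP bound on $\is_F \lexp \is_N$ upper-bounds the linear rate by a factor of at least $\gamma$ below the corresponding non-linear rate, regardless of $\chr \F$. Iterating $k$-fold lexicographic products, the number of messages scales as $N^k$ where $N = |\is|$, while the non-linear-to-linear ratio scales as $\gamma^k$; setting $\epsilon = \log \gamma / \log N$ produces the required $\Omega(n^{\epsilon})$ gap at the index coding level. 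Finally, I would pass from index coding to network coding by realising the broadcast channel as the unique finite-capacity edge of a network, transferring the gap verbatim.

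The main obstacle is the middle step: producing the two tight homomorphic constraint schemas $\phi_{\mathrm{odd}}, \phi_{\mathrm{even}}$. Mere non-realizability of Fano/non-Fano is qualitative, whereas the LP machinery needs a \emph{quantitative} linear inequality in entropies (or dimensions) that (i) is satisfied by every tuple of linear random variables over the appropriate characteristic, (ii) is violated, with quantitative slack, by some non-linear configuration realized inside the index coding instance, and (iii) satisfies the homomorphism property under which the supermultiplicativity theorem applies. Verifying all three properties simultaneously, and checking that the non-linear side of the gap is actually saturated after taking lexicographic powers (so that the upper bound on the linear rate does not also drag down the achievable non-linear rate), is the delicate work; everything else is either bookkeeping or a direct appeal to the general theorems stated earlier.
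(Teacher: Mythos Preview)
Your proposal is correct and follows essentially the same route as the paper: the paper constructs $G_{\mathcal{F}}$ and $G_{\mathcal{N}}$ from the Fano and non-Fano matroids, proves dimension inequalities $\Lambda_{\mathrm{even}},\Lambda_{\mathrm{odd}}$ valid for linear codes over even/odd characteristic, uses them to show $\lambda^{\mathbb F}(G_{\mathcal F}\bullet G_{\mathcal N})>16=\beta(G_{\mathcal F}\bullet G_{\mathcal N})$ for every field $\mathbb F$, and then amplifies via lexicographic powers and supermultiplicativity exactly as you describe.

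Two small clarifications. First, the direction of the inequalities: in index coding the LP \emph{lower}-bounds the linear rate $\lambda$, while the achievable non-linear rate $\beta$ is pinned down exactly (it equals $|E|-r(E)$ because both matroids are representable over some field, so $\beta$ both attains and matches the LP value $b$); the gap is $\lambda>\beta$, not the other way around, and after the standard reduction to network coding the throughput interpretation reverses this into ``non-linear exceeds linear''. Second, the step you flag as the main obstacle --- turning a raw violated inequality into a \emph{tight homomorphic} constraint schema --- is genuinely nontrivial and the paper devotes two separate lemmas to it: one adjoins an extra rank-zero element and composes projection operators to force $A\mathbf 1=A\mathbf 1_i=0$ (tightness) without destroying validity on all dimension vectors, and another extends a single inequality to a full schema by pushing it forward along every Boolean lattice homomorphism. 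You correctly identify this as where the work lies; the rest is, as you say, bookkeeping.
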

The largest previously known gap between the non-linear and linear rates for network coding was a
factor of $\frac{11}{10}$ (\cite{DFZ2}). No separation was known between these parameters for index coding
(see~\cites{LuSt,AHLSW} for related separation results focusing on the weaker setting of scalar linear codes).

As explained above, given any index coding
problem $G$ we can write down an LP 
whose constraints are based on information inequalities
that gives a lower bound on $\beta$.  It is the best
known lower bound, and in many cases, strictly better than any
previously known bound.  Notably, we can show
that the broadcast rate of the 5-cycle is at least $\frac52$,
giving the first known gap between
the independence number $\alpha$ (which equals 2 for the
5-cycle) and
the broadcast rate
$\beta$.  Amplifying this gap using lexicographic
products, we can boost the ratio $\beta/\alpha$ to grow polynomially with $n$
in a family of $n$-vertex graphs.

\begin{theorem}
There exists an explicit family of index coding instances with $n$
messages such that $\beta(G)$ is at least $\Omega(n^{\delta})$ times
larger than $\alpha(G)$, where $\delta=1-2\log_5(2)\approx 0.139$.
\label{thm:alpha-beta-gap}
\end{theorem}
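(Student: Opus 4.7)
The plan is to amplify the small gap between $\alpha$ and $\beta$ available for a single $5$-cycle into a polynomial gap by taking iterated lexicographic powers. Concretely, I would take $G_k = C_5^{\lexp k}$, the $k$-fold lexicographic product of $C_5$ with itself, viewed as an index coding instance on $n = 5^k$ vertices, and argue that $\beta(G_k)/\alpha(G_k)$ grows like a fixed polynomial power of $n$.

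The crux of the argument is the base-case lower bound $b(C_5) \geq 5/2$, where $b$ denotes the LP relaxation described earlier; since $\beta \geq b$, this already yields the first strict separation $\beta(C_5) \geq 5/2 > 2 = \alpha(C_5)$, settling the open question of~\cite{AHLSW}. I expect this to be the main technical obstacle of the whole proof: it amounts to exhibiting an explicit feasible dual certificate of value $5/2$ for the information-theoretic LP on $C_5$, which one builds by exploiting the dihedral symmetry of $C_5$ together with the submodularity-style entropy inequalities that define the LP.

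Once this base case is in hand, the rest is a mechanical application of the supermultiplicativity theorem stated above. The independence number is multiplicative under lexicographic products, so $\alpha(G_k) = \alpha(C_5)^k = 2^k$. Simultaneously, by supermultiplicativity of the LP lower bound, $b(G_k) \geq b(C_5)^k \geq (5/2)^k$, and hence $\beta(G_k) \geq (5/2)^k$. Dividing,
\[
  \frac{\beta(G_k)}{\alpha(G_k)} \;\geq\; \frac{(5/2)^k}{2^k} \;=\; \Bigl(\frac{5}{4}\Bigr)^k.
\]
Substituting $k = \log_5 n$ gives $(5/4)^{\log_5 n} = n^{1-\log_5 4} = n^{1 - 2\log_5 2} = n^{\delta}$, which is exactly the claimed separation. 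Thus the theorem reduces cleanly to the single numerical estimate $b(C_5) \geq 5/2$, with supermultiplicativity and the multiplicativity of $\alpha$ doing all of the subsequent lifting.
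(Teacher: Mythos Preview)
Your proposal is correct and follows essentially the same approach as the paper: establish $b(C_5)\ge 5/2$ via an explicit dual certificate, then invoke supermultiplicativity of $b$ under lexicographic products together with multiplicativity of $\alpha$ to obtain $\beta(C_5^{\lexp k})/\alpha(C_5^{\lexp k}) \ge (5/4)^k = n^{\delta}$. The paper carries out exactly this plan, supplying the explicit list of primal inequalities on $C_5$ that sum to $z_\emptyset \ge 5/2$, which is the one step you flagged but did not spell out.
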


The remainder of the paper is organized as follows.  In Section~\ref{sec:defs}
we give a formal definition of index coding and the
lexicographic product of two index coding problems.  In Section~\ref{sec:lp}
we describe a general class of LPs and prove they behave
supermultiplicatively under lexicographic
products.  Section~\ref{sec:alpha-beta} is devoted to the proof of
Theorem~\ref{thm:alpha-beta-gap}.  In Section~\ref{sec:matroids} we give a
construction from matroids to index coding and prove a number of
connections between properties of the matroid and the parameters of
the corresponding index coding problem.  Finally, in Section~\ref{sec:nonlinear-linear-gap}
we establish inequalities that are valid for
linear codes over fields of odd (resp., even) characteristic and then use
these to prove Theorem~\ref{thm:NWCgap}.

\section{Definitions}
\label{sec:defs}

An index coding problem is specified by a \emph{directed
hypergraph} $G = (V,E)$, where elements of $V$ are
thought of as messages, and $E \subseteq V \times 2^V$
is a set of directed hyperedges $(v,S)$, each of which
is interpreted as a receiver who already knows the
messages in set $S$ and wants to receive message $v$.
Messages are drawn from a finite alphabet $\Sigma$,
and a solution of the problem specifies a finite alphabet $\Sigma_P$
to be used by the public channel, together with
an encoding scheme
$\Encode: \Sigma^{|V|} \to \Sigma_P$ such that,
for any possible values of $(x_v)_{v \in V}$,
every receiver $(v,S)$ is able to decode
the message $x_v$ from the value of $\Encode(\vec{x})$
together with that receiver's side information.
The minimum encoding length
$\ell = \left\lceil \log_{2} |\Sigma_P|\right\rceil$ for
messages that are $t$ bits long (i.e.~$\Sigma=\{0,1\}^t$)
is denoted by $\beta_t(G)$.
As noted in \cite{LuSt}, due to the  overhead associated
with relaying the side-information map to the server the
main focus is on the case $t\gg1$ and namely on the
following \emph{broadcast rate}.
\begin{equation}
  \label{eq-beta-limit}
  \beta(G) \deq \lim_{t\to\infty}\frac{\beta_t(G)}t = \inf_t \frac{\beta_t(G)}{t}
\end{equation}
(The limit exists by subadditivity.)
This is interpreted as the
average asymptotic number of broadcast bits needed per bit of input,
that is, the asymptotic broadcast rate for long messages.  We are also
interested in the optimal rate when we require that
$\Sigma$ is a finite-dimensional vector space over a finite
field $\F$, and the encoding function
is linear.  We denote this by $\linrate^\F$, and
we denote the optimal linear rate over any field as $\linrate$.

A useful notion in index coding is the following \emph{closure}
operation with respect to $G$, a given instance of the problem: for a set of messages $S \subseteq V$, define
\begin{equation}
  \label{eq-def-closure}
  \clos(S) = \clos_G(S) = S \cup \{x \mid \exists (x,T) \in E \mbox{ s.t. } T \subseteq S\}\,.
\end{equation}
The interpretation is that every message $x \in \clos(S)$ can
be decoded by someone who knows all of the messages in $S$ in
addition to the broadcast message.  In Section~\ref{sec:matroids}
when we discuss a transformation that associates an index
coding problem to every matroid, the closure operation
defined in this paragraph --- when specialized to the
index coding problems resulting from that transformation ---
will coincide with the usual matroid-theoretic closure operation.

We next define the lexicographic product operation for
directed hypergraphs, then proceed to present
Theorem~\ref{thm:beta-submult} which demonstrates
its merit in the context of index coding by showing that $\beta$ is
submultiplicative for this operation.  The proof gives further
intuition for the product operation.

\begin{definition}\label{def-lexicographic-prod}
The lexicographic product of two directed hypergraphs $G,F$, denoted
by $G \lexp F$, is a directed hypergraph whose vertex set is the
cartesian product $V(G) \times V(F)$.   The edge set of
$G \lexp F$ contains a directed hyperedge
$e$ for every pair of hyperedges $(e_G,e_F) \in E(G) \times E(F)$.
If $e_G = (w_G,S_G)$ and $e_F = (w_F, S_F)$, then the head of
$e = (e_G,e_F)$ is the ordered pair $(w_G,w_F)$ and its tail
is the set $(S_G \times V(F)) \cup (\{w_G\} \times S_F)$.
Denote by
$G^{\lexp n}$ the $n$-fold lexicographic power of $G$.
\end{definition}

\begin{remark*}
In the special case where the index coding problem is defined by a graph\footnote{When there are $n$ messages and exactly $n$ receivers,
w.l.o.g.\ receiver $i$ wants the message $x_i$ and one can encode the side-information by a graph on $n$ vertices which contains the edge $(i,j)$ iff receiver $i$ knows the message $x_j$.} the above definition coincides with the usual lexicographic graph product
(where $G\lexp F$ has the vertex set $V(G) \times V(F)$ and an edge from $(u,v)$ to $(u',v')$
iff either $(u,u')\in E(G)$ or $u=u'$ and $(v,v')\in E(F)$).
\end{remark*}

\begin{theorem}\label{thm:beta-submult}
The broadcast rate is submultiplicative under the lexicographic product of index coding problems. That is,
$\beta(G \lexp F) \le \beta(G) \, \beta(F)$ for any two directed hypergraphs $G$ and $F$.
\end{theorem}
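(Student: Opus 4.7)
The plan is to build a valid encoding scheme for $G \lexp F$ by composing optimal schemes for $G$ and $F$ in two stages; this will yield the integer inequality $\beta_t(G \lexp F) \le \beta_{\beta_t(F)}(G)$, from which the theorem follows by an elementary limiting argument.

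For the construction, given messages $x_{u,v} \in \{0,1\}^t$ indexed by $(u,v) \in V(G) \times V(F)$, I would first apply an optimal encoder $\Encode_F \colon (\{0,1\}^t)^{V(F)} \to \{0,1\}^{\beta_t(F)}$ for the $F$-instance separately to each ``row,'' obtaining, for each $u \in V(G)$, a string $Y_u = \Encode_F\bigl((x_{u,v})_{v \in V(F)}\bigr)$. I would then regard the tuple $(Y_u)_{u \in V(G)}$ as a set of messages for a fresh copy of the index coding problem $G$ of message length $\ell := \beta_t(F)$, and apply an optimal encoder $\Encode_G$ at that length to produce a broadcast of total length $\beta_\ell(G)$.

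The core step is to verify decodability at every receiver of $G \lexp F$. Fix a receiver $\bigl((w_G,S_G),(w_F,S_F)\bigr)$, whose side information is $\{x_{u,v} : u \in S_G,\, v \in V(F)\} \cup \{x_{w_G,v} : v \in S_F\}$. From the first block of side information the receiver can locally recompute $Y_u$ for each $u \in S_G$. Because $(w_G,S_G) \in E(G)$, validity of $\Encode_G$ with this side information and the broadcast then yields the value of $Y_{w_G}$. At this point the receiver holds an $F$-broadcast on the message tuple $(x_{w_G,v})_{v \in V(F)}$, together with precisely the side information $\{x_{w_G,v} : v \in S_F\}$ that the $F$-receiver $(w_F,S_F)$ is entitled to, so validity of $\Encode_F$ at that receiver delivers the target symbol $x_{w_G,w_F}$.

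Finally, I would pass to rates: fixing $\eps > 0$ and using \eqref{eq-beta-limit} to choose $t$ large enough that both $\beta_t(F)/t < \beta(F)+\eps$ and $\beta_{\beta_t(F)}(G)/\beta_t(F) < \beta(G)+\eps$ (the latter using $\beta_t(F) \to \infty$, with the case $\beta(F)=0$ being trivial), the two-stage bound gives $\beta_t(G \lexp F)/t < (\beta(G)+\eps)(\beta(F)+\eps)$, and letting $\eps \to 0$ closes the argument. There is no serious obstacle; the only bookkeeping that requires a moment of care is the limiting step, since $\beta_t(F)$ is integer-valued while $\beta(F)$ need not be, and this is handled cleanly by working at finite $t$ as above.
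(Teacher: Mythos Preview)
Your proposal is correct and is essentially the same two-stage construction as in the paper: encode each $G$-row with an optimal $F$-code, then encode the resulting strings with an optimal $G$-code, and verify decodability using exactly the side-information structure of the lexicographic edge. The only cosmetic difference is that you phrase the conclusion via the clean integer inequality $\beta_t(G \lexp F) \le \beta_{\beta_t(F)}(G)$ before passing to the limit, whereas the paper fixes a threshold $K$ up front so that both rate bounds hold simultaneously; the content is the same.
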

\begin{proof}
Let $\epsilon > 0$ and, recalling the definition of $\beta$ in~\eqref{eq-beta-limit}
as the limit of $\beta_t/t$, let $K$ be a sufficiently large integer
such that for all $t \geq K$ we have $\beta_t(G)/t \leq \beta(G) + \epsilon$ as well as $\beta_t(F)/t \leq \beta(F) + \epsilon$.
Let $\Sigma = \{0,1\}^K$ and consider the following scheme for the index coding problem on $G \lexp F$
with input alphabet $\Sigma$, which will consist of an
inner and an outer code.

Let $\Encode_F$ denote an encoding function for $F$
with input alphabet $\Sigma$ achieving an optimal rate, i.e.\ minimizing $\log(|\Sigma_P|)/\log(|\Sigma|)$.
For each $v \in V(G)$, the inner code applies $\Encode_F$ to
the $|V(F)|$-tuple of messages
indexed by the set $\{v\} \times V(F)$,
obtaining a message $m_v$. 
Note that our assumption on $|\Sigma|$ implies that
the length of $m_v$ is equal to $K'$ for some
integer $K'$ such that $K \leq K' \leq (\beta(F)+\eps)K$.
Next, let $\Encode_G$ denote an optimal encoding function for $G$ with input $\{0,1\}^{K'}$.
The outer code applies $\Encode_G$ to $\{m_v\}_{v \in V(G)}$ and the assumption on $K$ ensures its output is at most
$(\beta(G)+\epsilon)K'$ bits long.

To verify that the scheme is a valid index code, consider a receiver in $G \lexp F$ represented by $e = ((w_G,w_F), (S_G \times V(F)) \cup (\{w_G\} \times S_F))$.
To decode $(w_G,w_F)$, the receiver first
computes $m_v$ for all $v \in S_G$.  Since
$\Encode_G$ is valid for $G$, receiver $e$ can compute $m_{w_G}$, and
since $\Encode_F$ is valid for $F$, this receiver can use the messages
indexed by  $\{w_G\} \times S_F$ along with $m_{w_G}$ to compute $(w_G,w_F)$.

Altogether, we have an encoding of $K$ bits using at most $(\beta(F)+\epsilon)(\beta(G)+\epsilon)K$ bits of the public channel, and the required result follows from letting $\varepsilon\to0$.
\end{proof}

\section{Linear programming}
\label{sec:lp}

In this section we derive a linear program whose value
constitutes a lower bound on the broadcast rate, and
we prove that the value of the LP behaves
supermultiplicatively under lexicographic products.
In fact, rather than working with a specific linear
program, we work with a general class of LP's
having two types of constraints: those dictated by the
network structure (which are the same for all LP's in
the general class), and
additional constraints depending only on
the vertex set, generated by a \emph{constraint schema},
i.e.\ a procedure for enumerating a finite set of constraints
given an arbitrary finite index set.
We identify some
axioms on the constraint schema that constitute a
sufficient condition for the LP value to be
supermultiplicative.
An example of a
constraint schema which is important in network information
theory is \emph{submodularity}.  For a given index set $I$,
the submodularity schema enumerates all of the
constraints of the form $z_S + z_T \geq z_{S \cap T} + z_{S \cup T}$
where $S,T$ range over subsets of $I$.

Now we explain the general class of LPs which behave
submultiplicatively under the lexicographic product and give bounds on
$\beta$.
Given an index code, if we sample each message independently
and uniformly at random, we obtain a finite probability space on
which the messages and the public channel are random
variables.  If $S$ is a subset of these random variables,
we will denote the Shannon entropy of the
joint distribution of the variables in $S$
by $H(S)$.  If $S \subseteq T \subseteq \clos(S)$ then every
message in $T \setminus S$ can be decoded
given the messages in $S$ and the public channel
$p$, and consequently $H(S \cup \{p\}) =
H(T \cup \{p\})$.  More generally, if we
normalize entropy (i.e.\ choose
the base of the logarithm) so that $H(x)=1$ for each
message $x$, then for every $S \subseteq T$ we have
\begin{equation} \label{eq:cst}
H(T \cup \{p\}) - H(S \cup \{p\}) \leq
|T \setminus \clos(S)|
\stackrel{\Delta}{=}
c_{ST},
\end{equation}
where the above is taken as the definition of $c_{ST}$.
This implies that for any index code we obtain
a feasible solution of the primal LP in Figure~\ref{fig:lp}
by setting $z_S = H(S \cup \{p\})$ for every $S$.
Indeed, the first constraint expresses the fact that the
value of $p$ is determined by the values of the
$n$ messages, which are mutually independent.
The second constraint was discussed above.
The final line of the LP represents a set of constraints,
corresponding to the rows of the matrix $\cm = (a_{qS})$,
that are universally valid for any tuple of random variables
indexed by the message set $\is$.  For
instance, it is well known that the entropy of
random variables has the submodularity property:
$H(S) + H(T) \ge H(S \cup T) + H(S \cap T)$
if $S,T$ are any two sets of random variables
on the same sample space.  So, for example, the
rows of
the constraint matrix $\cm$ could be indexed by
pairs of sets $S,T$, with entries in the $(S,T)$ row
chosen so that it represents the submodularity constraint
(namely $a_{q S} = a_{q T} = 1, \,
a_{q \, S \cap T} = a_{q \, S \cup T} = -1$ and all
other entries of row $a$ of $A$ are zero).
Noting that $H(\{p\}) \le \beta(G)$ we can altogether conclude the following theorem.

\begin{figure}[t]
\fbox{
\begin{minipage}[t]{0.3\textwidth}
{\small
\[ \begin{lparray}
\min & z_{\emptyset} &\\[\lplb]
\mbox{s.t.} & z_{\is} = |\is| & (w)\\[\lplb]
\forall S \subset T & z_T - z_S \leq c_{ST} & (x)\\[\lplb]
& Az \geq 0 & (y)
\end{lparray} \] }
\end{minipage}
\hfill
\qquad
\begin{minipage}[t]{0.6\textwidth}
{\small
\[ \begin{lparray}
\max & |\is| \cdot w - \sum_{S \subset T} c_{ST} x_{ST} \\[\lplb]
\mbox{s.t.} &
\sum_{q} a_{qS} y_q + \sum_{T \supset S} x_{ST} -
\sum_{T \subset S} x_{TS} = 0 &
\!\!\forall S \neq \emptyset, \is \\[\lplb]
& \sum_q a_{q \emptyset} y_q + \sum_{T \neq \emptyset} x_{\emptyset T} = 1
\\[\lplb]
& \sum_q a_{q \is} y_q - \sum_{T \neq \is} x_{T \is}  + w = 0
\\[\lplb]
& x,y \geq 0
\end{lparray} \]
}
\end{minipage}
}
\caption{The LP and its dual.}
\label{fig:lp}
\end{figure}

\begin{theorem}\label{thm:lplowerbound}
For an index coding problem $G$, let $\lpB(G)$ be the LP in
Figure \ref{fig:lp} when $A$ represents the submodularity constraints
and let $b(G)$ be its optimal solution.  Then $b(G) \le \beta(G)$.
\end{theorem}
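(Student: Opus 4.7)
The plan is to show that from any valid index code for $G$ with $t$-bit messages we can extract a feasible solution of the primal LP in Figure~\ref{fig:lp} whose objective value is at most $\beta_t(G)/t$. Since $b(G)$ is the minimum of the objective over all feasible points, this yields $b(G) \le \beta_t(G)/t$ for every $t$, and then sending $t \to \infty$ gives $b(G) \le \beta(G)$ by the definition in~\eqref{eq-beta-limit}.

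Concretely, fix an optimal encoding $\Encode: \Sigma^{|V|} \to \Sigma_P$ with $\log_2|\Sigma_P| = \beta_t(G)$, and let the messages $\{x_v\}_{v \in V}$ be sampled independently and uniformly from $\Sigma = \{0,1\}^t$, so each $x_v$ has Shannon entropy exactly $t$. Let $p = \Encode(\vec{x})$ denote the public channel as a random variable and set
\[
z_S \;=\; \frac{1}{t}\, H(S \cup \{p\}) \qquad \text{for every } S \subseteq \is,
\]
which is the ``entropy normalized so that $H(x_v)=1$'' that the excerpt refers to. I would then check the three families of constraints of the primal LP in turn: (i) $z_\is = |\is|$ because $p$ is a deterministic function of the messages, so $H(\is \cup \{p\}) = H(\is) = |\is|\, t$ by independence; (ii) for $S \subset T$, the bound $z_T - z_S \le c_{ST}$ is exactly the inequality~\eqref{eq:cst} established in the excerpt using the fact that every $x \in T \cap \clos(S)$ is decodable from $S \cup \{p\}$; (iii) $Az \ge 0$ holds because $A$ is the submodularity schema and Shannon entropy is submodular, applied to the sets $S \cup \{p\}$ and $T \cup \{p\}$ whose intersection and union are $(S \cap T) \cup \{p\}$ and $(S \cup T) \cup \{p\}$.

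Having verified feasibility, the objective satisfies
\[
z_\emptyset \;=\; \frac{H(\{p\})}{t} \;\le\; \frac{\log_2|\Sigma_P|}{t} \;=\; \frac{\beta_t(G)}{t},
\]
using that the entropy of a random variable taking values in $\Sigma_P$ is at most $\log_2|\Sigma_P|$. Hence $b(G) \le \beta_t(G)/t$ for every $t$, and taking the infimum over $t$ gives $b(G) \le \beta(G)$.

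There is no real obstacle here beyond correctly matching the LP variables to information-theoretic quantities; the mild subtlety is making sure the normalization is consistent (so that $z_\is = |\is|$ rather than $|\is|\,t$), and confirming that submodularity of entropy translates into $Az \ge 0$ when the ``extra'' coordinate $\{p\}$ is appended uniformly to every set, which it does because $(S \cup \{p\}) \cap (T \cup \{p\}) = (S \cap T) \cup \{p\}$ and $(S \cup \{p\}) \cup (T \cup \{p\}) = (S \cup T) \cup \{p\}$.
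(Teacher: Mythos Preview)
Your proposal is correct and follows essentially the same approach as the paper: define $z_S = H(S \cup \{p\})$ (suitably normalized), verify the three families of primal constraints using determinism of $p$, the decoding property~\eqref{eq:cst}, and submodularity of entropy, then bound $z_\emptyset$ by the broadcast length. If anything, you are slightly more careful than the paper in making the normalization by $t$ and the passage to the limit $t \to \infty$ explicit, whereas the paper compresses this into the line ``Noting that $H(\{p\}) \le \beta(G)$.''
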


It is known that entropies of sets of random variables
satisfy additional linear inequalities besides submodularity;
if desired, the procedure for constructing the matrix
$A$ could be modified to incorporate some of these
inequalities.
Alternatively, in the context of
restricted classes of encoding and decoding
functions (e.g.\ linear functions) there may be
additional inequalities that are specific to that
class of functions, in which case the constraint
matrix $A$ may incorporate these inequalities and
we obtain a linear program that is valid for this
restricted model of index coding but not valid in
general.  We will utilize such constraints in
Section~\ref{sec:nonlinear-linear-gap} when proving a separation
between linear and non-linear network coding.

\begin{definition} \label{def:schema}
A \emph{constraint schema}
associates to each finite index set $\is$
a finite set $\cset(\is)$ (indexing constraints)
and a matrix $\cm(\is)$ with rows indexed
by $\cset(\is)$ and columns indexed by
$\powerset{\is}$, the power set of $\is$.
In addition, to each Boolean
lattice homomorphism\footnote{A Boolean lattice
homomorphism preserves unions and intersections,
but does not necessarily map the empty set to
the empty set nor the universal set to the universal
set, and does not necessarily preserve complements.} $h : \powerset{\is}
\to \powerset{\otheris}$ it associates a function $h_* :
\cset(\is) \to \cset(\otheris)$.

Let $\ones$ be the $\powerset{\is}$-indexed vector
such that $\ones_S = 1$ for all $S$, and let
$\onesi$ be the vector where $(\onesi)_S = 1$ for
all $S$ containing $i$ and otherwise $(\onesi)_S=0$. We say that a constraint schema is
\emph{tight} if $A(I) \ones = A(I) \onesi = 0$
for every index set $\is$ and element $i \in \is$.

Given $h$ and $h_*$ let $P_h$ and $Q_h$ be matrices
representing the linear transformations
they induce on $\R^{\powerset{\is}} \to \R^{\powerset{\otheris}}$
 and $\R^{\cset(\is)} \to \R^{\cset(\otheris)}$,
respectively.  That is, $P_h$ and $Q_h$ have zeros
everywhere except $(P_h)_{h(S)S} = 1$ and
$(Q_h)_{h_*(q)q} = 1$.
We say that a constraint schema is \emph{homomorphic}
if it satisfies $A(\otheris)^{\trans} Q_h = P_h A(\is)^{\trans}$
for every Boolean lattice homomorphism
$h : \powerset{\is} \to \powerset{\otheris}$.
\end{definition}

\begin{example}
Earlier we alluded to the \emph{submodularity constraint schema}.
This is the constraint schema that associates to each index
set $\is$ the constraint-index set $\cset(\is) = \powerset{\is} \times
\powerset{\is}$, along with the constraint matrix
$\cm(\is)$ whose entries are as follows.  In row
$(S,T)$ and column $U$, we have an entry of $1$ if
$U = S$ or $U = T$, an entry of $-1$ if $U = S \cap T$ or $U = S \cup T$,
and otherwise 0.  (If any two of $S, \, T, \, S \cap T, \,
S \cup T$ are equal, then that row of $\cm(\is)$ is set to zero.)
It is easy to verify that $\cm(\is) \ones =
\cm(\is) \onesi = 0$ for all $i \in \is$, thus the schema is
tight.
For a homomorphism $h$, the corresponding
mapping of constraint sets is $h_*(S,T) = (h(S),h(T))$.
We claim that, equipped with this mapping of $h\to h_*$, the constraint schema is homomorphic. Indeed, to verify that $\cm(\otheris)^{\trans} Q_h = P_h \cm(\is)^{\trans}$ take any two sets $S, T \subset I$ and argue as follows to show that $u = P_h \, \cm(\is)^\trans \, \sbv_{S,T}$ and $v = \cm(\otheris)^\trans \, Q_h \, \sbv_{S,T}$ are identical (here and henceforth $\sbv_{X,Y}$ denotes the standard basis vector of $ \R^{\powerset{\is}}$ having $1$ in coordinate $(X,Y)$ for $X,Y\subset I$). First observe that
$\cm(\is)^\trans\, \sbv_{S,T}$ is the vector $\tilde{u} \in \R^{\powerset(\is)}$ which has $0$ entries everywhere except
$ \tilde{u}_{S} = \tilde{u}_{T} = 1$ and $\tilde{u}_{S\cup T} = \tilde{u}_{S\cap T} = -1$
provided that $S \nsubseteq T \nsubseteq S$, otherwise $\tilde{u}=0$.
As such, $u = P_h \tilde{u}$ has $0$ entries everywhere except
\begin{align*}
u_{h(S)} = u_{h(T)} = 1\,,\quad u_{h(S\cup T)} = u_{h(S\cap T)} = -1
\end{align*}
provided that $S \nsubseteq T \nsubseteq S$ and furthermore $h(S) \nsubseteq h(T) \nsubseteq h(S)$, otherwise $u=0$ (for instance, if $S \subseteq T$ then $\tilde{u} = 0$ and so $u=0$, whereas if $h(S) \subseteq h(T)$ then $\tilde{u}$ belongs to the kernel of $P_h$).
Similarly,
$Q_h \, \sbv_{S,T} = \sbv_{h(S),h(T)}$ and therefore $v = A(J)^\trans \, \sbv_{h(S),h(T)}$ has $0$ entries everywhere except
\begin{align*}
v_{h(S)} = v_{h(T)} = 1\,,\quad v_{h(S)\cup h(T)} = v_{h(S)\cap h(T)} = -1
\end{align*}
provided that $h(S) \nsubseteq h(T) \nsubseteq h(S)$, otherwise $v=0$.
To see that $u = v$ note that if $h(S) \subseteq h(T)$ then $u = v = 0$, and if $S \subseteq T$ then again we get $h(S) \subseteq h(T)$ due to monotonicity (recall that $h$ is a lattice homomorphism) and so $u=v=0$. Adding the analogous statements obtained from reversing the roles of $S,T$, it remains only to verify that $u=v$ in case $h(S) \nsubseteq h(T) \nsubseteq h(S)$, which reduces by the above definitions of $u$ and $v$ to requiring that $h(S \cup T) = h(S) \cup h(T)$ and $h(S \cap T) = h(S) \cap h(T)$. Both requirements are satisfied by definition of a Boolean lattice homomorphism, and altogether we conclude that the submodularity constraint schema is homomorphic.
\end{example}

\begin{theorem} \label{thm:lexprod}
Let $A$ be a tight homomorphic constraint schema.
For every index coding problem
let $\lpv(G)$ denote the optimum of the LP in Figure~\ref{fig:lp}
when $\is = V(G)$ and the constants $c_{ST}$ are
defined as in~\eqref{eq:cst}.
Then for every two index coding problems $G$ and $F$,
we have $\lpv(G \lexp H) \geq \lpv(G) \, \lpv(F)$.
\end{theorem}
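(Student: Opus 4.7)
My plan is to use LP duality and construct, from optimal dual-feasible solutions $(w^G,x^G,y^G)$ and $(w^F,x^F,y^F)$ for $G$ and $F$, a dual-feasible solution for $G\lexp F$ whose objective value equals $\lpv(G)\,\lpv(F)$. This immediately gives $\lpv(G\lexp F)\geq\lpv(G)\,\lpv(F)$.

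The construction is built from two families of Boolean lattice homomorphisms into $\powerset{V(G)\times V(F)}$: the \emph{cylinder map} $h^G\colon S\mapsto S\times V(F)$, and, for each $v_0\in V(G)$ and $S\subseteq V(G)\setminus\{v_0\}$, the \emph{partial-column map} $h^F_{v_0,S}\colon T\mapsto (S\times V(F))\cup(\{v_0\}\times T)$. Both preserve unions and intersections, as one checks from distributivity. Because the schema is homomorphic, $h^G_*$ and the maps $(h^F_{v_0,S})_*$ carry the constraint index sets $\cset(V(G))$ and $\cset(V(F))$ into $\cset(V(G)\times V(F))$, and unpacking the identity $A(\otheris)^{\trans}Q_h=P_hA(\is)^{\trans}$ gives the pointwise formula
\[
a_{h_*(q),U}=\sum_{S:\,h(S)=U}a_{q,S},
\]
which localizes each lifted constraint on the image of $h$. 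I will set $w^{GF}=w^Gw^F$, transport $y^G$ and $y^F$ through the lifts above with appropriate scalings, and place $x^{GF}$ mass on pairs of cylinders (transported from $x^G$) and pairs of cylinder-plus-partial-column sets (transported from $x^F$), with all other dual variables set to zero.

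Verifying dual feasibility splits into three cases according to the shape of $U\subseteq V(G)\times V(F)$. If $U$ is neither a cylinder $S\times V(F)$ nor a cylinder-plus-partial-column $S\times V(F)\cup\{v_0\}\times T$ (with $T\neq\emptyset,V(F)$), the localization formula forces all $y^{GF}$-contributions at $U$ to vanish, no $x^{GF}$-variable mentions $U$, and the dual equation holds automatically. If $U$ has partial-column shape with $T\neq\emptyset,V(F)$, only the single lift $(h^F_{v_0,S})_*$ and the corresponding $x^F$-derived $x^{GF}$-variables contribute, and the dual equation at $U$ reduces to the dual equation of $(x^F,y^F)$ at $T$ inside $F$'s LP.

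The pure-cylinder case $U=S\times V(F)$ is the delicate one and is where the tightness axioms come in. Here the $h^G_*$-lifts combine with \emph{degenerate} lifts coming from $(h^F_{v_0,S'})_*$ with $T=\emptyset$ or $T=V(F)$ (whose images collapse onto cylinders), and also with $x^G$- and $x^F$-derived $x^{GF}$-variables at both ends of $U$; this needs to reduce to $G$'s dual equation at $S$. The tightness axioms $A\ones=0$ and $A\onesi=0$ are exactly what is needed to cancel the aggregated degenerate $(h^F)_*$-contributions. Finally, to evaluate the dual objective one uses the column-wise structure of $\clos_{G\lexp F}$ to express each $c^{GF}$ value appearing in $\sum c^{GF}x^{GF}$ in terms of $c^G$ or $c^F$ data and verifies, after cancellation, that the objective equals $\bigl(|V(G)|w^G-\sum c^Gx^G\bigr)\bigl(|V(F)|w^F-\sum c^Fx^F\bigr)=\lpv(G)\,\lpv(F)$. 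The main obstacle is the pure-cylinder case: calibrating the scaling factors attached to the many partial-column lifts $(v_0,S)$ so that the degenerate contributions cancel via tightness, and the objective value comes out exactly right, is the technical crux of the proof.
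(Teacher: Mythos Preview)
Your strategy matches the paper's: lift optimal duals for $G$ and $F$ into the dual of $G\lexp F$ via Boolean lattice homomorphisms, and your $h^G$ and $h^F_{v_0,S}$ coincide with the paper's maps $g$ and $h^{S,S\cup\{v_0\}}$. But the step you flag as the ``technical crux''---the scalings---is precisely the content of the paper's construction, and you have not committed to a choice. The paper's answer is to weight the $h^{ST}$-lift by $(\xi^G)_{ST}$, giving
\[
\xi^{G\lexp F}=\sum_{S\subset T}(\xi^G)_{ST}\,R_{h^{ST}}\xi^F,\qquad
\eta^{G\lexp F}=Q_g\,\eta^G+\sum_{S\subset T}(\xi^G)_{ST}\,Q_{h^{ST}}\eta^F.
\]

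Two further points of divergence matter. First, the paper does \emph{not} attack the original dual directly; it first uses tightness to simplify it. From $A\ones=0$ one gets $w=1$, and from $A\onesi=0$ one rewrites the objective as $\sum_{S\subset T}d(S,T)\,x_{ST}$ with $d(S,T)=|T\cap(\clos(S)\setminus S)|$, so the dual collapses to $A^\trans y+\nabla x=\sbv_\emptyset-\sbv_I$ with $x,y\ge0$. In this form feasibility of the lifted pair is a short calculation with no case split on the shape of $U$ and no ``degenerate contributions'' to cancel: the role of tightness has already been spent in the reformulation, not saved for the cylinder case as you envision. The cylinder equation falls out because $h^{ST}(\emptyset)=g(S)$ and $h^{ST}(V(F))=g(T)$, so summing $(\xi^G)_{ST}\,P_{h^{ST}}(\sbv_\emptyset-\sbv_{V(F)})$ over $S\subset T$ reproduces $P_g\nabla\xi^G$ directly. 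Second, the paper proves the lifted objective is \emph{at least} $\rho(G)\rho(F)$, via the inequality $d(h^{ST}(X),h^{ST}(Y))\ge d(S,T)\,d(X,Y)$; the exact equality you assert need not hold in general (it does for non-degenerate instances, as the paper notes in a remark). Finally, the paper uses $h^{ST}$ for arbitrary $S\subset T$, whereas you restrict to $|T\setminus S|=1$; this is harmless once one observes that an optimal $\xi^G$ may be taken supported on single-step pairs (each constraint $z_T-z_S\le c_{ST}$ is dominated by a telescoping chain of single-step constraints), but you would need to make that reduction explicit.
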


\begin{proof}
It will be useful to rewrite the constraint set
of the dual LP in a more succinct form.
First, if $x$ is any vector indexed by pairs
$S,T$ such that $S \subset T \subseteq I$,
let $\nabla x \in \R^{\powerset{\is}}$ denote the
vector such that for all $S$,
$(\nabla x)_S = \sum_{T \supset S} x_{ST} -
\sum_{T \subset S} x_{TS}$.  Next,
for a
set $S \subseteq \is$, let $\sbv_S$ denote the
standard basis vector
vector in $\R^{\powerset{\is}}$ whose $S$ component
is $1$.
Then the entire constraint set of the dual
LP can be abbreviated to the following:
\begin{align}
\label{eq:dual-constraint}
A^\trans y + \nabla x + w \sbv_{\is} & = \sbv_{\emptyset}\,,
\quad x,y \ge 0\,.
\end{align}
Some further simplifications of the dual can be
obtained using the fact that the constraint schema
is tight.  For example, multiplying the left and right sides
of~\eqref{eq:dual-constraint} by the row vector
$\ones^\trans$ gives
\[
\ones^\trans A^\trans y + \ones^\trans \nabla x + w
= 1\,.
\]
By the tightness of the constraint schema $\ones^\trans A^\trans = 0$.
It is straightforward
to verify that $\ones^{\trans} \nabla x = 0$ and after eliminating
these two terms from the equation above, we find simply that $w=1$.
Similarly, if we multiply the left and right sides
of~\eqref{eq:dual-constraint} by the row vector
$\onesi^\trans$ and substitute $w=1$, we obtain
$
\onesi^\trans A^\trans y + \onesi^\trans \nabla x + 1  = 0
$
and consequently (again by the tightness) we arrive at $1  = - \onesi^\trans \nabla x$.
At the same time, $-\onesi^\trans\nabla x = \sum_{\substack{S \subset T \\ i \in T \setminus S}} x_{ST}$ by definition of $\nabla x$, hence summing over all $i \in I$ yields
 \[
 |\is| = \sum_{S \subset T} |T \setminus S| \, x_{ST}.
 \]
Plugging in this expression for $|\is|$ and
$w=1$, the LP objective
of the dual can be rewritten as
\[
|\is| - \sum_{S \subset T} c_{ST} x_{ST} =
\sum_{S \subset T} \left( |T \setminus S| - c_{ST} \right) \,  x_{ST} =
\sum_{S \subset T} |T \cap (\clos(S) \setminus S)| \, x_{ST},
\]
where the last equation used the fact that
$c_{ST} = |T \setminus \clos(S)|$.
We now define
\[\dualcoeff(S,T) = |T \cap (\clos(S)\setminus S)|\]
and altogether we arrive at the following reformulation of the dual LP.
\begin{equation} \label{eq:dual}
\begin{lparray}
\max & \sum_{S \subset T} \, \dualcoeff(S,T) \, x_{ST} \\[\lplb]
\mbox{s.t.} & A^\trans y + \nabla x = \sbv_\emptyset - \sbv_\is \\[\lplb]
& x,y \geq 0\,.
\end{lparray}
\end{equation}

Now suppose that $(\xg,\yg), (\xf,\yf)$ are optimal
solutions of the dual LP for $G,F$, achieving objective
values $\lpv(G)$ and $\lpv(F)$, respectively. (Here $\xi,\eta$ play the role of $x,y$ from~\eqref{eq:dual}, resp.) We will
show how to construct a pair of vectors $(\xgf,\ygf)$ that
is feasible for the dual LP of $G \lexp F$ and achieves an
objective value of at least $\lpv(G) \, \lpv(F)$.
The construction is as follows.
Let $g : \powerset{V(G)} \to \powerset{V(G \lexp F)}$
be the mapping $g(X) = X \times V(F)$.
For sets $S \subset T \subseteq V(G)$, let $h^{ST}: \powerset{V(F)}
\to \powerset{V(G \lexp F)}$ be the mapping $h^{ST}(X) = (T \times X)
\cup (S \times V(F)).$  Observe that both mappings are Boolean lattice homomorphisms.

To gain intuition about the mappings $g,h^{ST}$ it is useful to think
of obtaining the vertex set of $G \lexp F$ by replacing every
vertex of $G$ with a copy of $F$.  Here $g(\{v\})$
maps the vertex $v$ in $G$ to the copy of $F$ that replaces $v$.  The
mapping $h^{ST}(\{u\})$ maps a vertex $u$ in $F$ to the vertex $u$ in the
copies of $F$ that replace vertices in $T$, and then adds the set $\{u\}
\times V(F)$.

Recall that Definition~\ref{def:schema} associates
two matrices $P_h,Q_h$ to every Boolean lattice homomorphism
$h : \powerset{\is} \to \powerset{\otheris}$.
It is also useful to define a matrix $R_h$ as follows: the columns
and rows of $R_h$ are indexed by pairs $S \subset T \subseteq \is$
and $X \subset Y \subseteq \otheris$, respectively,
with the entry in row $XY$ and column $ST$ being equal
to 1 if $X = h(S)$ and $Y = h(T)$, otherwise 0.
Under this definition,
\begin{equation}\label{eq-nabla-Rh}
\nabla(R_h x) = P_h \nabla x \quad\mbox{ for any $x \in \R^{\powerset{\is}}$}\,.
\end{equation}
Indeed, if $x = \sbv_{S,T}$ for some $S\subset T \subseteq \is$ then
$\nabla \sbv_{S,T} = \sbv_S - \sbv_T$ and so $P_h\, \sbv_{S,T} = \sbv_{h(S)}-\sbv_{h(T)}$,
whereas $\nabla(R_h \sbv_{S,T}) = \nabla(\sbv_{h(S),h(T)}) = \sbv_{h(S)}-\sbv_{h(T)}$.

We may now define
\begin{align}
\xgf & =  \sum_{S \subset T} (\xg)_{ST} \, (R_{h^{ST}}\, \xf) \label{eq-xgf}\,,\\
\ygf & =  Q_g\, \yg + \sum_{S \subset T} (\xg)_{ST} \, (Q_{h^{ST}\,} \yf)\label{eq-ygf}\,.
\end{align}

In words, the dual solution for $G \lexp F$ contains a copy of the dual
solution for $F$ lifted according to $h^{ST}$ for every pair $S
\subset T$ and
one copy of the dual solution of $G$ lifted according to $g$.  The
feasibility of $(\xgf,\ygf)$ will follow from multiple applications of the
homomorphic property of the constraint schema and the feasibility of
$(\xf,\yf)$ and $(\xg,\yg)$, achieved by the following claim.

\begin{claim}\label{claim:feas}
The pair $(\xgf,\ygf)$ as defined in~\eqref{eq-xgf},\eqref{eq-ygf}
 is a feasible dual solution.
\end{claim}

\begin{proof}
The matrices $Q_g, \, R_{h^{ST}},  \, Q_{h^{ST}}$ all have $\{0,1\}$-valued
entries thus clearly $\xgf,\ygf \geq 0$.  Letting
$A = A(G \lexp F)$, we must prove that $A^\trans \ygf + \nabla \xgf
= \sbv_{\emptyset} - \sbv_{V(G \lexp F)}.$
Plugging in the values of $(\xgf,\ygf)$ we have
\begin{align}
A^\trans \ygf + \nabla \xgf
&=
A^\trans Q_g \yg + \sum_{S \subset T} (\xg)_{ST} \,
(A^\trans Q_{h^{ST}} \yf) +
\sum_{S \subset T} (\xg)_{ST} \, \nabla (R_{h^{ST}}\, \xf)\,,\nonumber\\
& =
P_g A(G)^\trans \yg + \sum_{S \subset T} (\xg)_{ST} \,
\left(P_{h^{ST}} A(F)^\trans \yf + \nabla (R_{h^{ST}}\, \xf)\right)\,.
\label{eq-dual-feasible-form1}
\end{align}
where the second equality applied the homomorphic property of the constraint schema.
To treat the summation in the last expression above, recall~\eqref{eq-nabla-Rh} which implies that
\begin{align}\label{eq-dual-feasible-form2}
P_{h^{ST}} A(F)^\trans \yf + \nabla (R_{h^{ST}}\, \xf) &=
P_{h^{ST}} A(F)^\trans \yf + P_{h^{ST}} \nabla \xf =
P_{h^{ST}} (\sbv_\emptyset - \sbv_{V(F)}) \,,
\end{align}
with the last equality due to the fact that $(\xf,\yf)$ achieves the
optimum of the dual LP for $F$. Recalling that $P_h \sbv_S = \sbv_{h(S)}$ for any $h$ and
combining it with the facts $h^{ST}(\emptyset) = S\times V(F)$
and $g(S) = S \times V(F)$ gives
$ P_{h^{ST}} \sbv_\emptyset = \sbv_{S \times V(F)} = P_g \sbv_S $.
Similarly, since $h^{ST}(V(F)) = T\times V(F)$ we have
$ P_{h^{ST}} \sbv_{V(F)} = \sbv_{T \times V(F)} = P_g \sbv_T $, and
plugging these identities in~\eqref{eq-dual-feasible-form2} combined with~\eqref{eq-dual-feasible-form1}
gives:
\[ A^\trans \ygf + \nabla \xgf
= P_g \bigg[ A(G)^{\trans} \yg + \sum_{S \subset T} (\xg)_{ST} \,
(\sbv_S - \sbv_T) \bigg]\,. \]
Collecting together all the terms
involving $\sbv_S$ for a given $S \in \powerset{\is}$, we find
that the coefficient of $\sbv_S$ is
$\sum_{T \supset S} (\xg)_{ST} - \sum_{T \subset S} (\xg)_{ST} = (\nabla \xg)_S$.  Hence,
\begin{align*}
A^\trans \ygf + \nabla \xgf
&=
P_g \left[ A(G)^\trans \yg + \nabla \xg \right]
=
P_g \left[ \sbv_\emptyset - \sbv_{V(G)} \right]
=
\sbv_\emptyset - \sbv_{V(G \lexp F)}\,,
\end{align*}
where the second equality was due to $(\xg,\yg)$ achieving the optimum of the dual LP for $G$.
\end{proof}

To finish the proof,
we must evaluate the dual LP objective and show that it
is at least $\lpv(G) \, \lpv(F)$, as the next claim establishes:

\begin{claim}\label{claim:dual-lower}
The LP objective for the dual solution given in Claim~\ref{claim:feas} has
value at least $\lpv(G) \, \lpv(F)$.
\end{claim}

\begin{proof}
To simplify the notation, throughout this proof we will use $K,L$ to
denote subsets of $V(G \lexp F)$ while referring to subsets of
$V(G)$ as $S,T$ and to subsets of $V(F)$ as $X,Y$.
We have
\begin{align}
\sum_{K \subset L}
\dualcoeff(K,L) (\xgf)_{KL}
&=
\sum_{K \subset L}
\dualcoeff(K,L) \sum_{S \subset T} (\xg)_{ST} \, (R_{h^{ST}} \,\xf)_{KL} \nonumber\\
&=
\sum_{S \subset T} (\xg)_{ST} \bigg(
\sum_{K \subset L} \dualcoeff(K,L) \, (R_{h^{ST}} \,\xf)_{KL}
\bigg) \nonumber\\
&=
\sum_{S \subset T} (\xg)_{ST} \bigg(
\sum_{X \subset Y} \dualcoeff\big(h^{ST}(X),h^{ST}(Y)\big) \, (\xf)_{XY}
\bigg)\,,\label{eq-dual-lower-bound}
\end{align}
where the last identity is by definition of $R_h$.

At this point we are interested in deriving a lower bound on
$\dualcoeff\big(h^{ST}(X),h^{ST}(Y)\big)$, to which end we first need to
analyze $\clos_{G \lexp F}(h^{ST}(X))$.
Recall that $E(G\lexp F)$ consists of
all hyperedges of the form $(w,K)$ with
$w=(w_G,w_F)$ and
$K = (W_G \times V(F)) \cup (\{w_G\}\times W_F)$
for some pair of edges $(w_G,W_G) \in E(G)$ and $(w_F,W_F)\in E(F)$.
%
We first claim that for any $S\subset T$ and $X\subset V(F)$,
\begin{equation}
  \label{eq-lex-closure-1}
  \clos_{G\lexp F} \left(h^{ST}(X) \right) \setminus h^{ST}(X)
\; \supseteq  \;
\Big( \left( \clos_G(S) \setminus S \right) \cap T \Big) \times
\Big( \clos_F(X) \setminus X \Big)
\,.
\end{equation}
To show this, let $L\subseteq V(G \lexp F)$ denote the set on the
right side of~\eqref{eq-lex-closure-1}.  Note
that $L$ contains no
ordered pairs whose first component is in $S$ or whose
second component is in $X$, and therefore $L$ is disjoint
from $h^{ST}(X) = (T \times X) \cup (S \times V(F))$.
Consequently, it suffices to show that
$\clos_{G \lexp F}\left(h^{ST}(X)\right) \supseteq L$.
Consider any $w = (w_G,w_F)$ belonging to $L$.
As $w_G \in \clos_G(S) \setminus S$, there must exist an edge
$(w_G,W_G) \in E(G)$ such that $W_G \subseteq S$.
Similarly, there must exist an edge $(w_F,W_F) \in E(F)$
such that $W_F \subseteq X$.  Recall from the definition
of $L$ that $\{w_G\} \subseteq T$.  Now letting
$K = (W_G \times V(F)) \cup (\{w_G\} \times W_F)$,
we find that
$K \subseteq (S \times V(F)) \cup (T \times X) = h^{ST}(X)$
and that $(w,K) \in E(G \lexp F)$, implying that
$w \in \clos_{G \lexp F} \left( h^{ST}(X) \right)$ as desired.

Let $\hstx = h^{ST}(X)$ and $\hsty = h^{ST}(Y)$,
and recall that $\dualcoeff(\hstx,\hsty)$ is defined as
$\big| \big( \clos_{G\lexp F}(\hstx) \setminus \hstx \big) \cap \hsty \big|$.
Using~\eqref{eq-lex-closure-1} and noting that
$\hsty \supseteq (T \times Y)$ we find that
\[
\left( \clos_{G \lexp F}(\hstx) \setminus \hstx \right) \cap \hsty \; \supseteq \;
\Big( \left( \clos_G(S) \setminus S \right) \cap T \Big)
\times
\Big( \left( \clos_F(X) \setminus X \right) \cap Y \Big)
\]
and hence
\[
\dualcoeff(\hstx,\hsty) \geq
\left| \left( \clos_G(S) \setminus S \right) \cap T \right| \, \cdot \,
\left| \left( \clos_F(X) \setminus X \right) \cap Y \right| =
\dualcoeff(S,T) \, \dualcoeff(X,Y) \, .
\]
Plugging this bound into~\eqref{eq-dual-lower-bound} we find that
\begin{align*}
\sum_{K \subset L}
\dualcoeff(K,L) (\xgf)_{KL} &\geq
\sum_{S \subset T} (\xg)_{ST} \sum_{X \subset Y} \dualcoeff(S,T) \dualcoeff(X,Y)
(\xf)_{XY} \\
&=
\bigg(\sum_{S \subset T} \dualcoeff(S,T) (\xg)_{ST} \bigg)
\bigg( \sum_{X \subset Y} \dualcoeff(X,Y) (\xf)_{XY} \bigg) =
\lpv(G) \, \lpv(F)\,,
\end{align*}
as required.
\end{proof}
Combining Claims~\ref{claim:feas} and~\ref{claim:dual-lower} concludes the proof of the Theorem~\ref{thm:lexprod}.
\end{proof}
\begin{remark*}
The two sides of~\eqref{eq-lex-closure-1} are in fact equal for any non-degenerate index coding instances $G$ and $F$, namely
under the assumption that every $(w_G,W_G)\in E(G)$ has
$w_G \notin W_G$ (otherwise this receiver already knows the required $w_G$
and may be disregarded) and $W_G \neq \emptyset$ (otherwise the public channel must include $w_G$ in plain form and we may disregard this message), and similarly for $F$.
To see this, by definition of $\clos_{G\lexp F}(\cdot)$ and the fact that
$h^{ST}(X) = (T \times X) \cup (S \times V(F))$ it suffices to show that every edge $(w,K)\in E(G\lexp F)$ with $K\subseteq h^{ST}(X)$
satisfies $w \in \big(\clos_G(S) \cap T\big)\times \clos_F(X)$. Take $(w,K) \in E(G\lexp F)$
and let $(w_G,W_G)\in E(G)$ and $(w_F,W_F)\in E(F)$ be the edges forming it as per Definition~\ref{def-lexicographic-prod} of the lexicographic product.
A prerequisite for $K \subseteq h^{ST}(X)$
is to have $w_G \in T$ as otherwise $\{w_G\}\times W_F \not\subseteq h^{ST}(X)$
(recall that $S \subset T$ and that $W_F \neq \emptyset$). Moreover, as $X$ is strictly contained in $V(F)$
we must have $W_G \subseteq S$ in order to allow $W_G \times V(F) \subseteq h^{ST}(X)$,
thus (using the fact that $w_G \notin W_G$ and so $w_G \notin S$) we further require
that $W_F \subseteq X$. Altogether we have $W_G \subseteq S$, $W_F \subseteq X$ and $w_G \in T$,
hence $(w_G,w_F) \in \big(\clos_G(S) \cap T\big)\times \clos_F(X)$ as required.
\end{remark*}

\section{Separation between $\alpha$ and $\beta$}
\label{sec:alpha-beta}
To prove Theorem~\ref{thm:alpha-beta-gap}, we start by using Theorem~\ref{thm:lplowerbound} to show that $\beta(C_5) > \alpha(C_5)$ where
$C_5$ is the 5-cycle.  Then we apply the power of Theorem~\ref{thm:lexprod} to transform this constant gap on $C_5$ to a
polynomial gap on $C_5^k$.

First we show that $\beta(C_5) \ge b(C_5) \ge \frac52$. We can show that $b(C_5) \ge \frac52$ by providing a feasible dual
solution for the LP $\lpB$ with value $\frac52$.  This can easily be achieved
by listing a set of primal
constraints whose variables sum and cancel to show that $z_\emptyset
\ge \frac52$.  Labeling the vertices of $C_5$ by $1,2,3,4,5$ sequentially,
such a set of constraints is given below.  It is helpful to note that
in an index coding problem defined by an undirected graph, $x \in \clos(S)$
if all the neighbors of $x$ are in $S$.
\begin{align*}
2 &\ge z_{\{1,3\}} - z_\emptyset  \\
2 &\ge z_{\{2,4\}} - z_\emptyset  \\
1 &\ge z_{\{5\}} - z_\emptyset  \\
0 & \ge z_{\{1,2,3\}} - z_{\{1,3\}} \\
0 & \ge z_{\{2,3,4\}} - z_{\{2,4\}} \\
z_{\{2,3,4\}} + z_{\{1,2,3\}} &\ge z_{\{2,3\}} + z_{\{1,2,3,4\}} \\
z_{\{2,3\}} + z_{\{5\}} &\ge z_\emptyset + z_{\{2,3,5\}} \\
0 &\ge z_{\{1,2,3,4,5\}} - z_{\{1,2,3,4\}} \\
0 &\ge z_{\{1,2,3,4,5\}} - z_{\{2,3,5\}} \\
z_{\{1,2,3,4,5\}} &= 5\\
z_{\{1,2,3,4,5\}} &= 5
\end{align*}
Applying Theorem~\ref{thm:lexprod} we deduce that for any integer
 $k\geq 1$ the $k$-th lexicographic power of $C_5$ satisfies
 $\beta(C_5^k) \ge b(C_5^k) \geq \left(\frac{5}{2}\right)^{k}$.
Furthermore, $\alpha(C_5) = 2$ and it is well known that the independence
number is multiplicative on lexicographic products and so $\alpha(C_5^k) = 2^k$.
Altogether, $C_5^k$ is a graph on $n = 5^k$ vertices with $\alpha = n^{\log_5(2)}$ and $\beta \ge
n^{1-\log_{5}(2)}$, implying our result.

\section{Matroids and index coding}
\label{sec:matroids}
Recall that a matroid is a pair $M = (E,r)$ where $E$ is a ground set and $r:2^E\to\N$ is
a rank function satisfying
\begin{compactenum}[(i)]
\item $r(A) \le |A|$ for all $A \subseteq E$;
\item $r(A) \le r(B)$ for all $A \subseteq B \subseteq E$ (monotonicity);
\item $r(A) + r(B) \ge r(A \cup B) + r(A \cap B)$ for all $A,B \subseteq E$ (submodularity).
\end{compactenum}
The rank vector of a matroid, $\rankvec(M)$, is a $2^{|E|}$-dimensional
vector indexed by
subsets of $E$, such that its $S$-th coordinate is $r(S)$.
A subset $S \subseteq E$ is called \emph{independent} if
$r(S) = |S|$ and it is called a \emph{basis} of $M$ if
$r(S) = |S| = r(E)$.

In this section we give a construction mapping a matroid to an
instance of index coding that exactly captures the
dependencies in the matroid.  We proceed to show some useful connections between
matroid properties and the broadcast rate of the corresponding
index coding problem.

\begin{definition}\label{def-matroid-index-coding}
Let $M = (E,r)$ be a matroid. The
hypergraph index coding problem \emph{associated} to $M$, denoted by $G_M$, has a message set
$E$ and all receivers of the form
\[\big\{(x,S) \,\mid\, x \in E\,,\, S \subseteq E\,,\,r(S) = r(S \cup \{x\})\big\}\,.\]
\end{definition}
\begin{remark*}
A similar yet slightly more complicated construction was given 
in~\cite{RSG2}. Our construction is (essentially) a subset of the one
appearing there.  A construction that maps a matroid to a network
coding problem is given in~\cites{DFZ1,DFZ2}.  They prove an analog of
Proposition \ref{prop:matroidLP}.
\end{remark*}


\begin{proposition}
For a matroid $M = (E,r)$, $b(G_M) = |E|-r(E)$.
\label{prop:matroidLP}
\end{proposition}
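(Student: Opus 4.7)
I plan to prove the equality by establishing matching upper and lower bounds on the LP value $b(G_M)$.

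For the upper bound $b(G_M) \leq |E| - r(E)$, I would exhibit the explicit primal solution $z_S = r(S) + (|E| - r(E))$ for every $S \subseteq E$. Direct substitution gives the required equality $z_E = |E|$ and the objective value $z_\emptyset = |E|-r(E)$. Since the additive constant cancels, the submodularity rows $\cm z \geq 0$ reduce to the matroid submodularity inequality $r(S)+r(T) \geq r(S\cap T)+r(S\cup T)$, which holds by definition of a matroid. The step requiring the most care, and which I view as the main obstacle, is checking the closure-based constraint
\[
z_T - z_S = r(T)-r(S) \leq c_{ST} = |T \setminus \clos(S)| \qquad \text{for every } S\subset T.
\]
For this I would first observe that the index-coding closure $\clos_{G_M}$ defined in~\eqref{eq-def-closure} agrees with the matroid closure of $M$: this follows from Definition~\ref{def-matroid-index-coding} together with monotonicity of $r$, since $r(T')=r(T'\cup\{x\})$ for some $T'\subseteq S$ forces $r(S)=r(S\cup\{x\})$. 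Writing $T$ as the disjoint union of $P = T \cap \clos(S)$ and $Q = T \setminus \clos(S)$, the inclusion $P \subseteq \clos(S)$ yields $r(S \cup P) = r(S)$, and subadditivity of $r$ gives $r(T) = r(S \cup P \cup Q) \leq r(S \cup P) + r(Q) \leq r(S) + |Q|$, as required.

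For the matching lower bound $b(G_M) \geq |E|-r(E)$, I would pick any basis $B$ of $M$ and chain just three primal constraints: (i) the equality $z_E = |E|$; (ii) the closure inequality $z_E - z_B \leq c_{BE} = |E \setminus \clos(B)| = 0$, using $\clos(B)=E$; and (iii) the closure inequality $z_B - z_\emptyset \leq c_{\emptyset B} = |B\setminus \clos(\emptyset)|$. Since a basis contains no loops, $B$ is disjoint from $\clos(\emptyset)$ and hence $c_{\emptyset B} = |B| = r(E)$. Composing these yields $z_\emptyset \geq z_B - r(E) \geq z_E - r(E) = |E| - r(E)$ for every feasible primal $z$, completing the argument without invoking the submodularity rows on this side. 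The degenerate cases of a rank-zero matroid or a free matroid ($r(E)=|E|$) fit the same template by taking $B=\emptyset$ or $B=E$ respectively, so no separate treatment is needed.
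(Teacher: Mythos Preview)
Your proof is correct and follows essentially the same strategy as the paper: the same primal solution $z_S = r(S) + |E| - r(E)$ for the upper bound, and the same chain $z_\emptyset \geq z_E - c_{BE} - c_{\emptyset B}$ through a basis $B$ for the lower bound. One minor slip: the implication ``$r(T')=r(T'\cup\{x\})$ for some $T'\subseteq S$ forces $r(S)=r(S\cup\{x\})$'' requires submodularity of $r$, not merely monotonicity, but the claim itself is a standard matroid fact and the rest of the argument is unaffected.
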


\begin{proof}
In what follows we will let $n = |E|$ and $r = r(E)$.
To show that $b(G_M) \le n-r$ it suffices to show $z_S = r(S) + n-r$ is a
feasible primal solution to the LP $\lpB(G_M)$.  The feasibility of
constraints $(w)$ and $(x)$ follows trivially from the the definition of $G_M$ and
properties of a matroid.  The feasibility of $(y): z_T - z_S \leq
c_{ST} \; \forall S \subset T$ follows from repeated application of submodularity:
\begin{align*}
z_T - z_S = r(T) - r(S) & \leq \sum_{x \in T \setminus S} r(S \cup \{x\})
- r(S) \\
& \leq \sum_{x \in \clos(S)} (r(S \cup \{x\}) - r(S)) \; + \;
 \sum_{x \in T \setminus \clos(S)} r(\{x\}) \leq |T \setminus \clos(S)| = c_{ST}.
\end{align*}
To prove the reverse inequality,
let $S$ be any basis of $M$ 
and note that
$
z_\emptyset = z_E - (z_E - z_S) - (z_S - z_\emptyset) \geq
n - c_{SE} - c_{\emptyset S} = n-r.
$
\end{proof}

The following definition relaxes the notion of a representation for a matroid.
\begin{definition}\label{def-under-rep}
A matroid $M = (E,r)$ with $|E| = n$ is \emph{under-representable} in
$d$ dimensions over a finite field $\F$ if there exists a $d \times n$
matrix with entries in $\F$ and columns indexed by elements of $E$
such that (i) the rows are independent and (ii) if $r(x \cup S) = r(S)$ then
the columns indexed by $x \cup S$ are dependent.
\end{definition}
Observe that if a matrix represents $M$ then it also under-represents $M$.
We next show a relation between under-representations for $M$ over $\F$ and the \emph{scalar} linear rate $\linrate^{\F}_1$,
where the alphabet vector space, over which the encoding functions are required to be linear, is single-dimensional.  Note that $\linrate^{\F} \leq \linrate^{\F}_1$.
The following is the analogue of Theorem~8 in~\cite{RSG2} for our version of the matroid to index coding mapping.
\begin{theorem}\label{thm:beta_rep_matroids}
A matroid $M = (E,r)$ with $|E| = n$ is under-representable in $d$ dimensions over
a finite field $\F$ if and only if $\linrate_1^{\F}(G_M) \le n-d$.
In particular, if $M$ is representable over $\F$ then
$\linrate^{\F}(G_M) = \beta(G_M) = n-r(E)$.
\end{theorem}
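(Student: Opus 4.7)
My plan is to prove both directions by a direct linear-algebraic dictionary between $d$-dimensional under-representations $A$ of $M$ and scalar linear index codes of length $n - d$ for $G_M$. Throughout I fix a $d \times n$ matrix $A$ with columns $a_v$ ($v \in E$) and take the working form of condition~(ii) to be
\[
r(S \cup \{x\}) = r(S) \;\Longrightarrow\; a_x \in \spn\{a_s : s \in S\},
\]
which is the content the code construction actually needs.

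For the forward implication, suppose $A$ under-represents $M$. Let $B \in \F^{(n-d) \times n}$ be a matrix whose rows form a basis of the annihilator of $\operatorname{rowspan}(A) \subseteq \F^n$; since the rows of $A$ are independent, $B$ has exactly $n - d$ rows and $\ker B = \operatorname{rowspan}(A)$. Take the scalar linear code that broadcasts $Bm$ for $m \in \F^n$, so the broadcast length is $n - d$. Decodability at receiver $(x, S)$ is equivalent to demanding that every $v \in \ker B$ with $v_s = 0$ for all $s \in S$ also satisfy $v_x = 0$. Parameterizing $v = y^\trans A$ with $y \in \F^d$, this becomes ``$y \cdot a_s = 0 \ \forall s \in S \Rightarrow y \cdot a_x = 0$,'' and by double-annihilator in $\F^d$ this is exactly $a_x \in \spn\{a_s : s \in S\}$. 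Since every receiver $(x, S)$ of $G_M$ satisfies $r(S \cup \{x\}) = r(S)$, condition~(ii) delivers decodability at every receiver, giving $\linrate_1^\F(G_M) \le n - d$.

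The reverse direction runs the same argument backwards. A scalar linear code of length $\ell \le n - d$ may be assumed to be given by an encoding matrix $L \in \F^{\ell \times n}$ of full row rank. Let $A'$ be any matrix whose rows form a basis of $\ker L$; it is $(n - \ell) \times n$ with independent rows. The decodability of each receiver $(x, S)$, run through the same annihilator calculation, yields $a'_x \in \spn\{a'_s : s \in S\}$ whenever $r(S \cup \{x\}) = r(S)$, so $A'$ under-represents $M$ in $n - \ell \ge d$ dimensions. Deleting $n - \ell - d$ rows (which preserves row independence and the span relations among columns, as projections respect linear combinations) produces a $d$-dimensional under-representation.

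The ``in particular'' clause is an immediate corollary: a representation of $M$ over $\F$ is, in particular, an under-representation with $d = r(E)$, so $\linrate_1^\F(G_M) \le n - r(E)$ by the main equivalence. Combining with
\[
n - r(E) = b(G_M) \le \beta(G_M) \le \linrate^\F(G_M) \le \linrate_1^\F(G_M)
\]
--- obtained from Proposition~\ref{prop:matroidLP}, Theorem~\ref{thm:lplowerbound}, and the obvious $\beta \le \linrate^\F \le \linrate_1^\F$ --- all four quantities collapse to $n - r(E)$. The main obstacle I anticipate is the equivalence between the ``columns dependent'' phrasing of condition~(ii) and the span characterization used above; once it is in hand, both directions are essentially standard duality.
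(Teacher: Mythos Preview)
Your proof is correct and follows essentially the same approach as the paper's: both take the encoding matrix and the under-representation to have mutually annihilating row spaces, and both translate decodability at $(x,S)$ into the column-span statement $a_x \in \spn\{a_s : s \in S\}$. The paper's forward direction uses an explicit decomposition $y = y' + y''$ with $y' \in \ker Q$ in place of your double-annihilator argument, and in the reverse direction it simply takes a code of rate exactly $n-d$ rather than deleting rows, but the content is identical.

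Your flagged obstacle is real, not just anticipated: the literal ``columns indexed by $x \cup S$ are dependent'' phrasing of Definition~\ref{def-under-rep} is strictly weaker than the span condition you use. For example, $\left(\begin{smallmatrix} 1 & 1 & 0 \\ 0 & 0 & 1 \end{smallmatrix}\right)$ would ``under-represent'' $U_{2,3}$ in two dimensions under the dependence reading, yet the resulting one-symbol code $y_1 - y_2$ cannot serve the receiver $(3,\{1,2\})$. However, the paper's own proof invokes exactly the span interpretation (``column $x$ of $R$ is a linear combination of the columns of $R$ indexed by $S$'') without further comment, so this is a minor definitional sloppiness in the paper rather than a gap in your argument; with the span reading of condition~(ii), both directions go through as you wrote them.
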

\begin{proof}
Let $R$ be a $d\times n$ matrix which under-represents $M$ in $d$ dimensions over $\F$.
Let $Q$ be an $(n-d)\times n$ matrix whose rows span the kernel of $R$.
We will show that $Q$ is a valid encoding matrix for $G_M$. Let $y \in \F^{E}$ be some input message set and consider a receiver $(x,S)$,
who wishes to decode $y_x$ from $\{y_z : z\in S\}$ and the broadcast message $Qy$.
Extend $\ker(Q)$ arbitrarily into a basis $B$ for $\F^{E}$ and let $y=y'+y''$ be the unique decomposition according to $B$
such that $y' \in \ker(Q)$. Clearly, $Q y'' = Q y$ since $y'\in \ker(Q)$, hence one can recover $y''$ from the public channel by triangulating $Q$. It remains for the receiver $(x,S)$ to recover $y'_x$. To this end, observe that the rows of $R$ span $\ker(Q)$ and
recall that by Definitions~\ref{def-matroid-index-coding} and~\ref{def-under-rep}, column $x$ of $R$ is a linear combination of the columns of $R$ indexed by $S$. Since $y'$ is in the row-space of $R$ it follows that $y'_x$ is equal to the exact same linear combination of the components of $y'$ indexed by $S$, all of which are known to the receiver. Altogether, the receiver can recover both $y'_x$ and $y''_x$ and obtain the message $x$. As this holds for any receiver, we conclude that $Q$ is a valid encoding matrix and thus $\linrate^{\F}_1(G_M)\leq n-d$.  When $d=r(E)$ the inequality is tight because this upper bound coincides with the lower bound given by Proposition~\ref{prop:matroidLP}.


Conversely, suppose that there exists a scalar linear code for $G_M$ over $\F$ with
rate $n-d$, and let $Q$ be a corresponding $(n - d) \times n$
encoding matrix of rank $n-d$.  Let $R$ be a $d
\times n$ matrix whose rows span the kernel of $Q$. We claim that $R$ under-represents $M$. Indeed, consider a
receiver $(x,S)$.  It is easy to verify that this receiver has a linear decoding function\footnote{This follows e.g.\ from decomposing $y$ as above into $y'+y''$ where $y' \in \ker(Q)$. By definition $y''_x$ is a linear combination of the $Q y$ entries. Similarly, $y'_x$ must be a linear combination
of $\{y_z:z\in S\}$, otherwise there would exist some $y\in\ker(Q)$ with $y_x \neq 0$ and $y_z = 0$ for all $z\in S$, making it indistinguishable to this receiver from $y=0$.} of the form $u^\trans \cdot
Q y + v^\trans \cdot y_S$ for some vectors $u,v$, where $y_S$ is the vector formed by restricting $y$ to the indices of $S$.
  As $Q$ is a valid encoding matrix for $G_M$, this evaluates to $y_x$ for any $y\in \F^E$.
In particular, if $y^\trans$ is a row of $R$ then $Q y  = 0$ and so $v^\trans \cdot y_S = y_x$, and applying this argument to every row of $R$ verifies
that column $x$ of $R$ is a linear combination of the columns of $R$ indexed by $S$ (with coefficients from $v$).  Since this holds for any receiver we have
that $R$ under-represents $M$, as required.
\end{proof}


We conclude this section with a result that will be useful in establishing
lower bounds on the value of the LP for $G_M$ with a given constraint matrix $A$.

\begin{theorem}
Suppose that $M=(E,r)$ is a matroid and $A$ is a matrix such that
$A \ones = 0$ and
$A \rankvec(M) \not\geq 0$.  If the linear program in Figure \ref{fig:lp} is
instantiated with constraint matrix $A$, then the value of the LP
is strictly greater than $|E| - r(E)$.
\label{thm:addIneq}
\end{theorem}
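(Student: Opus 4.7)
The plan is to show that any candidate minimizer with $z_\emptyset = |E|-r(E)$ is pinned down coordinatewise to $\rankvec(M) + (|E|-r(E))\ones$, and then use the two hypotheses on $A$ to rule this vector out. More precisely, writing $n=|E|$ and $r=r(E)$ for brevity, first I would establish a rigidity statement: any $z$ satisfying constraints $(w)$ and $(x)$ of Figure~\ref{fig:lp} with $z_\emptyset=n-r$ is forced to satisfy $z_S = r(S)+(n-r)$ for every $S\subseteq E$. Assuming this rigidity, a feasible $z$ for the full LP of value $n-r$ would have
\[
Az \;=\; A\rankvec(M) + (n-r)\,A\ones \;=\; A\rankvec(M)
\]
by the hypothesis $A\ones=0$, and this fails to be nonnegative by the hypothesis $A\rankvec(M)\not\geq 0$, contradicting feasibility. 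The LP itself is feasible (the constant vector $z\equiv n$ satisfies $(w)$, $(x)$, and $Az=nA\ones=0$), and its objective is bounded below by $n-r$ since the telescoping lower bound derived in the proof of Proposition~\ref{prop:matroidLP} uses only constraints $(w)$ and $(x)$. Hence the LP attains its finite minimum, and that minimum must strictly exceed $n-r$.

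For the rigidity step, I would use two symmetric single-element telescoping arguments. Recall from the discussion surrounding Definition~\ref{def-matroid-index-coding} that the index-coding closure in $G_M$ coincides with the matroid closure, so $c_{ST}=|T\setminus\clos_M(S)|$. Fix $S\subseteq E$ and choose a maximal chain $\emptyset=T_0\subset T_1\subset\cdots\subset T_{|S|}=S$ whose successive differences are singletons. Telescoping constraint $(x)$ along this chain yields
\[
z_S - z_\emptyset \;\leq\; \sum_{i=1}^{|S|} c_{T_{i-1}T_i}.
\]
Each single-element step contributes $1$ when the new element lies outside $\clos_M(T_{i-1})$ and $0$ otherwise, so the right-hand side equals $r(S)-r(\emptyset)=r(S)$. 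Substituting $z_\emptyset=n-r$ gives the upper bound $z_S \leq r(S) + (n-r)$. Running the analogous single-element chain from $S$ up to $E$ and combining with $z_E=n$ from constraint $(w)$ gives $z_E - z_S \leq r(E) - r(S)$, i.e.\ $z_S \geq r(S) + (n-r)$. The two inequalities together pin $z_S$ to exactly $r(S)+(n-r)$.

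The main obstacle I expect is ensuring that both telescoped bounds are tight simultaneously; this is exactly where the matroid axioms enter essentially. Because adjoining a single element changes the rank by $0$ or $1$, every single-element chain has telescoped $c$-cost equal to the rank difference between its endpoints, and so the upper and lower bounds on $z_S$ coincide with no slack. Once the rigidity is secured, the remaining work is mechanical: the algebraic collapse $Az = A\rankvec(M)$ (via $A\ones=0$) combined with the hypothesis $A\rankvec(M)\not\geq 0$ excludes every minimizer of value $n-r$, and LP attainment upgrades that exclusion to the strict inequality asserted by the theorem.
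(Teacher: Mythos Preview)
Your argument is correct and takes a genuinely different route from the paper's. The paper proves the strict inequality by exhibiting an explicit dual solution: it selects a row $q$ of $A$ witnessing $A\rankvec(M)\not\geq 0$, normalizes, and then builds dual variables $(w,x,y)$ using matroid bases (specifically, for each set $S$ a basis $b(S)$ with $b(S)\cap S$ a maximal independent subset of $S$) so that the dual objective evaluates to $|E|-r(E)-\sum_S a_{qS}r(S) > |E|-r(E)$. You instead work on the primal side, proving a rigidity lemma: any $z$ satisfying $(w)$ and $(x)$ with $z_\emptyset=n-r$ is forced to equal $\rankvec(M)+(n-r)\ones$, and then the hypotheses $A\ones=0$ and $A\rankvec(M)\not\geq 0$ kill feasibility of this unique candidate. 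Your approach is arguably more transparent, since it isolates exactly why the rank vector is the only obstruction, and it avoids the bookkeeping of the dual construction; the paper's approach, on the other hand, is constructive and produces a dual certificate that fits the surrounding machinery of the paper (where dual solutions are the currency for the lexicographic-product supermultiplicativity theorem). Both rely on the same matroid fact---that single-element rank increments are $0$ or $1$, matching the $c_{ST}$ values along singleton chains---but they exploit it on opposite sides of LP duality.
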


%
%
%

\newcommand{\splus}{\mathcal{S}^+}
\newcommand{\sminus}{\mathcal{S}^-}
\begin{proof}
We will give a dual solution $(w,x,y)$
to the LP with value strictly greater than $|E|-r(E)$.

Recalling the hypothesis $A \rankvec(M) \not\geq 0$, let
$q$ be a row of $A$ such that $\sum_{S\subseteq E} a_{qS} r(S) < 0$.
Let $\splus = \{S \subseteq E \mid a_{qS} > 0, \, S \ne E,\emptyset\}$ and
$\sminus = \{S \subseteq E \mid a_{qS} < 0, \, S \ne E,\emptyset\}$.   Note that the hypothesis
that $A \ones = 0$ implies that $a_{q\emptyset} + \sum_{S \in \splus}
a_{qS} = - \left(a_{qE} + \sum_{S \in \sminus} a_{qS} \right)$.
Assume that $A$ is scaled
so $a_{q\emptyset} + \sum_{S \in \splus}
a_{qS} = - \left(a_{qE} + \sum_{S \in \sminus} a_{qS} \right) = 1$.
This assumption is without loss of generality since
$a_{qE} + \sum_{S \in \sminus} a_{qS}$ is strictly
negative, as can be seen from the following calculation:
\begin{align*}
r(E)\bigg( a_{qE} + \sum_{S \in \sminus} a_{qS}\bigg) &\leq
a_{qE} r(E) + \sum_{S \in \sminus} a_{qS} r(S)
\leq
a_{qE} r(E) + \sum_{S \in \sminus} a_{qS} r(S) + \sum_{S \in \splus} a_{qS} r(S) \\
&=
\sum_{S} a_{qS} r(S) \;\; < \;\; 0\,.
\end{align*}

Define the dual vector $y$ by setting
$y_q=1$ and $y_{q'} =0$ for rows $q'\ne q$ of $A$.
To define the dual vector $x$, let us first
associate to every set $S \subseteq E$ a matroid basis
$b(S)$ such that the set $m(S) = b(S) \cap S$ is
a maximal independent subset of $S$, i.e.\ $|m(S)| = r(m(S)) = r(S)$.
Let $u(S) = S \cup b(S)$.
For every $S \in \splus$, let $x_{\emptyset m(S)} = x_{m(S) S} = a_{qS}$
and for every $S \in \sminus$, let
$x_{S u(S)} = x_{u(S) E} = -a_{qS}$.  Set all
other values of $x_{ST}$ to zero.  Finally, set $w= 1$.  By construction,
$(w,x,y)$ satisfies all of the dual constraints.
Using the relations $c_{\emptyset m(S)} = r(S), \,
c_{S u(S)} = r(E) - r(S), \, c_{m(S) S} = c_{u(S)E} = 0$,
we find that the dual LP objective value is
\begin{align*}
|E| \, w - \sum_{S \subset T} c_{ST} x_{ST}
&=
|E| - \sum_{S \in \splus} (c_{\emptyset m(S)} + c_{m(S) S}) a_{qS}
- \sum_{S \in \sminus} (c_{S u(S)} + c_{u(S) E})(- a_{qS}) \\
&=
|E| - \sum_{S \in \splus} r(S) a_{qS} +
\sum_{S \in \sminus} (r(E) - r(S)) a_{qS} \\
& = |E| + \sum_{S \in \sminus}a_{qS}r(E) - \sum_S a_{qS} r(S) +
a_{q\emptyset}r(\emptyset) + a_{qE}r(E)\\
& = |E| - r(E) - \sum_S a_{qS} r(S).
\end{align*}
By hypothesis $\sum_S a_{qS} r(S) < 0$, and the
proposition follows.
\end{proof}

\section{Separation between linear and non-linear rates}
\label{sec:nonlinear-linear-gap}

In this section we prove Theorem~\ref{thm:NWCgap}.  To this end we will
first show that the linear rate over a field of even characteristic is
strictly better than the linear rate over a field of odd
characteristic for the index coding problem associated to the Fano
matroid, and that the reverse relation holds
for the non-Fano matroid.  Then we will take
the lexicographic product of the two index codes to get a gap between
the linear and non-linear coding rates, and then use
lexicographic products again to amplify that gap.

The \emph{Fano matroid}, denoted $\fano$, and the \emph{non-Fano
  matroid}, denoted $\nonfano$, are 7 element, rank 3 matroids.  The seven columns of the matrix
\[
\begin{pmatrix}
1 & 0 & 0 & 0 & 1 & 1 & 1 \\
0 & 1 & 0 & 1 & 0 & 1 & 1 \\
0 & 0 & 1 & 1 & 1 & 0 & 1
\end{pmatrix}
\]
constitute a linear representation of the Fano matroid when $\chr(\F) =
2$ and one for the non-Fano matroid when  $\chr(\F) \neq 2$.  We will use $\;\groundset =
\{100,010,001,110,101,011,111\}$ to index the elements of the two
matroids.  Further let $\odds \subset \groundset$ be the vectors with odd
hamming weight, let $\oddsm$ be the vectors with hamming weight one and let $i+j$ for $i, j \in \groundset$ be the bitwise
addition of $i,j$.

It is well known that the Fano matroid is representable only in a
field of characteristic 2, and the non-Fano matroid is representable
in any field whose characteristic is different from 2 but not
in fields of characteristic 2.  We use
a generalization of this fact
to prove the following theorem that directly implies Theorem~\ref{thm:NWCgap}.

\begin{theorem}[Separation Theorem]
Let $G = G_\fano \lexp G_\nonfano$.
There exists some $\epsilon>0$ such that $\beta(G^{\lexp n}) = 16^n$
whereas $\linrate(G^{\lexp n}) \ge (16 + \epsilon)^n$ for all $n$.
\label{thm:separation}
\end{theorem}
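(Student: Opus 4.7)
My plan is to derive the equality $\beta(G^{\lexp n})=16^n$ directly from the machinery already in hand, and then to produce the strict multiplicative gap for $\linrate$ by building two tight homomorphic constraint schemas---one valid for linear codes over fields of odd characteristic, the other for even---that are quantitatively violated on the rank vector of the ``wrong'' matroid.

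First I would handle $\beta$. Both $\fano$ and $\nonfano$ have $|E|=7$ and $r(E)=3$, so Proposition~\ref{prop:matroidLP} gives $b(G_\fano)=b(G_\nonfano)=4$. Since $\fano$ is representable over every field of characteristic~$2$ and $\nonfano$ over every field of odd characteristic, Theorem~\ref{thm:beta_rep_matroids} yields $\beta(G_\fano)=\beta(G_\nonfano)=4$. Iterating Theorem~\ref{thm:beta-submult} gives the upper bound
\[
\beta(G^{\lexp n})\le \beta(G)^n \le \bigl(\beta(G_\fano)\,\beta(G_\nonfano)\bigr)^n=16^n,
\]
while Theorem~\ref{thm:lexprod} applied to the (tight, homomorphic) submodularity schema, together with Theorem~\ref{thm:lplowerbound}, gives the matching lower bound
\[
\beta(G^{\lexp n})\ge b(G^{\lexp n})\ge \bigl(b(G_\fano)\,b(G_\nonfano)\bigr)^n=16^n.
\]

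For the linear lower bound the critical ingredient is the construction, in the sense of Definition~\ref{def:schema}, of two tight homomorphic constraint schemas $A^{\mathrm{odd}}$ and $A^{\mathrm{even}}$ whose rows are valid information inequalities for tuples of random variables arising as $\F$-linear functions of uniform independent messages when $\chr\F$ is odd or even respectively, and such that $A^{\mathrm{odd}}\rankvec(\fano)\not\ge 0$ and $A^{\mathrm{even}}\rankvec(\nonfano)\not\ge 0$. Informally, each row should be a quantitative form of the non-realizability of $\fano$ in odd characteristic (resp.\ of $\nonfano$ in characteristic~$2$). Given such schemas, Theorem~\ref{thm:addIneq} supplies strict gaps $b^{\mathrm{odd}}(G_\fano)\ge 4+\delta$ and $b^{\mathrm{even}}(G_\nonfano)\ge 4+\delta$ for some uniform $\delta>0$, while $b^{\mathrm{odd}}(G_\nonfano),\,b^{\mathrm{even}}(G_\fano)\ge 4$ automatically: those LPs are minimizations and adding constraints that are merely valid can only raise the optimum above the submodularity bound of~$4$.

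Now fix an arbitrary field $\F$ and let $\star\in\{\mathrm{odd},\mathrm{even}\}$ match the parity of $\chr\F$. Every $\F$-linear code for $G^{\lexp n}$ yields a feasible primal solution of the LP built from $A^\star$, so applying Theorem~\ref{thm:lexprod} twice (first to split $G^{\lexp n}$ into $n$ copies of $G$, then to split $G$ into $G_\fano\lexp G_\nonfano$) gives
\[
\linrate^{\F}(G^{\lexp n}) \;\ge\; b^{\star}(G^{\lexp n}) \;\ge\; \bigl(b^{\star}(G_\fano)\,b^{\star}(G_\nonfano)\bigr)^n \;\ge\; \bigl((4+\delta)\cdot 4\bigr)^n \;=\; (16+4\delta)^n.
\]
Setting $\epsilon=4\delta$ and infimizing over $\F$ delivers $\linrate(G^{\lexp n})\ge(16+\epsilon)^n$, as required. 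The principal obstacle is plainly the construction of $A^{\mathrm{odd}}$ and $A^{\mathrm{even}}$: one must convert the classical non-representability theorems for $\fano$ and $\nonfano$ into linear inequalities in entropy-like coordinates that simultaneously (i)~hold for every tuple of $\F$-linear random variables over fields of the prescribed parity, (ii)~satisfy the tightness and homomorphism axioms of Definition~\ref{def:schema}, and (iii)~are quantitatively violated by the rank vector of the ``wrong characteristic'' matroid so that Theorem~\ref{thm:addIneq} can be invoked. This is the content of the remainder of Section~\ref{sec:nonlinear-linear-gap}.
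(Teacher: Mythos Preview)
Your proposal is correct and follows essentially the same approach as the paper: the computation of $\beta(G^{\lexp n})=16^n$ via Proposition~\ref{prop:matroidLP}, Theorem~\ref{thm:beta_rep_matroids}, and the sub/super-multiplicativity theorems is identical, and your plan for the linear lower bound---build a tight homomorphic constraint schema for each characteristic parity (containing submodularity) that is violated on the ``wrong'' matroid's rank vector, invoke Theorem~\ref{thm:addIneq} for a strict gap on one factor, and amplify via Theorem~\ref{thm:lexprod}---is exactly what the paper does. The paper carries out the deferred construction via Theorems~\ref{thm:fano_inequality} and~\ref{thm:nonfano_inequality} (the raw inequalities), Lemma~\ref{lem:tighten} (tightening), Lemma~\ref{lem:homo} (homomorphic extension), and Lemma~\ref{lem:schema-sum} (disjoint union with submodularity), which is precisely the content you flag as the principal obstacle.
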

The fact that $\beta(G^{\lexp n}) = 16^n$ will be a straightforward application of
Proposition~\ref{prop:matroidLP} and Theorem~\ref{thm:beta_rep_matroids}.
The lower bound on the linear rate however will require considerably more effort. In
order to bound $\linrate$ from below we will extend the LP $\lpB$ to
two LPs, one of which will be a lower bound for linear codes over
fields with odd characteristic and the other for linear codes over
even characteristic.  Each one will supplement the matrix $A$ in the
LP with a set of constraints, one set derived from dimension
inequalities based on the representation
of the the Fano matroid
and the other from the non-Fano matroid.  The LP
that gives a lower bound for linear codes over a field with even characteristic will be used to
show that the linear broadcast rate of $G_\nonfano$ over a field of
even characteristic is strictly greater than four, and the LP for odd
characteristic will imply the corresponding result for $G_\fano$.
Furthermore, the constraints will satisfy the conditions of Theorem~\ref{thm:lexprod}.  Putting this all together implies that when
we take the lexicographic product of the Fano and non-Fano index
coding problems, no linear code
is as good as one that combines linear codes over $\F_2$ and $\F_3$.

Before explaining how we derive these constraints, we introduce a bit
of notation.  If $\{V_i\}_{i \in I}$ are subspaces of a
vector space $V$, let the span of $V_i$ and $V_j$ be denoted $V_i+V_j$
and let $\ds(\{V_i\}_{i \in I})$ be the dimension
of the span of $\{V_i\}_{i \in I}$.  Also, let $\dimvec(\{V_i\}_{i \in I})$ be a
$2^{|I|}$ dimensional vector indexed by the subsets of $I$ such that
the coordinate indexed by $S$ is $\ds(\{V_i\}_{i \in S})$.
We let $V_1 \oplus \cdots \oplus V_k$ denote the sum of
mutually complementary subspaces $V_1,\ldots,V_k$.
If $V = V_1 \oplus \cdots \oplus V_k$ then $V$
is isomorphic to the vector space $\prod_{i=1}^k V_i$
via the mapping $(v_1,\ldots,v_k) \mapsto v_1 + \cdots + v_k$.
In this case, for an index set $S \subseteq \{1,\ldots,k\}$, we will
use $\pi_S$ to denote the projection function
$V \rightarrow \oplus_{i \in S} V_i$, i.e.\ the function
that maps an element $v = \sum_{i=1}^k v_i$ to the
element $\pi_S(v) = \sum_{i \in S} v_i.$

The fact that the Fano matroid can be represented over $\F_2$ and the
non-Fano matroid cannot tells us something about dimension dependencies
that can occur in $\F_2$.  The following lemma is extracting the critical dimension
relations that distinguish vector spaces over $\F$ with $\chr(\F) =2$.

\begin{lemma}
\label{lem:char2}
Let $V = V_1 \oplus V_2 \oplus V_3$ be a vector space
over a field $\F$, and suppose $W \subset V$ is a linear subspace
that is complementary to each of $V_1 \oplus V_2, V_1 \oplus V_3,
V_2 \oplus V_3$.  Then
\begin{equation} \label{eq:3proj}
\ds \left( \pi_{12}(W) , \pi_{13}(W) , \pi_{23}(W) \right) =
\begin{cases}
2 \ds(W) & \mbox{if $\chr(\F)= 2$} \\
3 \ds(W) & \mbox{if $\chr(\F)\neq 2$}.
\end{cases}
\end{equation}
\end{lemma}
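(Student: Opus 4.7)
The plan is to reduce the computation to a direct analysis of the kernel of the natural sum map on the three projections, and exhibit explicitly where characteristic $2$ forces an extra dependency.

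First I would observe that the complementarity hypothesis pins down all the relevant dimensions. Since $W$ is complementary to $V_i \oplus V_j$, we have $\ds(W) = \ds(V) - \ds(V_i \oplus V_j) = \ds(V_k)$ for each $\{i,j,k\} = \{1,2,3\}$, hence $\ds(V_1) = \ds(V_2) = \ds(V_3) = \ds(W) =: d$. Moreover, $W \cap V_k \subseteq W \cap (V_i \oplus V_k) = 0$, so the projection $\pi_{ij}\big|_W$ (which has kernel $W \cap V_k$) is injective. In particular, each $\pi_{ij}(W)$ is $d$-dimensional. Using the complementarity of $W$ with $V_2 \oplus V_3$, the projection onto $V_1$ restricts to an isomorphism $W \to V_1$, and therefore there exist isomorphisms $A : V_1 \to V_2$ and $B : V_1 \to V_3$ such that
\[
W \;=\; \{\, v + A v + B v \,:\, v \in V_1 \,\}.
\]

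Next I would write out the three projections in coordinates $(u_1,u_2,u_3) \in V_1 \times V_2 \times V_3$:
\[
\pi_{12}(W) = \{(v, Av, 0) : v \in V_1\},\;\; \pi_{13}(W) = \{(v, 0, Bv) : v \in V_1\},\;\; \pi_{23}(W) = \{(0, Av, Bv) : v \in V_1\}.
\]
The quantity we want is the dimension of the image of the summation map
\[
\Sigma : \pi_{12}(W) \oplus \pi_{13}(W) \oplus \pi_{23}(W) \longrightarrow V, \qquad (a,b,c) \mapsto a+b+c,
\]
whose domain has dimension $3d$. A triple $((v_1, Av_1, 0),(v_2, 0, Bv_2),(0, Av_3, Bv_3))$ lies in $\ker \Sigma$ iff
\[
v_1 + v_2 = 0,\qquad A(v_1 + v_3) = 0,\qquad B(v_2 + v_3) = 0.
\]
Since $A$ and $B$ are injective, these reduce to $v_2 = -v_1$, $v_3 = -v_1$, and $v_2 = -v_3$, which combine to $2 v_1 = 0$.

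Finally I would read off the two cases. If $\chr(\F) \neq 2$ then $2 v_1 = 0$ forces $v_1 = v_2 = v_3 = 0$, so $\ker \Sigma = 0$ and the image has dimension $3d$. If $\chr(\F) = 2$ the constraint $v_1 = v_2 = v_3$ is the only one, giving $\ker \Sigma \cong V_1$ of dimension $d$, so the image has dimension $2d$. In either case the image equals $\pi_{12}(W) + \pi_{13}(W) + \pi_{23}(W)$, which yields \eqref{eq:3proj}. I do not expect a significant obstacle here: once the parameterization $W = \{v + Av + Bv\}$ is in place, everything reduces to three linear equations, and the only place characteristic enters is the collapse of $v = -v$.
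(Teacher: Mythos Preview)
Your proof is correct. The overall strategy is close to the paper's, but the packaging differs in a useful way. The paper treats the two characteristics with separate arguments: in characteristic $2$ it uses the identity $\pi_{12}(w)+\pi_{13}(w)=\pi_{23}(w)$ to get the upper bound and then checks $\pi_{12}(W)\cap\pi_{13}(W)=0$ for the lower bound; in characteristic $\neq 2$ it shows directly that the three projections are mutually complementary by taking $w_1,w_2,w_3\in W$ with $\pi_{23}(w_1)+\pi_{13}(w_2)+\pi_{12}(w_3)=0$ and deducing each $w_i=0$. You instead parameterize $W=\{v+Av+Bv\}$ via isomorphisms $A,B$ (which the complementarity hypotheses justify, as you observe), and then compute the kernel of the single sum map $\Sigma$, obtaining the system $v_2=-v_1,\;v_3=-v_1,\;v_2=-v_3$ whose solution space depends only on whether $2=0$. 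This unifies the two cases into one rank--nullity computation and makes the role of characteristic completely transparent; the paper's version avoids introducing $A,B$ and is marginally more elementary, at the cost of running two distinct arguments.
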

\begin{proof}
Recalling that $V$ is isomorphic to $ \prod_{i=1}^3 V_i$, we will
write elements of $V$ as ordered triples.
Our assumption that $W$ is complementary to each
of $V_1 \oplus V_2, V_1 \oplus V_3, V_2 \oplus V_3$
implies that a nonzero element of $W$ has three nonzero
coordinates, a fact that we will use in both cases of the lemma.

If $\chr(\F) = 2$, then every vector $(x,y,z) \in V$ satisfies
\[
\pi_{12}(x,y,z) + \pi_{13}(x,y,z) = (x,y,0) + (x,0,z) = (0,y,z) = \pi_{23}(x,y,z)
\]
hence $\pi_{12}(W) + \pi_{13}(W) = \pi_{23}(W)$.  Consequently
\[
\ds \left(  \pi_{12}(W) ,\pi_{13}(W) , \pi_{23}(W) \right) =
\ds \left( \pi_{12}(W) , \pi_{13}(W) \right) \leq 2 \ds(W).
\]
To prove the reverse inequality we observe that
$\pi_{12}(W)$ and $\pi_{13}(W)$ are complementary,
since every nonzero element of
$\pi_{12}(W)$ is of the form $(x,y,0)$ with $x,y \neq 0$,
whereas every nonzero element of $\pi_{13}(W)$ is of the form
$(x,0,z)$ with $x,z \neq 0$, and hence $\pi_{12}(W) \cap \pi_{13}(W) = \{0\}$.

When $\chr(\F) \neq 2$, we prove Equation~\eqref{eq:3proj} by
showing that $\pi_{12}(W), \pi_{13}(W), \pi_{23}(W)$
are mutually complementary.  Consider any three vectors
$w_1=(x_1,y_1,z_1)$, $w_2=(x_2,y_2,z_2)$, and $w_3=(x_3,y_3,z_3)$,
all belonging to $W$, such that
\[0 = \pi_{23}(x_1,y_1,z_1) + \pi_{13}(x_2,y_2,z_2) + \pi_{12}(x_3,y_3,z_3)
= (x_2+x_3,y_1+y_3,z_1+z_2)\,.\]
This implies that $x_2+x_3=0$, so the first coordinate of
$w_2+w_3$ is zero.  However, the zero vector is the
only vector in $W$ whose
first coordinate is zero, hence $w_2+w_3=0$.
Similarly, $w_1+w_3=0$ and $w_1+w_2=0$.  Now using
the fact that $2$ is invertible in $\F$, we deduce
that
$w_1 = \frac12[(w_1+w_2) + (w_1+w_3) - (w_2+w_3)] = 0$, and similarly $w_2=0$ and $w_3=0$.
Thus, the only way to express the zero vector as a sum of vectors
in $\pi_{12}(W), \pi_{13}(W), \pi_{23}(W)$ is if all three summands
are zero, i.e.\ those three subspaces are mutually complementary as claimed.
\end{proof}

\subsection{Linear codes over fields of characteristic two}\label{sec:linear-even}
This section provides the ingredients for
proving that $\linrate^\F(G_\fano) > 4$ for $\F$ with $\chr(\F) = 2$.

\begin{lemma}[Conditional Even Characteristic Inequality]
Suppose $\{V_i\}_{ i \in \groundset}$ are 7 subspaces of a vector space
over $\F$ such that $\chr(\F) = 2$ and
\begin{enumerate}[(i)]
\item $\ds(\{V_i\}_{i \in \odds}) = \ds(\{V_i\}_{i\in \oddsm})$
\item $\ds(V_i, V_j,V_k) = \ds(V_i) + \ds(V_j) + \ds(V_k) \; \forall i,j,k \in \odds$
\item $\ds(V_i,V_j,V_{i+j}) = \ds(V_i, V_j) \; \forall i,j \in
  \odds$ \label{hyppairs}
\end{enumerate}
Then $\ds(V_{110},V_{101},V_{011}) \le 2\ds(V_{111})$.
\label{lem:conditional_fano_inequality}
\end{lemma}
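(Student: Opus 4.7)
The strategy is to identify each of $V_{110}, V_{101}, V_{011}$ as a subspace of a corresponding two-coordinate projection of $V_{111}$ inside the ambient direct sum $V_{100} \oplus V_{010} \oplus V_{001}$, and then invoke Lemma~\ref{lem:char2} (with $\chr(\F)=2$) to bound the dimension of the sum of these projections by $2\ds(V_{111})$.

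First I would set up the ambient geometry. Hypothesis (ii) applied to the triple $\{100,010,001\}$ gives the direct sum $V := V_{100} \oplus V_{010} \oplus V_{001}$; hypothesis (i) forces $V_{111} \subseteq V$; and the remaining instances of (ii) (applied to the triples $\{i,j,111\}$ with $i,j\in\oddsm$ distinct) yield $V_{111} \cap (V_i + V_j) = 0$ for each such pair. Consequently each coordinate projection $\pi_i|_{V_{111}}$ is injective. Setting $V'_i := \pi_i(V_{111})$, one has $\dim V'_i = \dim V_{111}$, and inside $V' := V'_{100}\oplus V'_{010} \oplus V'_{001}$ the subspace $V_{111}$ is genuinely complementary to each pairwise sum $V'_i+V'_j$, while the projections of $V_{111}$ are unchanged by the restriction from $V$ to $V'$. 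Lemma~\ref{lem:char2} then gives
\[
\ds\bigl(\pi_{12}(V_{111}),\,\pi_{13}(V_{111}),\,\pi_{23}(V_{111})\bigr) \;=\; 2\ds(V_{111}),
\]
where the indices $1,2,3$ correspond to $V_{100},V_{010},V_{001}$.

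Second I would prove the containment $V_{110} \subseteq \pi_{12}(V_{111})$ (and analogously $V_{101} \subseteq \pi_{13}(V_{111})$, $V_{011} \subseteq \pi_{23}(V_{111})$). The key is that hypothesis (iii) can be invoked along \emph{both} decompositions $110 = 100+010 = 001+111$: the first yields $V_{110} \subseteq V_{100}+V_{010}$, the second yields $V_{110} \subseteq V_{001}+V_{111}$. For any $v\in V_{110}$ we can therefore write $v = u + w$ with $u\in V_{001}$ and $w\in V_{111}$, and decompose $w = w_1+w_2+w_3$ with $w_i\in V_i$. The $V_{001}$-component of $v$ in the direct sum $V$ is $u+w_3$, which must vanish since $v \in V_{100}+V_{010}$; hence $v = w_1+w_2 = \pi_{12}(w) \in \pi_{12}(V_{111})$. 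Combining the two steps,
\[
\ds(V_{110},V_{101},V_{011}) \;\leq\; \ds\bigl(\pi_{12}(V_{111}),\pi_{13}(V_{111}),\pi_{23}(V_{111})\bigr) \;\leq\; 2\ds(V_{111}).
\]

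The main substance lies in the second step: hypothesis (iii) must be applied along both ways of writing each even-weight label as a bitwise sum of odd-weight labels in order to trap $V_{110}$ inside the $\pi_{12}$-projection of $V_{111}$ (and similarly for the others). The first step is essentially bookkeeping; the only subtlety there is that Lemma~\ref{lem:char2} as stated requires \emph{full} complementarity of $W$ with each $V_i\oplus V_j$, which may fail when $\dim V_{111}<\dim V_i$ for some $i\in\oddsm$, but passing to $V'$ resolves this cleanly and without altering the projections of $V_{111}$.
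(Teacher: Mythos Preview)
Your proof is correct and follows essentially the same route as the paper: apply Lemma~\ref{lem:char2} with $W=V_{111}$ inside $V_{100}\oplus V_{010}\oplus V_{001}$, then use hypothesis~(iii) along both decompositions of each even-weight label to show $V_{110}\subseteq \pi_{12}(V_{111})$, etc. The paper phrases the containment step via the identity $\pi_{12}(V_{111})=(V_{001}+V_{111})\cap(V_{100}+V_{010})$ rather than your coordinate computation, and it applies Lemma~\ref{lem:char2} directly without your detour through $V'$ (its proof of that lemma only uses trivial intersection, not full complementarity), but these are cosmetic differences.
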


\begin{proof}[Proof of Lemma~\ref{lem:conditional_fano_inequality}]
Hypotheses (i) and (iii) of the lemma imply that all 7 subspaces are
contained in the span of $V_{100},V_{010},V_{001}$.
Moreover, hypothesis (ii) implies
that $V_{100},V_{010},V_{001}$ are mutually
complementary and that $V_{111}$ is complementary
to each of $V_{100}+V_{010}, \,
V_{100}+V_{001}, \, V_{010}+V_{001}$.
Thus, we can apply
Lemma~\ref{lem:char2} with $V = V_{100}\oplus V_{010} \oplus V_{001}$
and $W=V_{111}$,  yielding the equation
$ \ds(\pi_{12}(V_{111}),\pi_{23}(V_{111}), \pi_{13}(V_{111}))
= 2\ds(V_{111}).
$

We claim that $\pi_{12}(V_{111}) = (V_{001}+V_{111}) \cap (V_{100}+V_{010})$.
To see this, take an arbitrary element $w \in V_{111}$ having a unique
representation of the form
$x+y+z$ with $x \in V_{100}, y \in V_{010}, z \in V_{001}$.
By definition $\pi_{12}(w) = x+y = w-z,$ from which it can
be seen at once that $\pi_{12}(w)$ belongs to both
$V_{100} + V_{010}$ and $V_{001}+V_{111}$.  Conversely,
any element $v \in (V_{001}+V_{111}) \cap (V_{100}+V_{010})$
can be expressed as $v=w-z$ where $w \in V_{111}, z \in V_{001}$
but it can also be expressed as $v=x+y$ where
$x \in V_{100}, y \in V_{010}$.  Consequently,
$w = x+y+z$ and $v = \pi_{12}(w)$.

Hypothesis (iii) implies that $V_{110}$ is contained in both
$V_{001}+V_{111}$ and $V_{100}+V_{010}$, hence
$V_{110} \subseteq \pi_{12}(V_{111})$.  Similarly
$V_{101} \subseteq \pi_{13}(V_{111})$ and
$V_{011} \subseteq \pi_{23}(V_{111})$.
Hence
$\ds(V_{110},V_{101},V_{011}) \le \ds(\pi_{12}(V_{111}),\pi_{23}(V_{111}), \pi_{13}(V_{111}))
= 2\ds(V_{111})$, as desired.
\end{proof}

In what follows we will transform the conditional inequalities given
in the lemma above to a general inequality that
applies to any 7 subspaces of a vector space over a field of characteristic $2$
by using the following approach. We will start with arbitrary
subspaces and then repeatedly modify them until they satisfy the conditions of
Lemma~\ref{lem:conditional_fano_inequality}. At that point
the result in this conditional lemma will imply an inequality
involving the dimensions of the modified subspaces, which we will
express in terms of the dimensions of the original subspaces.

\begin{theorem}[Even Characteristic Inequality]\label{thm:fano_inequality}
There exists a $2^{7}$-dimensional vector
$\Lambda_{\mathrm{even}}$ such that for any 7 subspaces $\{V_i\}_{ i \in \groundset}$ of a vector space
over $\F$ with $\chr(\F) = 2$,
$$ \Lambda_{\mathrm{even}} \cdot \dimvec(\{V_i\}_{i \in \groundset}) \ge 0
\text{ and }\Lambda_{\mathrm{even}} \cdot \rankvec(\nonfano)< 0.$$
\end{theorem}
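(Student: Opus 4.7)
The plan is to lift Lemma~\ref{lem:conditional_fano_inequality} to an unconditional inequality valid for any $7$-tuple of subspaces in a field of characteristic $2$, by adding nonnegative ``defect'' terms that measure how far hypotheses (i)--(iii) are from being satisfied. The vector $\Lambda_{\mathrm{even}}$ will encode the main inequality $2\ds(V_{111}) - \ds(V_{110},V_{101},V_{011}) \ge 0$ together with suitable nonnegative multiples of the defects. Two features are crucial: each defect is nonnegative for any collection of subspaces over any field (so the combined inequality remains valid in characteristic $2$), and each defect vanishes on $\rankvec(\nonfano)$, leaving $\Lambda_{\mathrm{even}} \cdot \rankvec(\nonfano)$ equal to the ``raw'' discrepancy $2\,r(\{111\}) - r(\{110,101,011\}) = 2 - 3 = -1 < 0$.

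We identify the defects by reading off each hypothesis: hypothesis (i) gives the monotonicity defect $\ds(\{V_i\}_{i\in\odds}) - \ds(\{V_i\}_{i\in\oddsm}) \ge 0$; hypothesis (ii) gives, for each triple $i,j,k \in \odds$, the submodularity defect $\ds(V_i) + \ds(V_j) + \ds(V_k) - \ds(V_i,V_j,V_k) \ge 0$; and hypothesis (iii) gives, for each pair $i,j \in \odds$, the monotonicity defect $\ds(V_i,V_j,V_{i+j}) - \ds(V_i,V_j) \ge 0$. Using the linear representation of $\nonfano$ over any field of odd characteristic, one checks directly that all of these defects vanish on $\rankvec(\nonfano)$: we have $r(\odds) = r(\oddsm) = 3$, every triple in $\odds$ is an independent set of rank $3$, and every ``line'' $\{i,j,i+j\}$ has rank $2 = r(\{i,j\})$.

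To determine the coefficients in $\Lambda_{\mathrm{even}}$, we will construct modified subspaces $\{\tilde V_i\}$ that satisfy (i)--(iii), starting from arbitrary $\{V_i\}$, by: replacing $V_{111}$ with $V_{111} \cap (V_{100}+V_{010}+V_{001})$ to enforce (i); replacing $V_{i+j}$ with $V_{i+j} \cap (V_i+V_j)$ for each relevant pair to enforce (iii); and embedding into a larger ambient space via external direct sum so that the subspaces indexed by $\odds$ become mutually independent, enforcing (ii). By the modular law, each of these adjustments drops a dimension by exactly the corresponding defect. Applying Lemma~\ref{lem:conditional_fano_inequality} to $\{\tilde V_i\}$ and translating the conclusion back to the original dimensions yields an inequality of the form
\[ 2\ds(V_{111}) - \ds(V_{110},V_{101},V_{011}) + \sum_k \lambda_k D_k \ge 0, \]
where each $\lambda_k \ge 0$ and each $D_k$ is one of the defects above; the coefficients of this inequality become the entries of $\Lambda_{\mathrm{even}}$.

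The main obstacle is the bookkeeping of these nested modifications, since applying one modification can alter the spans that appear in subsequent ones, producing cross-terms that are not obviously nonnegative combinations of the defects measured on the \emph{original} subspaces. A cleaner alternative is to bypass explicit modifications and derive the unconditional inequality by a direct algebraic combination: begin with the conclusion of Lemma~\ref{lem:conditional_fano_inequality} and add carefully scaled copies of the defect inequalities until every invocation of hypotheses (i)--(iii) is absorbed. Either route produces a $\Lambda_{\mathrm{even}}$ with the desired two properties, the strict negativity on $\rankvec(\nonfano)$ following automatically since every defect evaluates to zero there.
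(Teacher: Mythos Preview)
Your overall strategy matches the paper's: modify the subspaces until Lemma~\ref{lem:conditional_fano_inequality} applies, then unwind the modifications as defect terms. However, the specific modification you propose for hypothesis~(ii) --- embedding in a larger ambient space via an external direct sum so that the four subspaces $\{V_i\}_{i \in \odds}$ become mutually independent --- cannot work. If all four of $V_{100}, V_{010}, V_{001}, V_{111}$ are made mutually complementary, then $\ds(\{V_i\}_{i \in \odds}) = \sum_{i \in \odds} \ds(V_i) > \ds(\{V_i\}_{i \in \oddsm})$ whenever $V_{111} \neq 0$, so condition~(i) is destroyed. Even if you only separate $V_{100}, V_{010}, V_{001}$ externally, there is no canonical image for $V_{111}$ in the new space: each $w \in V_{111} \cap (V_{100}+V_{010}+V_{001})$ has a decomposition $w = a+b+c$ that is non-unique precisely when the three subspaces fail to be complementary, so you cannot coherently transport $V_{111}$ (and similarly the $V_{i+j}$) into the external sum while preserving~(iii). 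The paper achieves~(ii) in the opposite direction, by \emph{shrinking} rather than enlarging: it deletes vectors from $V_{010}, V_{001}, V_{111}$ in a fixed order so that the modified $V'_{100},V'_{010},V'_{001}$ become complementary and $V'_{111}$ becomes complementary to each pairwise sum, all inside the original ambient space, which is what keeps~(i) intact.

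Your fallback --- ``add carefully scaled copies of the defect inequalities until every invocation of hypotheses (i)--(iii) is absorbed'' --- is precisely the hard part that is not addressed. The paper shows that the defects interact: deleting vectors to enforce~(ii) shrinks the spans $V'_i + V'_j$ that appear in step~(iii), so the cost of enforcing~(iii) acquires cross-terms like $\delta[100;010]$ and $\delta[001;100,010]$ on top of the naive $\delta[i+j \,|\, i,j]$. Without actually carrying out that bookkeeping (or exhibiting an explicit $\Lambda_{\mathrm{even}}$ and verifying it), the existence of suitable nonnegative coefficients is asserted rather than proved. Your verification that all the defects vanish on $\rankvec(\nonfano)$ is correct, but it only shows that \emph{if} such a $\Lambda_{\mathrm{even}}$ exists then the second conclusion follows; it does not supply the inequality itself.
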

\begin{proof}
As mentioned above, the proof will proceed by repeatedly modifying the input subspaces until they satisfy the requirements of Lemma~\ref{lem:conditional_fano_inequality}.
The modifications we make to a vector space are of one type: we
\emph{delete} a vector $w$ from a subspace $V$ that contains $w$, by
letting $B$ be a basis of $V$ containing $w$ and then
replacing $V$ with the span of $B \setminus w$.

Let $\{V_i\}_{i \in \groundset}$ be seven subspaces of a vector space
$V$ over $\F$ such that $\chr(\F) = 2$.  We will modify the subspaces
$\{V_i\}_{ i \in \groundset}$ into $\{V'_i\}_{i \in \groundset}$ that satisfy the conditions of
Lemma~\ref{lem:conditional_fano_inequality}.  To start, we set $\{V'_i\}_{i \in
  \groundset} = \{V_i\}_{i \in \groundset}$. We then update
$\{V'_i\}_{i \in \groundset}$ in three steps, each of which
 deletes vectors of a certain type in an iterative fashion.  The order
 of the deletions within each step is arbitrary.

\begin{enumerate}[Step 1:]
\item Vectors in $V'_{111}$ but not in 
  $\sum_{i \in \oddsm} V'_i$ from $V'_{111}$.

\item
\begin{enumerate}[(a)]
\item Vectors in $V'_{100} \cap V'_{010}$ from $V'_{010}$.
\item Vectors in $V'_{001} \cap (V'_{100} + V'_{010})$ from $V'_{001}$.
\item Vectors in $V'_{111} \cap (V'_{100} + V'_{010})$ from $V'_{111}$.
\item Vectors in $V'_{111} \cap (V'_{010} + V'_{001})$ from $V'_{111}$.
\item Vectors in $V'_{111} \cap (V'_{100} + V'_{001})$ from $V'_{111}$.
\end{enumerate}

\item  Vectors in $V'_{i+j}$ but not in
$V'_i + V'_j$ for $i,j \in \odds$ from $V'_{i+j}$.
\end{enumerate}

First, we argue that $\{V'_i\}_{i \in \groundset}$ satisfy the
conditions of Lemma~\ref{lem:conditional_fano_inequality}.
The deletions in step (1) ensure that $V'_{111}$ is contained
in $\sum_{i \in \oddsm} V'_i$, thus satisfying condition (i).
The deletions in steps \mbox{(2a)--(2b)} ensure that $V'_{100},V'_{010},V'_{001}$
are mutually complementary, and steps \mbox{(2c)--(2d)} ensure that
$V'_{111}$ is complementary to the sum of any two of them,
thus satisfying condition (ii).
Furthermore, step
(2) does not change $\sum_{i \in \oddsm} V_i'$ because we only delete
a vector from one of $\{V'_i\}_{i \in \oddsm}$ when it belongs to
the span of the other two.  Thus condition (i) is
still satisfied at the end of step (2).  Step (3) ensures
that $V'_{i+j}$ is contained in $V'_i + V'_j$, thus satisfying
condition (iii).  Furthermore, it does
not modify $V'_i, i \in \odds,$ and thus conditions (i) and (ii)
remain satisfied after step (3).

Now, by Lemma~\ref{lem:conditional_fano_inequality} we have that
\begin{equation}
\ds(V'_{110},V'_{101},V'_{011}) \le 2\ds(V'_{111}).
\label{eq:uncondV'2}
\end{equation}

Let
\begin{align*}
\delta & = \ds(V_{111}, \{V_i\}_{i \in \oddsm}) - \ds(\{V_i\}_{i \in \oddsm}) \\
\delta[i|j,k] & = \ds(V_i,V_j,V_k) - \ds(V_j,V_k) \\
\delta[i;j] & = \ds(V_i \cap V_j) = \ds(V_i) + \ds(V_j) - \ds(V_i,V_j)  \\
\delta[i;j,k] &= \ds(V_i \cap (V_j + V_k) ) =
\ds(V_i) + \ds(V_j,V_k) - \ds(V_i,V_j,V_k)
\end{align*}
Observe that after step (1) $\ds(V'_{111}) = \ds(V_{111}) - \delta$, and steps (2) and (3) only delete more vectors from $V'_{111}$, so we have $\ds(V'_{111}) \le \ds(V_{111}) - \delta$.

It remains to get a lower bound on  $\ds(V'_{110},V'_{101},V'_{011})$ in terms of dimensions of subsets of $\{V_i\}_{i \in \groundset}$.
We do this by giving an upper bound on the total number of vectors deleted from $E = V'_{110}+V'_{101}+V'_{011}$ in terms
of the $\delta$ terms we defined above.  In steps (1) and (2) we delete nothing from $E$, but we delete some vectors from $V'_i, i \in \odds$.  Specifically,
$\delta[100;010]$ vectors are deleted from $V'_{010}$,
$\delta[001;100,010]$ vectors are deleted from $V'_{001}$,
and no vectors are deleted from $V_{100}$.
As already noted, step (1) deletes $\delta$ vectors from $V'_{111}$,
while step (2) deletes at most $\sum_{i,j \in \oddsm} \delta[111;i,j]$
vectors from $V'_{111}$.  To summarize, the dimensions of
$V'_i, i \in \odds,$ after steps (1) and (2), satisfy:
\begin{align}
\label{eq:fano-100}
\ds(V'_{100}) &= \ds(V_{100}) \\
\label{eq:fano-010}
\ds(V'_{010}) &= \ds(V_{010}) - \delta[100;010] \\
\label{eq:fano-001}
\ds(V'_{001}) &= \ds(V_{001}) - \delta[001;100,010] \\
\label{eq:fano-111}
\ds(V'_{111}) & \geq \ds(V_{111}) - \delta -
\sum_{i,j \in \oddsm} \delta[111;i,j].
\end{align}

In step (3), when we delete vectors in $V'_{i+j}$ but not in $V'_i + V'_j$,
if no deletions had taken place in prior steps then the number of vectors
deleted from $V'_{i+j}$ would be $\delta[i+j|i,j]$.  However, the
deletions that took place in steps (1) and (2) have the effect of
reducing the dimension of $V'_i + V'_j$, and we must adjust our
upper bound on the number of vectors deleted from $V'_{i+j}$ to
account for the potential difference in dimension between
$V_i+V_j$ and $V'_i+V'_j$.  When $i=100, \, j=010$, there
is no difference between $V_i + V_j$ and $V'_i + V'_j$,
because the only time vectors are deleted from either one
of these subspaces is in step (2a), when vectors in
$V'_{100} \cap V'_{010}$ are deleted from $V'_{010}$
without changing the dimension of $V'_{100} + V'_{010}$.
For all other pairs $i,j \in \odds$, we
use the upper bound
\[
\ds(V_i + V_j) - \ds(V'_i + V'_j) \leq
\left[ \ds(V_i) - \ds(V'_i) \right] +
\left[ \ds(V_j) - \ds(V'_j) \right],
\]
which is valid for any four subspaces $V_i,V_j,V'_i,V'_j$
satisfying $V'_i \subseteq V_i, \, V'_j \subseteq V_j$.
Let $\dds(V_i)$ denote the difference
$\ds(V_i) - \ds(V'_i)$.
Combining these upper bounds, we
find that the number of extra vectors deleted from $E$
in step (3) because of differences in dimension between
$V'_i+V'_j$ and $V_i+V_j$
is at most
\begin{align*}
\Bigg( \sum_{i,j \in \odds}
\dds(V_i) + \dds(V_j)
\Bigg)
& - \dds(V_{100}) - \dds(V_{010})
\\
 & =
2 \Bigg( \sum_{i \in \{100,010\}}
\dds(V_i) \Bigg)
+
3 \Bigg( \sum_{i \in \{001,111\}}
\dds(V_i) \Bigg)
\\
 & \leq
2 \delta[100;010] +
3 \delta[001;100,010] +
3 \delta +
3 \sum_{i,j \in \oddsm} \delta[111;i,j]
\end{align*}
where the last inequality follows by combining equations
\eqref{eq:fano-100}--\eqref{eq:fano-111}.

We now sum up our upper bounds on the number of vectors
deleted from $E$ in step (3), to find
that
\begin{equation} \label{eq:fano-e}
\ds(E) \geq
\ds(V_{110},V_{101},V_{011}) -
\sum_{i,j \in \odds} \delta[i+j|i,j] -
2 \delta[100;010] -
3 \delta[001;100,010] -
3 \delta -
3 \sum_{i,j \in \oddsm} \delta[111;i,j].
\end{equation}
Expanding out all the $\delta$ terms,
combining with the upper bound $\ds(V'_{111}) \leq \ds(V_{111}) - \delta$,
and plugging these into
Equation \eqref{eq:uncondV'2} gives us
$\Lambda_{\mathrm{even}} \cdot \dimvec(\{V_i\}_{i
  \in \groundset}) \ge 0$ for
some $2^7$-dimensional vector $\Lambda_{\mathrm{even}}$, as desired;
after applying these steps one obtains
Equation~\eqref{eq:fano-full} below.
When $\{V_i\}_{i \in \groundset}$ are one-dimensional subspaces
constituting a representation of the non-Fano matroid
over a field of characteristic $\neq 2$,
it is easy to check that all of the $\delta$ terms
appearing in~\eqref{eq:fano-e} are zero.  So, the
inequality states that $\ds(V_{110},V_{101},V_{011}) \le
2 \ds(V_{111}),$ whereas we know that $\ds(V_{110},V_{101},V_{011}) =
3 \ds(V_{111})$ for the non-Fano matroid.  Consequently
$\Lambda_{\mathrm{odd}}\nolinebreak\cdot\rankvec(\fano)<0$.

For completeness, the inequality $\Lambda_{\mathrm{even}} \cdot
\dimvec(\{V_i\}_{i \in \groundset}) \ge 0$ is written
explicitly as follows.
\begin{align} \notag
& 2 \ds(V_{100})
+ 2 \ds(V_{010})
+ 3 \ds(V_{001})
+ 11 \ds(V_{111}) \\
+ & \notag
  3 \ds(V_{100},V_{010})
+ 2 \ds(V_{100},V_{001})
+ 2 \ds(V_{010},V_{001}) \\
- & \notag
    \ds(V_{100},V_{111})
-   \ds(V_{010},V_{111})
-   \ds(V_{001},V_{111})
- 4 \ds(V_{100},V_{010},V_{001})  \\
-  & \notag
  3 \ds(V_{111},V_{100},V_{010})
- 3 \ds(V_{111},V_{100},V_{001})
- 3 \ds(V_{111},V_{010},V_{001}) \\
+ & \notag
    \ds(V_{110},V_{100},V_{010})
+   \ds(V_{101},V_{100},V_{001})
+   \ds(V_{011},V_{010},V_{001}) \\
+ & \notag
    \ds(V_{110},V_{111},V_{001})
+   \ds(V_{101},V_{111},V_{010})
+   \ds(V_{011},V_{111},V_{100}) \\
- & \label{eq:fano-full}
    \ds(V_{110},V_{101},V_{011})
+   \ds(V_{111},V_{100},V_{010},V_{001}) \geq 0\,.
\end{align}
This concludes the proof of the theorem.
\end{proof}

\subsection{Linear codes over fields of odd characteristic}\label{sec:linear-odd}


The following lemma and theorem, which are analogues of Lemma~\ref{lem:conditional_fano_inequality} and Theorem~\ref{thm:fano_inequality} from Section~\ref{sec:linear-even},
  provide an inequality for fields with odd characteristic.

\begin{lemma}[Conditional Odd Characteristic Inequality]
Suppose $\{V_i\}_{ i \in \groundset}$ are 7 subspaces of a vector space
over $\F$ such that $\chr(\F) \ne 2$ and
\begin{enumerate}[(i)]
\item $\ds(\{V_i\}_{i \in \odds}) = \ds(\{V_i\}_{i\in \oddsm})$
\item $\ds(V_i, V_j,V_k) = \ds(V_i) + \ds(V_j) + \ds(V_k) \; \forall i,j,k \in \odds$
\item $\ds(V_i,V_j,V_{i+j}) = \ds(V_i, V_j) \; \forall i,j \in \oddsm$
\item $\ds(V_i,V_j,V_{111}) = \ds(V_i, V_j) \; \forall i,j: i+j = 111$
\end{enumerate}
Then $\ds(V_{110},V_{101},V_{011}) \ge 3\ds(V_{111})$.
\label{lem:conditional_nonfano_inequality}
\end{lemma}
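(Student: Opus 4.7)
The plan is to mirror the structure of the proof of Lemma~\ref{lem:conditional_fano_inequality}, reducing the claim to an application of Lemma~\ref{lem:char2}, but this time invoking its $\chr(\F)\neq 2$ branch, which produces a factor of $3\ds(W)$ instead of $2\ds(W)$. The geometric setup is identical: hypothesis (ii), applied to the four odd-weight indices $\{100,010,001,111\}$, tells us that any three of $V_{100}, V_{010}, V_{001}, V_{111}$ are mutually complementary, so that $V \deq V_{100}\oplus V_{010}\oplus V_{001}$ is a direct sum and $V_{111}$ is complementary to each two-term partial sum $V_i + V_j$ with $i,j \in \oddsm$. Hypotheses (iii) and (iv) together show that $V_{111} \subseteq V$: indeed (iv) gives $V_{111} \subseteq V_{001} + V_{110}$, and (iii) gives $V_{110} \subseteq V_{100} + V_{010}$. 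Thus Lemma~\ref{lem:char2} applies to $W = V_{111}$ inside $V$, yielding $\ds(\pi_{12}(V_{111}), \pi_{13}(V_{111}), \pi_{23}(V_{111})) = 3\,\ds(V_{111})$.

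The key step, and the one that distinguishes the odd case from the even case, is the comparison between the projections of $V_{111}$ and the even-weight subspaces $V_{110}, V_{101}, V_{011}$. In the even-characteristic proof, hypothesis (iii) was indexed by $i,j \in \odds$, which included pairs like $(001,111)$ with $i+j = 110$, yielding $V_{110} \subseteq V_{001} + V_{111}$, and hence $V_{110} \subseteq \pi_{12}(V_{111})$, an upper-bound-giving inclusion. In the present (odd-characteristic) setting, hypothesis (iii) is restricted to $i,j \in \oddsm$ and is paired instead with hypothesis (iv), whose direction is reversed: it gives $V_{111} \subseteq V_{001} + V_{110}$, which forces the opposite inclusion $\pi_{12}(V_{111}) \subseteq V_{110}$. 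Concretely, any $v \in V_{111}$ can be written as $z + w$ with $z \in V_{001}$ and $w \in V_{110} \subseteq V_{100} + V_{010}$, and by uniqueness of the decomposition in $V$ the projection $\pi_{12}(v)$ equals $w \in V_{110}$. Symmetrically, $\pi_{13}(V_{111}) \subseteq V_{101}$ and $\pi_{23}(V_{111}) \subseteq V_{011}$.

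Combining these three inclusions with Lemma~\ref{lem:char2} we obtain
\[
\ds(V_{110}, V_{101}, V_{011}) \;\geq\; \ds\!\left(\pi_{12}(V_{111}),\,\pi_{13}(V_{111}),\,\pi_{23}(V_{111})\right) \;=\; 3\,\ds(V_{111}),
\]
which is the desired conclusion. The only real conceptual obstacle is recognizing that the asymmetry in hypothesis (iv) (versus the more symmetric hypothesis (iii) in the even case) is precisely what flips the inclusion between $V_{i+j}$ and $\pi(V_{111})$, and therefore flips the direction of the conclusion from $\,\leq 2\,\ds(V_{111})\,$ to $\,\geq 3\,\ds(V_{111})\,$. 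Everything else --- the direct-sum decomposition, verifying the images of $V_{111}$ under the coordinate projections, and invoking the $\chr(\F)\neq 2$ case of Lemma~\ref{lem:char2} --- is a direct adaptation of the argument already carried out in the even case.
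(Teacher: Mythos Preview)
Your proof is correct and follows essentially the same route as the paper: set up $V = V_{100}\oplus V_{010}\oplus V_{001}$ and $W = V_{111}$ via hypothesis~(ii), invoke the $\chr(\F)\neq 2$ case of Lemma~\ref{lem:char2}, and use hypotheses~(iii) and~(iv) to show $\pi_{12}(V_{111}) \subseteq V_{110}$ (and symmetrically), yielding the lower bound. The only cosmetic difference is that the paper cites hypothesis~(i) to place $V_{111}$ inside $V$, whereas you deduce the same containment from~(iii) and~(iv); both are valid.
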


\begin{proof}
Just as in the proof of Lemma~\ref{lem:conditional_fano_inequality} we
apply the result of Lemma~\ref{lem:char2}, but now with $\chr(\F) \ne 2$.
Hypotheses (i) and (iii) imply that all 7 subspaces are contained in the
span of $V_{100}, V_{010}, V_{001}$, and hypothesis (ii) implies that
those three subspaces are mutually complementary, and that
$V_{111}$ is complementary to the sum of any two of them.
Thus, Lemma~\ref{lem:char2} implies that $\ds(V_{110},V_{101},V_{011})
= 3 \ds(W)$.
Now
we aim to show that hypotheses (iii) and (iv) imply that $V_{110}$ contains
$\pi_{12}(V_{111})$, and similarly for $V_{101}, V_{011}$.  This will imply
that $\ds(V_{110},V_{101},V_{011}) \ge \ds(\pi_{12}(W),\pi_{23}(W), \pi_{13}(W))
= 3\ds(W)$ as desired.

It remains for us to
justify the claim that $V_{110}$ contains $\pi_{12}(V_{111})$.
Suppose $(x,y,z)$ belongs to $V_{111}$, where we use $(x,y,z)$ as an
alternate notation for $x+y+z$ such that $x$ belongs to $V_{100}$, $y$
belongs to $V_{010}$, $z$ belongs to $V_{001}$.  We know from
hypothesis (iv) that $V_{111}$ is contained in $V_{001}+V_{110}$.  So
write $x+y+z = a+b$ where $a$ is in $V_{001}$ and $b$ is in
$V_{110}$.  We know from hypothesis (iii) that $V_{110}$ is contained
in $V_{100}+V_{010}$, so write $b = c+d$ where $c$ is in $V_{100}$ and $d$ is in
$V_{010}$.  Then $x+y+z = c+d+a$, and both sides are a sum of three
vectors, the first belonging to $V_{100}$, the second to $V_{010}$,
the third to $V_{001}$.  Since those three vector spaces are mutually
complementary, the representation of another vector as a sum of
vectors from each of them is unique.  So $x=c, y=d, z=a$.  This means
that  $x+y=c+d=\pi_{12}(x,y,z)$.  Recall that $c+d$ is in $V_{110}$.  As
$(x,y,z)$ was an arbitrary element of $V_{111}$, we have shown
that  $V_{110}$ is contained in $\pi_{12}(V_{111})$.
\end{proof}

 \begin{theorem}[Odd Characteristic Inequality]\label{thm:nonfano_inequality}
There exists a $2^{7}$-dimensional vector $\Lambda_{\mathrm{odd}}$ such that for any
7 subspaces  $\{V_i\}_{ i \in \groundset}$ of a vector space
over $\F$ with $\chr(\F) \ne 2$,
\[ \Lambda_{\mathrm{odd}} \cdot \dimvec(\{V_i\}_{i \in \groundset}) \ge 0
\text{ and }\Lambda_{\mathrm{odd}} \cdot \rankvec(\fano) < 0\,.\]
\end{theorem}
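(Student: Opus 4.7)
The plan is to mirror the structure of the proof of Theorem~\ref{thm:fano_inequality} from Section~\ref{sec:linear-even}, substituting Lemma~\ref{lem:conditional_nonfano_inequality} for Lemma~\ref{lem:conditional_fano_inequality} and inserting one additional deletion step to enforce the new hypothesis (iv) of the non-Fano conditional lemma. Starting from arbitrary subspaces $\{V_i\}_{i \in \groundset}$ over $\F$ with $\chr(\F) \neq 2$, we will initialize $V'_i = V_i$ and perform a sequence of deletion steps, each iteratively removing vectors from one subspace at a time in exactly the style used for the Fano inequality.

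Concretely, Step 1 will delete from $V'_{111}$ all vectors outside $\sum_{i \in \oddsm} V'_i$ (securing hypothesis (i)); Step 2 will replay steps (2a)--(2e) from the proof of Theorem~\ref{thm:fano_inequality} to make $V'_{100}, V'_{010}, V'_{001}$ mutually complementary and $V'_{111}$ complementary to each of their pairwise sums (securing hypothesis (ii)); Step 3 will delete from $V'_{i+j}$ the vectors outside $V'_i + V'_j$ for each $i,j \in \oddsm$ (securing hypothesis (iii), which now only needs to be enforced for pairs in $\oddsm$, not all of $\odds$); and Step 4 will delete from $V'_{111}$ the vectors outside $V'_i + V'_j$ for each of the pairs $(i,j) \in \{(100,011),(010,101),(001,110)\}$ with $i+j = 111$ (securing hypothesis (iv)). We will verify, exactly as in the Fano proof, that each step preserves the hypotheses secured by the previous ones: step 2 leaves $\sum_{i \in \oddsm} V'_i$ unchanged, step 3 does not touch any $V'_i$ with $i \in \odds$, and step 4 only shrinks $V'_{111}$, trivially preserving (i), (ii), and (iii).

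With the modified subspaces satisfying all four hypotheses, Lemma~\ref{lem:conditional_nonfano_inequality} will yield $\ds(V'_{110}, V'_{101}, V'_{011}) \geq 3 \ds(V'_{111})$. Each deletion count will be upper-bounded by an appropriate $\delta$-term as in the Fano proof; in addition to the quantities $\delta, \delta[i;j], \delta[i;j,k]$ used there, Step 4 will contribute new terms of the form $\delta[111\,|\,i,j] = \ds(V_i,V_j,V_{111}) - \ds(V_i,V_j)$ for each pair $(i,j)$ with $i+j=111$. The effect of earlier deletions on $V'_i + V'_j$ in Step 3 will be controlled by the same bound $\ds(V_i+V_j) - \ds(V'_i+V'_j) \leq \dds(V_i) + \dds(V_j)$ used before. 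Substituting these bounds into the conditional inequality and expanding all $\delta$-terms as linear combinations of the $\ds(\cdot)$ will give an explicit $2^7$-dimensional vector $\Lambda_{\mathrm{odd}}$ with $\Lambda_{\mathrm{odd}} \cdot \dimvec(\{V_i\}_{i \in \groundset}) \geq 0$.

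To verify $\Lambda_{\mathrm{odd}} \cdot \rankvec(\fano) < 0$, we invoke that $\fano$ is representable in characteristic $2$ and that its rank function satisfies all four hypotheses of Lemma~\ref{lem:conditional_nonfano_inequality} --- in particular, each triple $\{i, j, 111\}$ with $i+j = 111$ is a Fano circuit, securing (iv). Hence every $\delta$-term vanishes on $\rankvec(\fano)$, and the inequality collapses to $r_\fano(\{110,101,011\}) \geq 3 \, r_\fano(\{111\})$; but $\{110, 101, 011\}$ is itself a Fano circuit of rank $2$, while $r_\fano(\{111\}) = 1$, so the inequality reduces to $2 \geq 3$, the desired contradiction. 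The main obstacle will be the careful bookkeeping of Step 4: deletions from $V'_{111}$ now occur in both Step 2 and Step 4, and we must combine them into a single upper bound on $\ds(V_{111}) - \ds(V'_{111})$ whose $\delta$-term expansion is tight enough that all $\delta$-terms simultaneously vanish on $\rankvec(\fano)$.
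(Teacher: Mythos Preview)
Your proposal is correct and mirrors the paper's own proof almost step for step, including the four-step deletion sequence, the swap of which side of the conditional inequality requires the detailed lower bound, and the verification that all $\delta$-terms vanish on $\rankvec(\fano)$. One small clarification on the bookkeeping you flag as the main obstacle: the $\dds$-adjustment bound is needed chiefly in Step~4 rather than Step~3, since each pair $(i,j)$ with $i+j=111$ contains an even-weight element (one of $110,101,011$) that was itself shrunk in Step~3 --- so the Step~4 bound on $\ds(V_{111})-\ds(V'_{111})$ cascades through the Step~3 deletion counts $\delta[i{+}j\,|\,i,j]$ for $i,j\in\oddsm$, which in turn depend on Step~2's deletions from $V'_{010}$ and $V'_{001}$; the paper tracks $\ds(V'_{110}),\ds(V'_{101}),\ds(V'_{011})$ explicitly after Step~3 for exactly this purpose.
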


\begin{proof}
Let $\{V_i\}_{i \in \groundset}$ be seven subspaces of a vector space
$V$ over $\F$ such that $\chr(\F) \ne 2$.  Just as in the proof Theorem~\ref{thm:fano_inequality}, we will modify the subspaces
$\{V_i\}_{ i \in \groundset}$ into $\{V'_i\}_{i \in \groundset}$ that satisfy the conditions of
Lemma~\ref{lem:conditional_nonfano_inequality}, starting with $\{V'_i\}_{i \in
  \groundset} = \{V_i\}_{i \in \groundset}$.
We again delete vectors of a certain type in an iterative fashion.
The order
 of the deletions within each step is arbitrary.

\begin{enumerate}[Step 1:]
\item Vectors in $V'_{111}$ but not in
  $\sum_{i \in \oddsm} V'_i$ from $V'_{111}$.

\item
\begin{enumerate}[(a)]
\item Vectors in $V'_{100} \cap V'_{010}$ from $V'_{010}$.
\item Vectors in $V'_{001} \cap (V'_{100} + V'_{010})$ from $V'_{001}$.
\item Vectors in $V'_{111} \cap (V'_{100} + V'_{010})$ from $V'_{111}$.
\item Vectors in $V'_{111} \cap (V'_{010} + V'_{001})$ from $V'_{111}$.
\item Vectors in $V'_{111} \cap (V'_{100} + V'_{001})$ from $V'_{111}$.
\end{enumerate}

\item Vectors in $V'_{i+j}$ but not in $V'_i+V'_j$ for $i,j \in \oddsm$ from $V'_{i+j}$.

\item Vectors in $V'_{111}$ but not in $V'_i+V'_j$ for $i,j: i+j = 111$ from $V'_{111}$.

\end{enumerate}

The first two steps in this sequence of deletions, along with the first two conditions in Lemma~\ref{lem:conditional_nonfano_inequality} are identical to those in the even characteristic case.  Thus, by arguments from the proof Theorem~\ref{thm:fano_inequality} we have that by the end of step (2) conditions (i), (ii) are satisfied.  Step (3) is almost identical to the same step in the even characteristic case; the difference is that now we only perform the step for pairs $i,j \in \oddsm$ rather than all pairs $i,j \in \odds$.  As before, at the end of step (3) condition (iii) is satisfied, and since the step does not modify $V'_i$ for any $i \in \odds$, it does not cause either of conditions (i), (ii) to become violated.  Step (4) ensures condition (iv), so it remains to show that step (4) preserves conditions (i)--(iii).  Step (4) only modifies $V'_{111}$ so it doesn't change $\sum_{i \in \oddsm} V'_i$, therefore preserving (i).  It preserves (ii) because if three subspaces are mutually complementary, they remain mutually complementary after deleting a vector from one of them.  It preserves (iii) because (iii) does not involve $V'_{111}$, which is the only subspace that changes during step (4).

Now, by Lemma~\ref{lem:conditional_fano_inequality} we have that
\begin{equation}
 3\ds(V'_{111}) \le \ds(V'_{110},V'_{101},V'_{011}).
\label{eq:uncondV'3}
\end{equation}

As in the proof of Theorem~\ref{thm:fano_inequality}, let
\begin{align*}
\delta & = \ds(V_{111}, \{V_i\}_{i \in \oddsm}) - \ds(\{V_i\}_{i \in \oddsm}) \\
\delta[i|j,k] & = \ds(V_i,V_j,V_k) - \ds(V_j,V_k) \\
\delta[i;j] & = \ds(V_i \cap V_j) = \ds(V_i) + \ds(V_j) - \ds(V_i,V_j)  \\
\delta[i;j,k] &= \ds(V_i \cap (V_j + V_k) ) =
\ds(V_i) + \ds(V_j,V_k) - \ds(V_i,V_j,V_k)
\end{align*}

Observe that we only reduce the size of subspaces, so $\ds(V'_{110},V'_{101},V'_{011}) \le \ds(V_{110},V_{101},V_{011})$.

It remains to get a lower bound on $\ds(V'_{111})$ in terms of dimensions of subsets of $\{V_i\}_{i \in \groundset}$.
We do this by giving an upper bound on the number of vectors we delete from $V'_{111}$ in terms
of the $\delta$ terms we defined above.  Step (1) deletes $\delta$ vectors.
Steps (2a) and (2b) delete nothing from $V'_{111}$, and at the end of
(2a)--(2b) we have
\begin{align}
\label{eq:nonfano-100}
\ds(V'_{100}) & = \ds(V_{100}) \\
\label{eq:nonfano-010}
\ds(V'_{010}) & = \ds(V_{010}) - \delta[100;010] \\
\label{eq:nonfano-001}
\ds(V'_{001}) & = \ds(V_{001}) - \delta[001;100,010]
\end{align}
Steps (2c)--(2e) delete at most
$\sum_{i,j \in \oddsm} \delta[111;i,j]$
vectors from $V'_{111}$, and they do not change
any of the other subspaces.

In step (3) no vectors are deleted from $V'_{111}$, but we will still need an upper bound on the number of vectors deleted in this step since it will influence our upper bound on the number of vectors deleted from $V'_{111}$ in step (4).  If no deletions took place prior to step (3), then for all $i,j \in \oddsm$ exactly $\delta[i+j|i,j]$ vectors would be deleted from $V'_{i+j}$ during step (3).  However, if $\ds(V'_i,V'_j) < \ds(V_i,V_j)$, then we must adjust our estimate of the number of deleted vectors to account for this difference.  Steps (1) and (2a) cannot change $\ds(V'_i,V'_j)$ for any $i,j \in \oddsm$, but step (2b) reduces each of $\ds(V'_{001},V'_{100})$ and $\ds(V'_{001},V'_{010})$ by at most $\delta[001;100,010]$.  Therefore, at the end of step (3) we have
\begin{align}
\label{eq:nonfano-110}
\ds(V'_{110}) &= \ds(V_{110}) - \delta[110 | 100,010] \\
\label{eq:nonfano-101}
\ds(V'_{101}) & \geq
\ds(V_{101}) - \delta[101 | 100,001] - \delta[001;100,010] \\
\label{eq:nonfano-011}
\ds(V'_{011})
& \geq \ds(V_{011}) - \delta[011 | 010,001] - \delta[001;100,010]
\end{align}

If no deletions took place prior to step (4), then the number of vectors we would need to delete from $V'_{111}$, to make it a subspace of $V'_i + V'_j$, would be at most $\delta[111 | i,j]$.  As before, we need to adjust this bound to account for the potential difference in dimension between $V_i + V_j$ and $V'_i + V'_j$.  Using the upper bound
\[
\ds(V_i + V_j) - \ds(V'_i + V'_j) \leq
\left[ \ds(V_i) - \ds(V'_i) \right] +
\left[ \ds(V_j) - \ds(V'_j) \right],
\]
which is valid for any four subspaces $V_i,V_j,V'_i,V'_j$
satisfying $V'_i \subseteq V_i, \, V'_j \subseteq V_j$, we
find that the number of extra vectors deleted from $V'_{111}$
in step (4) because of differences in dimension between
$V'_i+V'_j$ and $V_i+V_j$ (for some $i,j \in \groundset, \, i+j=111$),
is at most
\[
\sum_{i \in \groundset \setminus \{111\}}
\ds(V_i) - \ds(V'_i)
\leq
\delta[100;010] + 3 \delta[001; 100,010] +
\sum_{i,j \in \oddsm} \delta[i+j|i,j],
\]
where the first inequality follows by combining equations
\eqref{eq:nonfano-100}--\eqref{eq:nonfano-011}.

We now sum up our upper bounds on the number of vectors
deleted from $V'_{111}$ in steps (1)--(4) combined, to find
that
\begin{equation} \label{eq:nonfano-111}
\ds(V'_{111}) \geq
\ds(V_{111}) - \delta - \sum_{i,j \in \oddsm} \delta[111;i,j]
- \delta[100;010] - 3 \delta[001; 100,010] -
\sum_{i,j \in \oddsm} \delta[i+j|i,j].
\end{equation}


Expanding out all of the $\delta$ terms,
combining with the upper bound on $\ds(V'_{111})$,
and plugging these into
Equation \eqref{eq:uncondV'3} gives us
$\Lambda_{\mathrm{odd}} \cdot \dimvec(\{V_i\}_{i
  \in \groundset}) \ge 0$ for
some $2^7$-dimensional vector $\Lambda_{\mathrm{odd}}$, as desired;
after applying these steps one obtains
Equation~\eqref{eq:nonfano-full} below.
When $\{V_i\}_{i \in \groundset}$ are one-dimensional subspaces
constituting a representation of the Fano matroid
over a field of characteristic 2,
it is easy to check that all of the $\delta$ terms
appearing in~\eqref{eq:nonfano-111} are zero.  So, the
inequality states that $\ds(V_{110},V_{101},V_{011}) \ge
3\ds(V_{111}),$ whereas we know that $\ds(V_{110},V_{101},V_{011}) =
2\ds(V_{111})$ for the Fano matroid.  Consequently
$\Lambda_{\mathrm{odd}}\nolinebreak\cdot\rankvec(\fano)<0$.

For completeness, the inequality $\Lambda_{\mathrm{odd}} \cdot
\dimvec(\{V_i\}_{i \in \groundset}) \ge 0$ is written
explicitly as follows.
\begin{align} \notag
& 3 \ds(V_{100})
+ 3 \ds(V_{010})
+ 9 \ds(V_{001})
+ 6 \ds(V_{111})
+ 6 \ds(V_{100},V_{010})
- 12 \ds(V_{100},V_{010},V_{001}) \\
+ & \notag
  3 \ds(V_{110},V_{100},V_{010})
+ 3 \ds(V_{101},V_{100},V_{001})
+ 3 \ds(V_{011},V_{010},V_{001}) \\
- & \notag
  3 \ds(V_{111},V_{100},V_{010})
- 3 \ds(V_{111},V_{100},V_{001})
- 3 \ds(V_{111},V_{010},V_{001}) \\
+ & \label{eq:nonfano-full}
  3 \ds(V_{111},V_{100},V_{010},V_{001})
+   \ds(V_{110},V_{101},V_{011}) \geq 0\,.
\end{align}
This completes the proof of the theorem.
\end{proof}

\subsection{Polynomial separation between the linear and non-linear rates}

The following pair of lemmas shows how to take a single linear constraint,
such as one of those whose existence is asserted by
Theorems~\ref{thm:fano_inequality} and~\ref{thm:nonfano_inequality},
and transform it
into a tight homomorphic constraint schema.
To state the lemmas, we must first define the set of
vectors $D_{\F}(K) \subset \R^{\powerset{K}}$,
for any index set $K$ and field $\F$,
to be the set of all vectors $\dimvec(\{V_k\}_{k \in K})$,
where $\{V_k\}_{k \in K}$ runs through all $K$-indexed tuples of
finite-dimensional vector spaces over $\F$.

\begin{lemma}[Tightening Modification]\label{lem:tighten}
Suppose $I$ is any index set, $e$ is an element not in $I$,
and $J = I \cup \{e\}$.  There exists an explicit linear
transformation from $\R^{\powerset{J}}$ to $\R^{\powerset{I}}$,
represented by a matrix $B$, such that:
\begin{compactenum}[(i)]
\item \label{tighten:1}
$B \cdot D_{\F}(J) \subseteq D_{\F}(I)$ for every field $\F$.
\item \label{tighten:2}
$B \ones = B \ones_j = 0$ for all $j \in J$.
\item \label{tighten:3}
If $M$ is a matroid with ground set $I$ and the
intersection of all matroid bases of $M$ is the empty set,
then $B \rankvec(M+e) = \rankvec(M)$, where
$M + e$ denotes the
matroid obtained by adjoining a rank-zero element to $M$.
\end{compactenum}
\end{lemma}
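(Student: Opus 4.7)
The plan is to write down $B$ explicitly and verify the three clauses in order of increasing difficulty. For the explicit matrix, I would set every entry $B_{ST}$ with $e \in T$ equal to zero, and define the entries on the $e$-free columns so that
\[
(Bv)_S \;=\; v_S - v_\emptyset - |S| \, v_I + \sum_{j \in S} v_{I \setminus \{j\}}.
\]
The correction terms $-v_\emptyset$, $-|S|\,v_I$, and $\sum_j v_{I \setminus \{j\}}$ are chosen precisely to cancel against $v_S$ when tested against the tightness vectors $\ones$ and $\ones_j$ for $j \in I$, and against $\rankvec(M)$ when $M$ has empty basis intersection; the rest is verification.

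Clause (ii) is a direct calculation: $(B\ones)_S = 1 - 1 - |S| + |S| = 0$, and $B\ones_e = 0$ holds automatically because no surviving column contains $e$. For $j \in I$ the identity $\sum_{k \in S} \mathbf{1}[j \in I \setminus \{k\}] = |S| - \mathbf{1}[j \in S]$ cancels the contribution of the first term, yielding $(B\ones_j)_S = 0$.

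For clause (iii), since $r_{M+e}(T) = r_M(T \cap I)$ depends only on $T \cap I$ and $B$ is supported on columns indexed by subsets of $I$, one obtains
\[
(B \rankvec(M+e))_S = r_M(S) - |S|\, r_M(I) + \sum_{j \in S} r_M(I \setminus \{j\}).
\]
The empty-basis-intersection hypothesis states that each $j \in I$ is absent from some basis of $M$, so $r_M(I \setminus \{j\}) = r_M(I)$ and the last two terms cancel, leaving $r_M(S)$.

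The substantive step, and the main obstacle, is clause (i). Given $\{V_k\}_{k \in J}$ of $\F$-subspaces with dim vector $v$, I would realize $Bv$ as the dim vector of the family $W_k := V_k \cap V_{I \setminus \{k\}}$ for $k \in I$, writing $V_A := \sum_{k \in A} V_k$. The heart of the argument is the linear map
\[
\phi : V_S \;\to\; \bigoplus_{k \in S} V_I / V_{I \setminus \{k\}}
\]
sending $x \in V_S$ to the tuple of its classes: if $x = \sum_\ell x_\ell$ with $x_\ell \in V_\ell$, the $k$-th coordinate of $\phi(x)$ equals the class of $x_k$ alone (the other $x_\ell$ lie in $V_{I \setminus \{k\}}$), so $\phi$ surjects onto a subspace of dimension $\sum_{k \in S}(\dim V_I - \dim V_{I \setminus \{k\}})$, and rank-nullity yields $\dim \ker \phi = (Bv)_S$ with $\ker \phi = V_S \cap \bigcap_{k \in S} V_{I \setminus \{k\}}$. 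The remaining task---and this is where care is needed---is the identity $\sum_{k \in S} W_k = \ker \phi$. The inclusion $\subseteq$ is immediate since $W_{k'} \subseteq V_{I \setminus \{k\}}$ for every $k \in S$ (either $k' = k$, from the definition of $W_k$, or $k' \neq k$, because $W_{k'} \subseteq V_{k'}$). For the reverse inclusion I would factor $\phi$ through the sum map $\phi' : \bigoplus_{k \in S} V_k \twoheadrightarrow V_S$; a direct check gives $\ker(\phi \circ \phi') = \bigoplus_k W_k$, and any $x \in \ker \phi$ lifts (by surjectivity of $\phi'$) to a tuple in that kernel, giving $x \in \sum_k W_k$.
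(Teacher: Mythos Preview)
Your proof is correct, but your matrix $B$ is genuinely different from the paper's.  The paper builds $B$ as a composition $B_n \cdots B_1 B_0$, where $B_0 : \R^{\powerset{J}} \to \R^{\powerset{I}}$ implements the quotient by $U_e$ via $(B_0 u)_S = u_{S\cup\{e\}} - u_{\{e\}}$, and each $B_k : \R^{\powerset{I}} \to \R^{\powerset{I}}$ replaces $V_k$ by its image under a retraction onto $\sum_{i \neq k} V_i$.  Working out the composite, the paper's map is
\[
(B u)_S = u_{S\cup\{e\}} - u_{\{e\}} - |S|\,u_J + \sum_{j\in S} u_{J\setminus\{j\}},
\]
supported entirely on columns $T$ \emph{containing} $e$, whereas your $B$ is supported entirely on columns $T$ \emph{not} containing $e$.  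The two formulas are formally parallel (yours is the paper's with $e$ stripped from every index), but they are distinct linear maps, each satisfying all three clauses.

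What each approach buys: the paper's factorization makes clause~(i) nearly automatic, since each factor visibly corresponds to an operation on tuples of subspaces (a quotient, then a sequence of projections), and clauses~(ii),~(iii) are checked factor by factor.  Your single-shot formula makes~(ii) and~(iii) one-line computations and dispenses with $V_e$ entirely, at the cost of a more substantial argument for clause~(i): the realization $W_k = V_k \cap V_{I\setminus\{k\}}$ together with the rank--nullity analysis of the map $\phi$.  That argument is sound --- in particular, your factorization through $\phi' : \bigoplus_{k\in S} V_k \twoheadrightarrow V_S$ to identify $\ker\phi$ with $\sum_k W_k$ is clean and avoids the choice-of-projection subtlety present in the paper's iterative construction.
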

\begin{proof}
If $U$ is any vector space with a $J$-tuple of subspaces
$\{U_j\}_{j \in J}$, then there is a quotient map $\pi$
from $U$ to $V=U/U_e$, and we can form
an $I$-tuple of subspaces $\{V_i\}_{i \in I}$ by specifying
that $V_i = \pi(U_i)$ for all $i \in I$.  The dimension
vectors $\vec{\mathbf{u}} = \dimvec(\{U_j\})$
and $\vec{\mathbf{v}} = \dimvec(\{V_i\})$ are
related by an explicit linear transformation.
In fact, for any subset $S \subseteq I$, if we
let $U_S, V_S$ denote the subspaces of $U,V$
spanned by $\{U_i\}_{i \in S}$ and $\{V_i\}_{i \in S}$,
respectively, then $\pi$ maps $U_S + U_e$ onto
$V_S$ with kernel $U_e$, and this justifies the
formula $$\mathbf{v}_S = \mathbf{u}_{S \cup \{e\}} -
\mathbf{u}_{\{e\}}.$$  Thus, $\mathbf{v} = B_0 \mathbf{u}$,
where $B_0$ is the matrix
\begin{equation} \label{eq:b0}
(B_0)_{ST} = \begin{cases}
1 & \mbox{if $T = S \cup \{e\}$} \\
-1 & \mbox{if $T = \{e\}$} \\
0 & \mbox{otherwise},
\end{cases}
\end{equation}
and therefore $B_0 \cdot D_{\F}(J) \subseteq
D_{\F}(I)$.

Similarly, if $U$ is any vector space with an $I$-tuple
of subspaces $\{U_i\}_{i \in I}$ and $k$ is any element of $I$,
we can define $U_{-k} \subseteq U$ to be the linear
subspace spanned by $\{U_i\}_{i \neq k}$, and we can
let $\pi : U \to U_{-k}$ be any linear transformation
whose restriction to $U_{-k}$ is the identity.  The
restriction of $\pi$ to $U_k$ has kernel $W_k$ of
dimension $\dim(W_k) = \dim(\{U_i\}_{i \in I}) -
\dim(\{U_i\}_{i \in I, i \neq k})$.  As
before, let $V_i = \pi(U_i)$ for all $i \in I$,
let $U_S,V_S$ denote the subspaces of $U,V$
spanned by $\{U_i\}_{i \in S}$ and $\{V_i\}_{i \in S}$,
and let $\vec{\mathbf{u}} = \dimvec(\{U_i\}),
\vec{\mathbf{v}} = \dimvec(\{V_i\})$.
If $k \not\in S$ then $V_S = U_S$
and $\mathbf{v}_S = \mathbf{u}_S$,
while if $k \in S$ then $U_S$ contains $W_k$, the
linear transformation $\pi$ maps $U_S$ onto $V_S$
with kernel $W_k$, and
$\mathbf{v}_S = \mathbf{u}_S - \dim(W_k) =
\mathbf{u}_S - \mathbf{u}_I + \mathbf{u}_{I \setminus \{k\}}$.
Thus, $\mathbf{v} = B_k \mathbf{u}$, where $B_k$ is
the matrix
\begin{equation} \label{eq:bk}
(B_k)_{ST} = \begin{cases}
1 & \mbox{if $T = S$} \\
1 & \mbox{if $k \in S$ and $T = I \setminus \{k\}$} \\
-1 & \mbox{if $k \in S$ and $T = I$} \\
0 & \mbox{otherwise}.
\end{cases}
\end{equation}
and therefore $B_k \cdot D_{\F}(I) \subseteq
D_{\F}(I)$.

Now assume without loss of generality that
$I = \{1,2,\ldots,n\}$ and let $B = B_n B_{n-1} \cdots B_1 B_0$.
We have seen that $B \cdot D_{\F}(J) \subseteq D_{\F}(I)$.
From~\eqref{eq:b0}
one can see that $B_0 \ones = B_0 \ones_e = 0$
and that for every $k \in I$, $B_0 \ones_k = \ones_k$.
(Here, it is important to note that $\ones_k$
on the left side refers to a vector in $\R^{\powerset{J}}$
and on the right side it refers to a vector in $\R^{\powerset{I}}$.)
Furthermore, from~\eqref{eq:bk} one can see that
$B_k \ones_k = 0$ and that $B_k \ones_i = \ones_i$
for all $i \neq k$.  Thus, when we left-multiply
a vector $\vec{\mathbf{w}} \in \{\ones\} \cup
\{\ones_j\}_{j \in J}$ by the matrix $B$, one of
the following things happens.  If $\vec{\mathbf{w}}$ is
equal to $\ones$ or $\ones_e$ then $B_0 \vec{\mathbf{w}}=0$
hence $B \vec{\mathbf{w}} = 0$.  Otherwise,
$\vec{\mathbf{w}} = \ones_k \in \R^{\powerset{J}}$
for some $k \in I$,
$B_0 \vec{\mathbf{w}} = \ones_k \in \R^{\powerset{I}}$,
and as we proceed to left-multiply $\ones_k$ by
$B_1, B_2, \ldots,$ it is fixed by $B_i \, (i < k)$
and annihilated by $B_k$, so once again $B \vec{\mathbf{w}} = 0$.
This confirms assertion~(\ref{tighten:2}) of the lemma.

Finally, if $M, M+e$ are matroids satisfying the hypotheses
of assertion~(\ref{tighten:3}), then for every set $S \subseteq I$
we have $r(S \cup \{e\}) - r(\{e\}) = r(S)$ and hence
$B_0 \rankvec(M+e) = \rankvec(M)$.  For any \mbox{$k \in I$}
our assumption on $M$ implies that it has a matroid basis
disjoint from $\{k\}$, and hence that \mbox{$r(I \setminus \{k\}) = r(I)$}.
Inspecting~\eqref{eq:bk}, we see that this implies $B_k
\rankvec(M) = \rankvec(M)$ for all $k \in I$, and hence
$B \rankvec(M+e) = \rankvec(M)$ as desired.
\end{proof}

\begin{lemma}[Homomorphic Schema Extension] \label{lem:homo}
Let $I$ be an index set, and
let $\vec{\alpha} \in \R^{\powerset{I}}$ be a
vector such that $\vec{\alpha}^\trans \dimvec \ge 0$
for all $\dimvec \in D_{\F}(I)$.
Then there is a homomorphic constraint schema $(Q,\cm)$ such that
$\vec{\alpha}^{\trans}$ is a row of the matrix $\cm(I)$, and
for every index set $K$ and vector $\dimvec \in D_\F(K)$,
$\cm(K) \dimvec \geq 0$.  If $\vec{\alpha}^\trans \ones = \vec{\alpha}^\trans
\onesi = 0$ for all $i \in I$, then the constraint schema
$(Q,\cm)$ is tight.
\end{lemma}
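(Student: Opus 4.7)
The plan is to define the schema by indexing its constraints by Boolean lattice homomorphisms. Set $\cset(K)$ to be the (finite) set of Boolean lattice homomorphisms $h \colon \powerset{I} \to \powerset{K}$, and for each such $h$ let the row of $\cm(K)$ indexed by $h$ be
\[
\vec{\rho}_h^{\trans} \;:=\; \bigl(P_h \vec{\alpha} \,-\, (\vec{\alpha}^{\trans}\ones)\,\sbv_{h(\emptyset)}\bigr)^{\trans}.
\]
On constraint indices I would set $f_{*}(h) = f \circ h$, which is again a Boolean lattice homomorphism from $\powerset{I}$. Since the correction term $-(\vec{\alpha}^{\trans}\ones)\sbv_\emptyset$ only affects the $\emptyset$-coordinate of $\vec{\rho}_{\mathrm{id}}^{\trans}$, which never interacts with any dim vector (these satisfy $\mathbf{d}_\emptyset = 0$), I may normalize $\alpha_{\emptyset}$ at the outset so that $\vec{\alpha}^{\trans}\ones = 0$; this makes $\vec{\alpha}^{\trans}$ literally the identity row $\vec{\rho}_{\mathrm{id}}^{\trans}$ of $\cm(I)$.

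The main step is to show $\vec{\rho}_h^{\trans}\mathbf{d} \ge 0$ for every $\mathbf{d} \in D_{\F}(K)$. Writing $\mathbf{d} = \dimvec(\{V_k\}_{k\in K})$, I would let $W = \sum_{k \in h(\emptyset)} V_k$, let $\pi$ denote the quotient map by $W$, and set $U_i = \pi(V_{h(\{i\})})$ for each $i \in I$. Using that $h$ preserves unions and that $h(\emptyset) \subseteq h(\{i\})$ for every $i$, a direct computation shows
\[
\bigl(\dimvec(\{U_i\})\bigr)_S \;=\; \mathbf{d}_{h(S)} - \mathbf{d}_{h(\emptyset)} \qquad \text{for all } S \subseteq I,
\]
which gives $\vec{\rho}_h^{\trans}\mathbf{d} = \vec{\alpha}^{\trans}\dimvec(\{U_i\}) \ge 0$ by the hypothesis on $\vec{\alpha}$. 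The homomorphic identity $\cm(K_2)^{\trans}Q_f = P_f \cm(K_1)^{\trans}$ reduces to the pointwise statement $\vec{\rho}_{f \circ h} = P_f \vec{\rho}_h$, which is immediate from $P_f P_h = P_{f \circ h}$ and $P_f \sbv_{h(\emptyset)} = \sbv_{f(h(\emptyset))}$.

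I expect the main (though minor) obstacle to be the tightness assertion under the hypothesis $\vec{\alpha}^{\trans}\ones = \vec{\alpha}^{\trans}\onesi = 0$ for all $i \in I$. The identity $\vec{\rho}_h^{\trans}\ones = 0$ is automatic from the correction term, so the real work is in showing $\vec{\rho}_h^{\trans}\ones_k = 0$ for every $k \in K$. The key combinatorial fact is that a Boolean lattice homomorphism satisfies $h(\{i\}) \cap h(\{j\}) = h(\{i\} \cap \{j\}) = h(\emptyset)$ whenever $i \neq j$, so for every $k \in K \setminus h(\emptyset)$ there is \emph{at most one} index $i_0 \in I$ with $k \in h(\{i_0\})$. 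Splitting into the three cases $k \in h(\emptyset)$, $k \in h(\{i_0\}) \setminus h(\emptyset)$, and $k \notin \bigcup_i h(\{i\})$, the sum $\sum_S \alpha_S [k \in h(S)]$ collapses to $\vec{\alpha}^{\trans}\ones$, $\vec{\alpha}^{\trans}\ones_{i_0}$, or $0$ respectively, each of which vanishes under the tightness hypothesis, while the correction term $-(\vec{\alpha}^{\trans}\ones)\sbv_{h(\emptyset)}$ precisely accounts for the first case.
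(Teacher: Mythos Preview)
Your proof is correct and follows essentially the same approach as the paper: both index $\cset(K)$ by Boolean lattice homomorphisms $h\colon\powerset{I}\to\powerset{K}$ with $f_*(h)=f\circ h$, define the row indexed by $h$ as (essentially) $P_h\vec{\alpha}$, verify validity by pulling a $K$-tuple of subspaces back to an $I$-tuple along $h$, and establish tightness via the same three-case split on whether $k\in h(\emptyset)$, $k\in h(\{i_0\})\setminus h(\emptyset)$ for a unique $i_0$, or $k\notin h(I)$.

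Your one refinement over the paper is the quotient by $W=\sum_{k\in h(\emptyset)}V_k$ and the matching correction term $-(\vec{\alpha}^\trans\ones)\sbv_{h(\emptyset)}$: this makes the identity $(\dimvec(\{U_i\}))_S=\mathbf{d}_{h(S)}-\mathbf{d}_{h(\emptyset)}$ hold also at $S=\emptyset$, whereas the paper sets $U_i=\operatorname{span}\{V_k:k\in q(\{i\})\}$ without quotienting and its claimed equality $\sum_T\alpha_T\dim(\{U_i\}_{i\in T})=\sum_T\alpha_T\mathbf{d}_{q(T)}$ tacitly drops a term $\alpha_\emptyset\,\mathbf{d}_{q(\emptyset)}$ when $q(\emptyset)\neq\emptyset$. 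In the paper's application one has $\vec{\alpha}^\trans\ones=0$ anyway, so this is harmless there, but your version handles the general statement more carefully. One small wrinkle: your ``normalize $\alpha_\emptyset$'' step changes $\vec{\alpha}$, so strictly speaking the row you exhibit is the normalized vector rather than the original; since the $\emptyset$-coordinate never pairs against any $\dimvec$ this is immaterial, but it would be tidier to simply keep the correction term throughout and note that $\vec{\rho}_{\mathrm{id}}$ agrees with $\vec{\alpha}$ except possibly at $\emptyset$.
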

\begin{proof}
For any index set $J$, let $Q(J)$ be the set of all
Boolean lattice homomorphisms from $\powerset{I}$ to $\powerset{J}$.  If
$h:\powerset{J} \to \powerset{K}$ is another Boolean lattice homomorphism,
the mapping $h_*$ is defined by function composition,
i.e.\ $h_*(q) = h \circ q$.

To define the constraint matrix $\cm(J)$ associated
to an index set $J$, we do the following.  A row of
$\cm(J)$ is indexed by a 
Boolean lattice homomorphism
$q : \powerset{I} \to \powerset{J}$ and we define the entries of that row
by
\begin{equation} \label{eq:homo-ext}
\cm(J)_{q S} =
\sum_{\substack{T \in \powerset{I} \\ q(T) = S}} \alpha_T.
\end{equation}
This defines a homomorphic constraint schema, because
if $h : \powerset{J} \to \powerset{K}$ is any Boolean
lattice homomorphism and $R = \cm(K)^\trans Q_h, \,
R' = P_h \cm(J)^\trans$, then recalling the definitions
of $P_h,Q_h$ we find that
\begin{align*}
R_{Sq} & = (\cm(K)^\trans)_{S,h_*(q)} = \cm(K)_{h \circ q,S} =
\sum_{\substack{T \in \powerset{I} \\ h(q(T))=S}} \alpha_T \\
R'_{Sq} &= \sum_{S' : h(S')=S} (\cm(J)^{\trans})_{S',q} =
\sum_{S' : h(S')=S}
\sum_{\substack{T \in \powerset{I} \\ q(T)=S'}} \alpha_T
\end{align*}
and the right-hand sides of the two lines are clearly
equal.

To prove that $\cm(K) \dimvec \geq 0$ for every
$\dimvec \in D_\F(K)$, we reason as follows.  It
suffices to take a single row of the constraint
matrix, indexed by homomorphism $q : I \rightarrow K$,
and to prove that
$$
\sum_{S \in \powerset{K}} \cm(K)_{q S} \dimvec_S \geq 0.
$$
Using the definition of the constraint matrix entries,
this can be rewritten as
\begin{equation} \label{eq:foo}
\sum_{T \in \powerset{I}} \alpha_T \dimvec_{q(T)} \geq 0.
\end{equation}
Let $\{V_k\}_{k \in K}$ be a $K$-tuple of vector
spaces such that $\dimvec = \dimvec(\{V_k\}_{k \in K})$.
Define an $I$-tuple of vector spaces
$\{U_i\}_{i \in I}$ by setting
$U_i$ to be the span of $\{V_k\}_{k \in q(\{i\})}$.
By our hypothesis on $\vec{\alpha}$,
$$
\sum_{T \in \powerset{I}} \alpha_T \dim(\{U_i\}_{i \in T}) \geq 0.
$$
The left side is equal to the left side of~\eqref{eq:foo}.

Finally, suppose that $\vec{\alpha}^\trans \ones = \vec{\alpha}^\trans \onesi$
for all $i \in I$.  For any index set $J$,
we prove that $\cm(J) \ones = 0$ by calculating the component
of $\cm(J) \ones$ indexed by an arbitrary Boolean lattice
homomorphism $q : \powerset{I} \to \powerset{J}$.
\begin{align*}
\sum_{S \in \powerset{J}} \cm(J)_{qS} &=
\sum_{S \in \powerset{J}} \sum_{\substack{T \in \powerset{I} \\ q(T)=S}} \alpha_T =
\sum_{T \in \powerset{I}} \alpha_T = \vec{\alpha}^\trans \ones = 0
\end{align*}
The proof that $\cm(J) \ones_j = 0$ for all $j \in J$ similarly
calculates the component of $\cm(J) \ones_j$ indexed by an arbitrary $q$.
\begin{align*}
\sum_{\substack{S \in \powerset{J} \\ j \in S}} \cm(J)_{qS} &=
\sum_{\substack{S \in \powerset{J} \\ j \in S}}
\sum_{\substack{T \in \powerset{I} \\ q(T)=S}} \alpha_T =
\sum_{\substack{T \in \powerset{I} \\ j \in q(T)}} \alpha_T
\end{align*}
At this point the argument splits into three cases.
If $j \in q(\emptyset)$ then the right side is
$\vec{\alpha}^\trans \ones$, which equals 0.  If
$j \not\in q(J)$ then the right side is an empty sum
and clearly equals 0.  If $j \not\in q(\emptyset)$
but $j \in q(J)$, then there is a unique $i \in I$
such that $j \in q(\{i\})$.  Indeed, if $j$ belongs to
$q(\{i\})$ and $q(\{i'\})$, then $j$ belongs to
$q(\{i\}) \cap q(\{i'\}) = q(\{i\} \cap \{i'\})$,
implying that $\{i\} \cap \{i'\}$ is non-empty and
that $i=i'$.  The right side of the equation above
is thus equal to $\vec{\alpha}^\trans \onesi$, which equals 0.
\end{proof}


Finally, before proving Theorem~\ref{thm:separation}, it
will be useful to describe the following simple operation
for combining constraint schemas.
\begin{definition} \label{def:schema-sum}
The \emph{disjoint union} of two constraint schemas
$(Q_1,A_1)$ and $(Q_2,A_2)$ is the constraint schema
which associates to every index set $I$
the disjoint union $\cset(I) = \cset_1(I) \sqcup
\cset_2(I)$ and the constraint matrix $\cm(I)$ given by
$$
\cm(I)_{qS} = \begin{cases} \cm_1(I)_{qS} & \mbox{if $q \in \cset_1(I)$} \\
\cm_2(I)_{qS} & \mbox{if $q \in \cset_2(I)$.}
\end{cases}
$$
For a homomorphism $h : \powerset{I} \to \powerset{J}$,
the function $h_* : \cset_1(I) \sqcup \cset_2(I) \to \cset_1(J) \sqcup
\cset_2(J)$ is defined by combining
$\cset_1(I) \stackrel{h_*}{\longrightarrow} \cset_1(J)$ and
$\cset_2(I) \stackrel{h_*}{\longrightarrow} \cset_2(J)$ in the obvious way.
\end{definition}

\begin{lemma} \label{lem:schema-sum}
The disjoint union of two tight constraint schemas is tight,
and the disjoint union of two homomorphic constraint schemas
is homomorphic.
\end{lemma}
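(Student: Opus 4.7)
The plan is to verify both properties directly from the definitions by exploiting the fact that all the relevant matrices decompose block-wise along the disjoint union $\cset(I)=\cset_1(I)\sqcup\cset_2(I)$.

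First I would set up notation: for a fixed index set $I$, view $\cm(I)$ as the block matrix whose rows are obtained by stacking the rows of $\cm_1(I)$ on top of the rows of $\cm_2(I)$, with the column index set $\powerset{I}$ shared by both blocks. Then tightness is essentially immediate. For any vector $v\in\R^{\powerset I}$, the product $\cm(I)v$ is the concatenation of $\cm_1(I)v$ and $\cm_2(I)v$; choosing $v=\ones$ or $v=\onesi$ for any $i\in I$ and applying tightness of $(\cset_1,\cm_1)$ and $(\cset_2,\cm_2)$ shows that both blocks vanish, giving $\cm(I)\ones=\cm(I)\onesi=0$ as required.

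Next I would handle the homomorphic property. The key observation is that because $h_*$ is defined on $\cset(I)=\cset_1(I)\sqcup\cset_2(I)$ by restricting to $\cset_k(I)\to\cset_k(J)$ within each component (per Definition~\ref{def:schema-sum}), the matrix $Q_h$ associated with $(\cset,\cm)$ is block diagonal with diagonal blocks $Q_h^{(1)}$ and $Q_h^{(2)}$ equal to the matrices induced by $h_*$ on $\cset_1$ and $\cset_2$ respectively. Correspondingly, $\cm(J)^\trans$ is the horizontal concatenation $[\cm_1(J)^\trans\ |\ \cm_2(J)^\trans]$ along its column (constraint) index. Multiplying,
\begin{equation*}
\cm(J)^\trans Q_h \;=\; \bigl[\,\cm_1(J)^\trans Q_h^{(1)}\ \bigl|\ \cm_2(J)^\trans Q_h^{(2)}\,\bigr],
\end{equation*}
while on the other side
\begin{equation*}
P_h\,\cm(I)^\trans \;=\; \bigl[\,P_h\,\cm_1(I)^\trans\ \bigl|\ P_h\,\cm_2(I)^\trans\,\bigr].
\end{equation*}
Applying the homomorphic identity of each summand schema block-by-block yields equality.

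There is essentially no obstacle here; the only thing to be careful about is the bookkeeping ensuring that the $P_h$ appearing on both sides is the same matrix (which it must be, since $P_h$ depends only on $h:\powerset I\to\powerset J$ and not on the constraint set) and that $Q_h$ for the union schema really is block diagonal with respect to the $\sqcup$-decomposition, which follows directly from the clause in Definition~\ref{def:schema-sum} stating that $h_*$ is built by combining the two component maps in the obvious way. Once these points are made explicit, both assertions of the lemma follow by inspection.
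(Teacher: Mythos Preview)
Your proof is correct and follows essentially the same approach as the paper: both verify tightness by observing that $\cm(I)v$ is the vertical concatenation of $\cm_1(I)v$ and $\cm_2(I)v$, and verify the homomorphic property by noting that $Q_h$ is block diagonal with blocks $Q_h^{(1)},Q_h^{(2)}$ so that both sides of the identity decompose as the horizontal concatenation of the two component identities. Your explicit remark that $P_h$ depends only on $h$ (and hence is common to both summands) is a helpful clarification that the paper leaves implicit.
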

\begin{proof}
For all index sets $I$ and vectors $v \in \R^{\powerset{I}}$,
the constraint matrix of the disjoint union satisfies
$$
\cm(I) \mathbf{v} =
\begin{pmatrix} \cm_1(I) \\ \cm_2(I) \end{pmatrix} \mathbf{v} =
\begin{pmatrix} \cm_1(I) \mathbf{v} \\ \cm_2(I) \mathbf{v} \end{pmatrix}
$$
so if both constraint schemas are tight then so is their disjoint union.
If $h : \powerset{I} \to \powerset{J}$ is a Boolean lattice homomorphism
then let $Q_{1h}, Q_{2h}, Q_h$ denote the matrices representing the
induced linear transformations $\R^{\cset_1(I)} \to \R^{\cset_1(J)}$,
$\R^{\cset_2(I)} \to \R^{\cset_2(J)}$, and
$\R^{\cset(I)} \to \R^{\cset(J)}$, respectively.
If both constraint schemas are homomorphic, then
\begin{align*}
\cm(J)^\trans Q_h =
\begin{pmatrix} \cm_1(J)^\trans & \cm_2(J)^{\trans} \end{pmatrix}
\begin{pmatrix} Q_{1h} & 0 \\ 0 & Q_{2h} \end{pmatrix}
& =
\begin{pmatrix} \cm_1(J)^{\trans} Q_{1h} &
\cm_2(J)^\trans Q_{2h} \end{pmatrix}
\\
& =
\begin{pmatrix}
P_h \cm_1(J)^\trans & P_h \cm_2(J)^\trans
\end{pmatrix}
=
P_h \cm(J)^\trans,
\end{align*}
which confirms that the disjoint union is homomorphic.
\end{proof}

The proof of Theorem~\ref{thm:separation} now follows
by combining earlier results.
\begin{proof}[\textbf{\emph{Proof of Theorem~\ref{thm:separation} (Separation Theorem)}}]
The fact that $\beta(G_{\fano}) = \beta(G_{\nonfano}) = 4$ is an
immediate consequence of Theorem~\ref{thm:beta_rep_matroids}.
The submultiplicativity of $\beta$ under the lexicographic
product (Theorem~\ref{thm:beta-submult}) then implies
that $G = G_{\fano} \lexp G_{\nonfano}$ satisfies $\beta(G^{\lexp n}) \le (4\cdot 4)^n = 16^n$.
A lower bound of the form $\beta(G^{\lexp n}) \geq 16^n$ is a consequence of
Proposition~\ref{prop:matroidLP}
which implied that $b(G_\fano) = b(G_\nonfano) = 4$, from which it follows by the supermultiplicativity of $b$ under
lexicographic products (Theorem~\ref{thm:lexprod})
that $16^n \leq b(G^{\lexp n}) \leq \beta(G^{\lexp n})$.
Combining these upper and lower bounds, we find that
$\beta(G^{\lexp n}) = 16^n$.

It is worth noting, incidentally, that although each of
$G_\fano, \, G_\nonfano$ individually has a linear solution
over the appropriate field, the index code for $G = G_{\fano} \lexp
G_{\nonfano}$ implied by the
proof of Theorem~\ref{thm:beta-submult} --- which concatenates
these two linear codes together by composing them with an
arbitrary one-to-one mapping from a mod-2 vector space to a
mod-$p$ vector space ($p$ odd) --- is highly nonlinear,
and not merely a side-by-side application of two linear codes.

To establish the lower bound on $\linrate^{\F}(G^{\lexp n})$,
we distinguish two cases, $\chr(\F) = 2$ and
$\chr(\F) \ne 2$, and
in both cases we prove $\linrate^{\F}(G^{\lexp n}) \geq (16+\eps)^n$
using the LP in Figure~\ref{fig:lp} with a tight homomorphic constraint schema
supplying the constraint matrix $A$.  We
use different constraint schemas in the two cases
but the constructions are nearly identical.
Let $M$ denote the matroid $\nonfano$ if $\chr(\F)=2$, and let
$M=\fano$ if $\chr(\F) \neq 2$.  In both cases, we will let
$M+e$ denote the matroid obtained by adjoining a rank-zero element
to $M$, and we will denote
the ground sets of $M, \, M+e$ by $I,J$, respectively.
Recall the vectors $\Lambda_{\mathrm{even}}, \Lambda_{\mathrm{odd}} \in
\R^{\powerset{I}}$ from
Theorems~\ref{thm:fano_inequality} and~\ref{thm:nonfano_inequality}.
Let $\Lambda = \Lambda_{\mathrm{even}}$ if $\chr(\F)=2$,
$\Lambda = \Lambda_{\mathrm{odd}}$ if $\chr(\F) \neq 2$.
By Theorems~\ref{thm:fano_inequality} and~\ref{thm:nonfano_inequality},
$\Lambda \cdot \rankvec(M) < 0$, a fact that we will be using later.

Recall the linear
transformation $B : \R^{\powerset{J}} \to \R^{\powerset{I}}$
from Lemma~\ref{lem:tighten}, and let
$$\vec{\alpha} = B^\trans \Lambda.$$
For any $\dimvec \in D_{\F}(J)$ we have
$\vec{\alpha}^\trans \dimvec = \Lambda^\trans B \dimvec \geq 0$,
since $B \dimvec \in D_{\F}(I)$ and
$\Lambda \cdot \vec{\mathbf{v}} \geq 0$ for
all $\vec{\mathbf{v}} \in D_{\F}(I)$.
The equations $B \ones = B \ones_j = 0$ for all $j \in J$
imply that $\vec{\alpha}^\trans \ones = \vec{\alpha}^\trans \ones_j = 0$.
Applying Lemma~\ref{lem:homo} to obtain a tight homomorphic constraint schema
from $\vec{\alpha}$, and taking its disjoint union with
the submodularity constraint schema, we arrive at
a tight homomorphic constraint schema $(Q,A)$ such that every
vector $\dimvec \in D_{\F}(K)$, for every index set $K$, satisfies
the system of inequalities $A(K) \dimvec \geq 0$.

Consider the LP in Figure~\ref{fig:lp}, instantiated with constraint
schema $(Q,A)$.  We claim that its optimal solution $\bsf(G)$,
for any index coding problem $G$,
satisfies $\bsf(G) \leq \linrate^\F(G)$.
To prove this we proceed as in the proof
of Theorem~\ref{thm:lplowerbound}: consider
any linear index code over $\F$ with message alphabet $\Sigma$,
sample each message independently
and uniformly at random, and consider the input messages and the
broadcast message as random variables, with $H(S)$ denoting the joint entropy
of a subset $S$ of these random variables.  The set of random
variables is indexed by $K = V(G) \cup \{e\}$, where $V(G)$
denotes the set of messages in $G$ and $e$ is an extra
element of $K$ corresponding to the broadcast message.
Letting $M_k$, for $k \in K$, denote the matrix representing
the linear transformation defined by the $k^{\mathrm{th}}$
random variable, and letting $U_k$ denote the row space
of $M_k$, we have $H(S) = \log |\F| \cdot \dim(\{U_k\}_{k \in S})$
for every $S \subseteq K$.
For $S \subseteq V(G)$ let
$$z_S = \frac{H(S \cup \{e\})}{\log |\Sigma|}
= \left(
\frac{\log |\F|}{\log |\Sigma|}
\right)
\dim(U_e, \, \{U_i\}_{i \in S}).
$$
We aim to show that $z$ satisfies the constraints
of the LP, implying that $\bsf(G) \leq z_{\emptyset}
= \frac{\log |\Sigma_P|}{\log|\Sigma|}$
and consequently
(since the linear index code over $\F$  was arbitrary)
that $\bsf(G) \leq \linrate^\F(G)$.
The proof of Theorem~\ref{thm:lplowerbound} already
established that $z_{V(G)} = |V(G)|$ and that
$z_T - z_S \leq c_{ST}$ for all $S \subset T$,
so we need only show that $A z \geq 0$.
As in the proof of Lemma~\ref{lem:tighten},
we let $\pi$ denote the quotient map from $U = \Sigma^{V(G)}$
to $U / U_e$,  we define $V_i = \pi(U_i)$,  and we
observe that for all $S \subseteq V(G)$,
$\dim(\{V_i\}_{i \in S}) = \dim(U_e, \, \{U_i\}_{i \in S}) - \dim(U_e).$
This implies that $$z - z_{\emptyset} \ones = \left(
\frac{\log |\F|}{\log |\Sigma|} \right)
\dimvec,$$
where $\dimvec = \dimvec(\{V_i\}).$
Our construction of $A$ implies that $A \dimvec \geq 0$
and that $A \ones = 0$, hence $A z \geq 0$.

It remains to show that $\bsf(G) > 16$, from
which the claimed lower bound follows by supermultiplicativity.
Since our constraint schema includes submodularity,
we have $\bsf(G_\fano) \geq b(G_\fano) = 4$
and $\bsf(G_\nonfano) \geq b(G_\nonfano) = 4$,
so we need only show that one of these inequalities
is strict, and we accomplish this using Theorem~\ref{thm:addIneq}.
Specifically, we show that the matrix $A = \cm(I)$
has a row indexed by some $q \in \cset(I)$,
such that $\left(A\, \rankvec(M)\right)_q < 0$.
Recall that $\cset(I)$ is the set of Boolean lattice
homomorphisms from $\powerset{J}$ to $\powerset{I}$,
where $J = I \cup \{e\}$.
Let $q$ be the homomorphism that maps $\{e\}$ to $\emptyset$
and $\{i\}$ to itself for every $i \in I$.  To prove
that $\left(A\, \rankvec(M)\right)_q < 0$, we let
$r(\cdot)$ and $\hat{r}(\cdot)$ denote the rank functions
of $M, \, M+e$, respectively, and we recall the definition
of the matrix entries $a_{qS}$ from~\eqref{eq:homo-ext},
to justify the following calculation:
\begin{align*}
\left(A\, \rankvec(M)\right)_q = \sum_{S \in \powerset{I}} a_{qS} \, r(S)
=
\sum_{T \in \powerset{J}} \alpha_T \, r(q(T))
& =
\sum_{T \in \powerset{J}} \alpha_T \, \hat{r}(T) \\
& =
\vec{\alpha}^\trans \rankvec(M+e)
=
\Lambda^\trans B \rankvec(M+e)
=
\Lambda^\trans \rankvec(M)
< 0
\end{align*}
where the last two steps used Lemma~\ref{lem:tighten}(\ref{tighten:3})
and Theorem~\ref{thm:fano_inequality} or~\ref{thm:nonfano_inequality} for a field $\F$ of even or odd characteristic, respectively.
\end{proof}

\begin{bibdiv}
\begin{biblist}

\bib{Abbott}{article}{
   author={Abbott, H. L.},
   author={Williams, E. R.},
   title={Lower bounds for some Ramsey numbers},
   journal={J. Combinatorial Theory Ser. A},
   volume={16},
   date={1974},
   pages={12--17},
}

\bib{AHJKL}{article}{
   author={Adler, Micah},
   author={Harvey, Nicholas J. A.},
   author={Jain, Kamal},
   author={Kleinberg, Robert},
   author={Lehman, April Rasala},
   title={On the capacity of information networks},
   conference={
      title={Proc.\ of the 17th Annual ACM-SIAM Symposium on Discrete Algorithms (SODA 2006)},
   },
   pages={241--250},
}

\bib{ACLY}{article}{
   author={Ahlswede, Rudolf},
   author={Cai, Ning},
   author={Li, Shuo-Yen Robert},
   author={Yeung, Raymond W.},
   title={Network information flow},
   journal={IEEE Trans. Inform. Theory},
   volume={46},
   date={2000},
   pages={1204--1216},
}

\bib{AHLSW}{inproceedings}{
   author={Alon, Noga},
   author={Hassidim, Avinatan},
   author={Lubetzky, Eyal},
   author={Stav, Uri},
   author={Weinstein, Amit},
   title={Broadcasting with side information},
   conference={
      title={Proc.\ of the 49th Annual IEEE Symposium on Foundations of Computer Science (FOCS 2008)},
   },
   pages={823--832},
}

\bib{AKa}{article}{
   author={Alon, Noga},
   author={Kahale, Nabil},
   title={Approximating the independence number via the $\theta$-function},
   journal={Math. Programming},
   volume={80},
   date={1998},
   number={3, Ser. A},
   pages={253--264},
}

\bib{AK}{article}{
   author={Alon, Noga},
   author={Krivelevich, Michael},
   title={Constructive bounds for a Ramsey-type problem},
   journal={Graphs Combin.},
   volume={13},
   date={1997},
   number={3},
   pages={217--225},
}

\bib{BBJK}{article}{
   author={Bar-Yossef, Z.},
   author={Birk, Y.},
   author={Jayram, T.S.},
   author={Kol, T.},
   title={Index coding with side information},
   booktitle={Proc.\ of the 47th Annual IEEE Symposium on Foundations of Computer Science (FOCS 2006)},
   pages={197--206},
}

\bib{BK}{article}{
   author={Birk, Y.},
   author={Kol, T.},
   title={Coding-on-demand by an informed source (ISCOD) for efficient broadcast of different supplemental data to caching clients},
   journal={IEEE Trans. Inform. Theory},
   volume={52},
   date={2006},
   pages={2825--2830},
   note={An earlier version appeared in INFOCOM 1998},
}

\bib{BH}{article}{
   author={Boppana, Ravi},
   author={Halld{\'o}rsson, Magn{\'u}s M.},
   title={Approximating maximum independent sets by excluding subgraphs},
   journal={BIT},
   volume={32},
   date={1992},
   number={2},
   pages={180--196},
}


\bib{CS}{article}{
  title={Efficient algorithms for index coding},
  author={Chaudhry, M.A.R.},
  author={Sprintson, A.},
  conference={
    title={IEEE Conference on Computer Communications Workshops (INFOCOM 2008)},
  },
  pages={1--4},
  date={2008},
}

\bib{DFZ1}{article}{
   author={Dougherty, Randall},
   author={Freiling, Christopher},
   author={Zeger, Kenneth},
   title={Insufficiency of linear coding in network information flow},
   journal={IEEE Trans. Inform. Theory},
   volume={51},
   date={2005},
   pages={2745--2759},
}

\bib{DFZ2}{article}{
   author={Dougherty, Randall},
   author={Freiling, Chris},
   author={Zeger, Kenneth},
   title={Networks, matroids, and non-Shannon information inequalities},
   journal={IEEE Trans. Inform. Theory},
   volume={53},
   date={2007},
   number={6},
   pages={1949--1969},
}


\bib{ER}{article}{
   author={Erd{\H{o}}s, P.},
   author={R{\'e}nyi, A.},
   title={On a problem in the theory of graphs},
   language={Hungarian, with Russian and English summaries},
   journal={Magyar Tud. Akad. Mat. Kutat\'o Int. K\"ozl.},
   volume={7},
   date={1962},
   pages={623--641 (1963)},
}

\bib{Feige}{article}{
   author={Feige, Uriel},
   title={Randomized graph products, chromatic numbers, and the Lov\'asz $\vartheta$-function},
   journal={Combinatorica},
   volume={17},
   date={1997},
   number={1},
   pages={79--90},
   note={An earlier version appeared in Proc.\ of the 27th Annual ACM Symposium on Theory of computing (STOC 1995), pp. 635--640.},
}

\bib{GR}{book}{
   author={Godsil, Chris},
   author={Royle, Gordon},
   title={Algebraic graph theory},
   series={Graduate Texts in Mathematics},
   volume={207},
   publisher={Springer-Verlag},
   place={New York},
   date={2001},
   pages={xx+439},
}

\bib{HKL}{article}{
   author={Harvey, Nicholas J. A.},
   author={Kleinberg, Robert},
   author={Lehman, April Rasala},
   title={On the capacity of information networks},
   journal={IEEE Trans. Inform. Theory},
   volume={52},
   date={2006},
   number={6},
   pages={2345--2364},
}

\bib{HKNW}{article}{
    author = {Harvey, Nicholas J. A.},
    author={Kleinberg, Robert},
    author={Nair, Chandra},
    author={Wu, Yunnan},
    title = {A ``Chicken \& Egg{"} Network Coding Problem},
    conference={
      title={IEEE International Symposium on Information Theory (ISIT 2007)},
    },
    pages = {131--135},
}

\bib{KRHKMC}{article}{
  title={XORs in the air: practical wireless network coding},
  author={Katti, S.},
  author={Rahul, H.},
  author={Hu, W.},
  author={Katabi, D.},
  author={M{\'e}dard, M.},
  author={Crowcroft, J.},
  journal={IEEE/ACM Trans. on Networking},
  volume={16},
  pages={497--510},
  year={2008},
  note={An earlier version appeared in SIGCOMM 2006.}
}

\bib{LaSp}{article}{
  title={On the hardness of approximating the network coding capacity},
  author={Langberg, M.},
  author={Sprintson, A.},
  conference={
    title={IEEE International Symposium on Information Theory (ISIT 2008)},
  },
  pages={315--319},
}


\bib{LV}{article}{
    author = {Linial, Nathan},
    author = {Vazirani, Umesh},
    title = {Graph products and chromatic numbers},
    conference={
      title={Proc.\ of the 30th Annual IEEE Symposium on Foundations of Computer Science (FOCS 1989)},
    },
    pages = {124-128},
}

\bib{Lovasz}{article}{
   author={Lov{\'a}sz, L.},
   title={Kneser's conjecture, chromatic number, and homotopy},
   journal={J. Combin. Theory Ser. A},
   volume={25},
   date={1978},
   pages={319--324},
}

\bib{LuSt}{article}{
   author={Lubetzky, Eyal},
   author={Stav, Uri},
   title={Non-linear index coding outperforming the linear optimum},
   journal={IEEE Trans. Inform. Theory},
   volume={55},
   date={2009},
   pages={3544--3551},
   note={An earlier version appeared in Proc.\ of the 48th Annual IEEE Symposium on Foundations of Computer Science (FOCS 2007), pp. 161--167.},
}

\bib{MW}{article}{
   author={Mubayi, Dhruv},
   author={Williford, Jason},
   title={On the independence number of the Erd\H os-R\'enyi and projective
   norm graphs and a related hypergraph},
   journal={J. Graph Theory},
   volume={56},
   date={2007},
   number={2},
   pages={113--127},
}

\bib{RSG2}{inproceedings}{
 author={El Rouayheb, S.},
    author={Sprintson, A.},
    author={Georghiades, C.},
 title = {A new construction method for networks from matroids},
 booktitle = {IEEE international conference on Symposium on Information Theory (ISIT 2009)},
 isbn = {978-1-4244-4312-3},
 location = {Coex, Seoul, Korea},
 pages = {2872--2876},
 numpages = {5},
 url = {http://portal.acm.org/citation.cfm?id=1700967.1701111},
 acmid = {1701111},
 publisher = {IEEE Press},
 address = {Piscataway, NJ, USA},
}

\bib{RSG}{article}{
    author={El Rouayheb, S.},
    author={Sprintson, A.},
    author={Georghiades, C.},
    title={On the relation between the Index Coding and the Network Coding problems},
    conference={
        title={IEEE International Symposium on Information Theory (ISIT 2008)},
    },
    pages={1823 -1827},
}

\bib{SYC}{article}{
   author={Song, Lihua},
   author={Yeung, Raymond W.},
   author={Cai, Ning},
   title={Zero-error network coding for acyclic networks},
   journal={IEEE Trans. Inform. Theory},
   volume={49},
   date={2003},
   number={12},
   pages={3129--3139},
}

\bib{Wigderson}{article}{
   author={Wigderson, Avi},
   title={Improving the performance guarantee for approximate graph
   coloring},
   journal={J. Assoc. Comput. Mach.},
   volume={30},
   date={1983},
   number={4},
   pages={729--735},
}

\bib{Yeung}{book}{
   author={Yeung, Raymond W.},
   title={A first course in information theory},
   series={Information Technology: Transmission, Processing and Storage},
   publisher={Kluwer Academic/Plenum Publishers, New York},
   date={2002},
   pages={xxiv+412},
}

\bib{YLC}{book}{
  author={Yeung, Raymond W.},
  author={Li, Shuo-Yen Robert},
  author={Cai, Ning},
  title={Network coding theory},
  date={2006},
  publisher={Now Publishers Inc},
}


\bib{YZ}{article}{
  author    = {Zhang, Zhen},
  author    = {Yeung, Raymond W.},
  title     = {On Characterization of Entropy Function via Information
               Inequalities},
  journal   = {IEEE Transactions on Information Theory},
  volume    = {44},
  number    = {4},
  date      = {1998},
  pages     = {1440-1452},
}

\end{biblist}
\end{bibdiv}

\end{document}